\theoremstyle{plain}
\newtheorem{theorem}{Theorem}[section]
\newtheorem{lemma}[theorem]{Lemma}
\newtheorem{claim}[theorem]{Claim}
\newtheorem*{claim*}{Claim}
\newtheorem{fact}[theorem]{Fact}
\theoremstyle{definition}
\newtheorem{definition}[theorem]{Definition}
\newtheorem{property}[theorem]{Property}
\theoremstyle{remark}
\newcommand{\ProblemName}[1]{\textsc{#1}}
\newcommand{\kCenter}{$k$-\ProblemName{Center}\xspace}
\newcommand{\kMedian}{$k$-\ProblemName{Median}\xspace}
\newcommand{\kMeans}{$k$-\ProblemName{Means}\xspace}
\def\RR{{\mathbb{R}}}
\def\ZZ{{\mathbb{Z}}}
\DeclareMathOperator{\poly}{poly}
\DeclareMathOperator{\cost}{cost}
\DeclareMathOperator{\OPT}{OPT}
\DeclareMathOperator{\APX}{APX}
\DeclareMathOperator{\dist}{dist}
\DeclareMathOperator{\diam}{diam}
\DeclareMathOperator{\MIS}{MIS}
\DeclareMathOperator{\MDS}{MDS}
\DeclareMathOperator*{\hash}{\varphi}
\DeclareMathOperator*{\rep}{rep}
\DeclareMathOperator*{\argmax}{argmax}
\newcommand{\IS}{\ensuremath{I}\xspace}
\newcommand{\Econd}{\ensuremath{\mathcal{E}_{\mathrm{cond}}}\xspace}
\newcommand{\Eextend}{\ensuremath{\mathcal{E}_{\mathrm{ext}}}\xspace}
\newcommand{\Acond}{\ensuremath{A_{\mathrm{cond}}}\xspace}
\newcommand{\Anew}{\ensuremath{A_{\mathrm{new}}}\xspace}
\newcommand{\conf}{\ensuremath{\mathrm{conf}}\xspace}
\newcommand{\Spi}[1]{\ensuremath{\mathcal{S}_{\pi_{#1}}}\xspace}
\newcommand{\Iseq}{\ensuremath{I^{\mathrm{seq}}}\xspace}
\newcommand{\Iloc}{\ensuremath{I^{\mathrm{loc}}}\xspace}
\newcommand{\proj}{\ensuremath{\mathrm{proj}}\xspace}
\renewcommand{\epsilon}{\ensuremath{\varepsilon}}
\let\epsilon\varepsilon
\title{Fully Scalable MPC Algorithms for Euclidean $k$-Center}
\author{
  Artur Czumaj\\
  University of Warwick\\
  \texttt{A.Czumaj@warwick.ac.uk}
  \and 
  Guichen Gao\\
  Peking University\\
  \texttt{gc.gao@stu.pku.edu.cn}
  \and 
  Mohsen Ghaffari\\
  MIT\\
  \texttt{ghaffari@mit.edu}
  \and 
  Shaofeng H.-C. Jiang\\
  Peking University\\
  \texttt{shaofeng.jiang@pku.edu.cn}
 }
\begin{document}

\phantomsection\addcontentsline{toc}{section}{Abstract}

\maketitle
\begin{abstract}
The \emph{$k$-center} problem is a fundamental optimization problem with numerous applications in machine learning, data analysis, data mining, and communication networks. The $k$-center problem has been extensively studied in the classical sequential setting for several decades, and more recently there have been some efforts in understanding the problem in parallel computing, on the Massively Parallel Computation (MPC) model. For now, we have a good understanding of $k$-center in the case where each local MPC machine has sufficient local memory to store some representatives from each cluster, that is, when one has $\Omega(k)$
local memory per machine. While this setting covers the case of small values of $k$, for a large number of clusters these algorithms require undesirably large local memory, making them poorly scalable. The case of large $k$ has been considered only recently for the \emph{fully scalable} low-local-memory MPC model for the Euclidean instances of the $k$-center problem. However, the earlier works have been considering only the constant dimensional Euclidean space, required a super-constant number of rounds, and produced only $k(1+o(1))$ centers whose cost is a super-constant approximation of $k$-center.

In this work, we significantly improve upon the earlier results for the $k$-center problem for the fully scalable low-local-memory MPC model. In the low dimensional Euclidean case in $\mathbb{R}^d$, we present the first constant-round fully scalable MPC algorithm for $(2+\varepsilon)$-approximation.
We push the ratio further to $(1 + \epsilon)$-approximation albeit using slightly more $(1 + \varepsilon)k$ centers.
All these results naturally extends to slightly super-constant values of $d$.
In the high-dimensional regime, we provide the first fully scalable MPC algorithm that in a constant number of rounds achieves an $O(\log n/ \log \log n)$-approximation for $k$-center.

\end{abstract}
 
\section{Introduction}
\label{sec:intro}

\noindent\textbf{Clustering and the \kCenter problem.} Clustering is a fundamental task in data analysis and machine learning.
We consider a well-known clustering problem, called \kCenter, in Euclidean spaces.
In this problem, given an integer parameter $k \geq 1$ and a dataset $P \subset \mathbb{R}^d$,
the goal is to find a \emph{center set} $C \subset \mathbb{R}^d$ of $k$ points,
such that the following clustering objective is minimized
\begin{equation}
    \label{eqn:kcenter_def}
    \cost(P, C) := \max_{p \in P} \dist(p, C).
\end{equation}
Here, $\dist(x, y) := \|x - y\|_2$ for any two points $x, y \in \mathbb{R}^d$, and $\dist(p, C) := \min_{c \in C} \dist(p, c)$ for any point $p\in \mathbb{R}^d$ and any set of points $C \subset \mathbb{R}^d$.

Solving \kCenter on massive data sets introduces outstanding scalability issues.
To meet this scalability challenge, practical approaches usually use several interconnected computers to solve the problem, i.e., they resort to distributed computing. Accordingly, there has been significant recent interest in scalable algorithms with provable guarantees for \kCenter~\cite{EneIM11,ImM15,MalkomesKCWM15,CeccarelloPP19,BEFM21,CCM23,HZ23,AG23,BBM23,BP24,LFWXW24}, primarily in the Massively Parallel Computing (MPC) model, which has nowadays become the de-facto standard theoretical model for such large-scale distributed computation settings (see, e.g., \cite{GSZ11,BKS17,IKLMV23}).

\paragraph{Massively Parallel Computation (MPC) model.}
The MPC model, introduced in~\cite{KarloffSV10}, provides a theoretical abstraction for widely-used practical frameworks such as MapReduce~\cite{DBLP:journals/cacm/DeanG08},
Hadoop~\cite{white2012hadoop}, Spark~\cite{DBLP:conf/hotcloud/ZahariaCFSS10}, and Dryad~\cite{DBLP:conf/eurosys/IsardBYBF07}.
In this model, we are given a set of machines, each with some given memory of size $s$ (also known as \emph{local memory}). At the beginning of computation, the input (which in our case is a set of $n$ data points from $\mathbb{R}^d$) is arbitrarily distributed among these machines, with the constraint that it must fit within each machine's local memory. (Hence we will require that the number of machines is $\Omega(n/s)$, for otherwise the input would not fit the system.) The MPC computation proceeds in \emph{synchronous rounds}.
In each round, first, each machine processes its local data and performs an arbitrary computation on its data without communicating with other machines. Then, at the end of each round, machines can communicate by exchanging messages, subject to the constraint that for every machine, the total size of the messages it sends or receives is $O(s)$.
When the algorithm terminates, MPC machines collectively output the solution.
The goal is to finish the computational task using as small as possible number of rounds.

\paragraph{Local memory regimes and full scalability.}
The central parameter determining the computational model is the size of the local memory $s$. Unlike the input size $n$, local memory is defined by the hardware provided and as such, one would like the relation between $s$ and $n$ to be as flexible as possible. Therefore an ideal MPC algorithm should be \emph{fully scalable}, meaning that it should work with $s = n^{\sigma}$ for any constant $\sigma \in (0,1)$. The importance of designing fully scalable MPC algorithms has been recently observed (cf. \cite{BhaskaraW18}) for clustering problems like \kCenter (and also $k$-means and $k$-median), where the prior research (see below) demonstrates that the problem's difficulty changes radically depending on whether $s = \Omega(k)$ or not, i.e., whether one machine can hold the entire set of proposed centers or not.
It is furthermore desirable for the algorithm to use a near-linear total memory.

\paragraph{Prior work with high local memory requirements.}
In the non-fully scalable regime, when $s = \Omega(k)$, a classical technique of \textit{coresets} \cite{Har-Peled04,Har-PeledM04} can be applied
to reduce the $n$ input points into a $(1 + \epsilon)$-approximate proxy with only $O(k)$ points (ignoring $f(d) \cdot \poly(\log n)$ factors).
For \kCenter, provided that $s = \Omega(k n^\gamma)$ for some constant $\gamma \in (0, 1)$ \cite{BBM23},
this small proxy can be computed in $O(1)$ MPC rounds
and moved to a single machine. The clustering problem (in fact, its approximate version because of the approximation caused by the use of coresets) can then be solved locally without further communication in a single MPC round.
However, the coreset approach is not applicable when $s = o(k)$, because coresets suffer a trivial size lower bound of $\Omega(k)$ making the approach sketched above unsuitable.
Hence, new techniques must be developed for \emph{fully scalable algorithms when local memory is sublinear in $k$}.

Several other works for \kCenter, although not using a coreset directly, also follow a similar paradigm of finding a sketch of $\poly(k)$ points in each machine~\cite{EneIM11,MalkomesKCWM15,HZ23,AG23}, and therefore they still require $s = \Omega(k)$.
We remark that these results work under general metrics with distance oracle,
which is a setting very different from our Euclidean setting.
This fundamental difference translates to different flavor of studies: in general metrics not much more can be done than using the triangle inequality, whereas in $\mathbb{R}^d$ we need to investigate what Euclidean properties are useful for fully scalable algorithms.

\paragraph{Toward fully scalable MPC algorithms.}
To combat these technical challenges, several recent works have designed fully scalable MPC algorithms for \kCenter and for related clustering problems, including \kMedian and \kMeans~\cite{BhaskaraW18,BEFM21,Cohen-AddadLNSS21,Cohen-AddadMZ22,CCM23,CGJKV24}. These MPC algorithms are fully scalable in the sense that they can work with $s = n^{\sigma}$ for any constant $\sigma\in (0,1)$. Indeed, they usually work with any $s \geq f(d) \poly\log n$ regardless of $k$ (albeit many of these results output more than $k$ centers, only achieving a bi-criteria approximation).
Therefore, in particular, despite the inspiring recent progress, fully scalable MPC algorithms for \kCenter are poorly understood.
The state-of-the-art algorithm (for geometric \kCenter and only for $d=O(1)$) is by Coy, Czumaj, and Mishra~\cite{CCM23}:
it achieves a \emph{super-constant} approximation ratio of $O(\log^* n)$ (improving over an $O(\log\log\log n)$-rounds bound from an earlier work of Batteni et al.\ \cite{BEFM21}),
using $k + o(k)$ centers; this violates the constraint of using at most $k$ centers and works in a \emph{super-constant} $O(\log\log n)$ number of rounds.
These bounds, which are stated for the standard regime of $s = n^\sigma$ with a constant $\sigma\in (0,1)$, show a drastic gap to the above-mentioned bounds achieved in the fully scalable $s = \Omega(k)$ regime.

\paragraph{Challenges of high dimensionality.}
Apart from the fully-scalability,
the high dimensionality of Euclidean spaces is another challenge in designing MPC algorithms.
Indeed, there has been emerging works that address high dimensionality in MPC
\cite{BhaskaraW18,Cohen-AddadLNSS21,Cohen-AddadMZ22,EpastoMMZ22,CGJKV24,JayaramMNZ24,AzarmehrBJLMZ25}.
Unfortunately, all previous fully-scalable MPC algorithms for \kCenter only work for low-dimensional regime since it requires $\exp(d)$ dependence in $d$ in local space,
and fully-scalable MPC algorithms suitable for high dimension, especially those with $\poly(d)$ dependence, constitutes an open area of research.
Overall, there has been a large gap in fully-scalable algorithms for \kCenter
under both low- and high-dimensional regime.

\subsection{Our Results}
\label{sec:result}
We give new fully scalable MPC algorithms for Euclidean \kCenter
and we systematically address both the low-dimensional and high-dimensional regimes.
Our results in low dimension significantly improve previous results simultaneously in various aspects (i.e., approximation ratio, round complexity, etc.).
We also obtain the first results for high dimension, and our approximation ratio bypasses several natural barriers.

\paragraph{Low-dimensional regime.}
In low dimension, we provide the first fully scalable MPC algorithm that achieves a constant approximation to \kCenter, running in a constant number of rounds (\Cref{thm:low_dim_2approx}).
This result significantly improves the previous fully scalable algorithms for \kCenter~\cite{BEFM21,CCM23} in several major aspects: by achieving constant round complexity, better approximation factor, and true approximation (without bi-criteria considerations that allow slightly more than $k$ centers).

\begin{restatable}{theorem}{thmmain}
    \label{thm:low_dim_2approx}
There exists an MPC algorithm that given $\epsilon\in (0,1)$, $k \ge 1$, and a dataset $P \subset \mathbb{R}^d$ of $n$ points  distributed across MPC machines with local memory $s \geq (\Omega(d \epsilon^{-1}))^{\Omega(d)} \poly\log n$,
with probability at least $1-1/n$ computes a $(2 + \varepsilon)$-approximate solution to \kCenter,
using $O(\log_s n)$ rounds and total memory $O(n \cdot \poly\log n \cdot (O(d \epsilon^{-1}))^{O(d)})$.
\end{restatable}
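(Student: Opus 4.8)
I would run the classical radius-guessing method of Hochbaum--Shmoys on top of a geometric discretization that makes the relevant conflict graph locally simple, and then compute a maximal independent set of that graph in a constant number of rounds by exploiting the low-dimensional geometry. Fix a guessed radius $r$, taken from the geometric scale $\{(1+\epsilon)^i\}_i$. Impose a randomly shifted axis-parallel grid with side length $\Theta(\epsilon r/\sqrt d)$, so every cell has diameter at most $\epsilon r$, and keep one representative point of $P$ in each non-empty cell; call the set of representatives $P'\subseteq P$. Snapping each point to its cell representative perturbs all pairwise distances by at most $\epsilon r$, so it suffices to work with $P'$. Let $H_r$ be the conflict graph on $P'$ joining two representatives at distance at most $2r$, and let $M$ be a maximal independent set of $H_r$. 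Then every $p\in P$ lies within $(2+\epsilon)r$ of $M\subseteq P$, while the points of $M$ are pairwise more than $2r$ apart; hence if $|M|>k$, two points of $M$ would have to lie in a common optimal cluster (of diameter at most $2\OPT$) whenever $\OPT\le r$, so $|M|>k$ forces $\OPT>r$. Choosing $r^{\dagger}$ as the smallest scale with $|M_{r^{\dagger}}|\le k$ and outputting $M_{r^{\dagger}}$ (padded to exactly $k$ points) then gives cost at most $(2+\epsilon)r^{\dagger}\le (2+\epsilon)(1+\epsilon)\OPT$, i.e.\ a $(2+\epsilon)$-approximation after rescaling $\epsilon$.

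The benefit of the discretization is that $H_r$ has bounded degree: the representatives within distance $2r$ of a fixed one occupy only $(O(\sqrt d/\epsilon))^{d}=(O(d\epsilon^{-1}))^{O(d)}$ grid cells, each holding at most one representative, so the maximum degree of $H_r$ is at most $D:=(O(d\epsilon^{-1}))^{O(d)}$, comfortably within the local-memory budget. Accordingly $H_r$ can be built in $O(1)$ rounds by hashing representatives by cell identifier and having each cell gather the representatives of the $O(D)$ nearby cells. The number of non-empty cells may still be $\Theta(n)$, so $H_r$ itself need not be small; the only structure I will use is that it is a geometric intersection graph at a single scale in $\mathbb{R}^d$, for which every maximal independent set is ``locally sparse'' --- any ball of radius $O(r)$ contains at most $(O(1))^{d}$ of its points.

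The technical heart --- and the main obstacle --- is to compute a maximal independent set of $H_r$ in only $O(\log_s n)$ rounds. A generic low-memory MPC MIS will not do, since even on a path $\Omega(\log n)$ rounds are believed necessary, so the Euclidean geometry must be exploited. The plan is to build the net by a recursive spatial decomposition with branching parameter $b=s^{\Theta(1/d)}$, so that the recursion has depth $O(\log_s n)$ over the (polynomially bounded) aspect ratio; inside each region I would use the standard $3^d$ grid-coloring to process sub-regions in parallel, and whenever the net points relevant to a region --- those still needed to cover uncovered representatives, together with those near the region's boundary that are already committed --- fit on one machine, resolve that region in $O(\log_s n)$ rounds by computing an $\epsilon$-coreset of its (possibly $\Theta(n)$) representatives and solving locally. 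The obstacle is to carry this out so that (i) the locally computed pieces provably glue into a single maximal independent set of $H_r$, with no surviving cross-region conflict and no uncovered representative, and (ii) the $O(\log_s n)$-round and near-linear-total-memory bounds hold throughout, which is delicate precisely because the boundary interface between large regions is itself large and must be reconciled distributedly, never gathered on one machine.

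Finally, I would run all of the above for the relevant scales $r$ in parallel. A standard $O(\log_s n)$-round hierarchical-grid preprocessing yields an $O(1)$-factor (or at worst $\poly(n)$-factor) estimate of $\OPT$, leaving only $O(\log n)$ candidate scales --- alternatively one invokes the known reduction to polynomially bounded aspect ratio --- so the parallel copies inflate the total memory by only a $\poly\log n$ factor, giving total memory $O(n\cdot\poly\log n\cdot (O(d\epsilon^{-1}))^{O(d)})$ and per-machine memory $O(D\cdot\poly\log n)\le s$. For each scale $|M_r|$ is computed by a routine $O(\log_s n)$-round aggregation; I then select the threshold scale $r^{\dagger}$ and output $M_{r^{\dagger}}$. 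The random grid shifts, the preprocessing estimate, and the randomized subroutines each fail with small probability, so after $O(\log n)$-fold amplification (absorbed into the $\poly\log n$ factor) a union bound gives overall success probability at least $1-1/n$; the round complexity is $O(\log_s n)$, dominated by the net computation and the aggregations.
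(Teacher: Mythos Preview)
Your high-level framework---guess $\OPT$ over a geometric scale, round to an $\epsilon r/\sqrt d$-grid, compute an MIS of the resulting conflict graph at each scale, and return the smallest scale with at most $k$ independent points---matches the paper exactly (\Cref{alg:kcenter} together with \Cref{thm:MIS}). The discretization and the observation that the conflict graph has degree at most $(O(d\epsilon^{-1}))^{O(d)}$ are also right.

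The genuine gap is the MIS computation itself. You propose a recursive spatial decomposition of depth $O(\log_s n)$ with $3^d$ coloring, and you correctly flag that gluing the locally computed pieces across region boundaries is the obstacle---but you do not resolve it. A recursion over the aspect ratio produces, at every level, regions whose boundary interface contains unboundedly many conflict-graph vertices; there is no invariant that lets you commit MIS points in one region without later forcing a conflict or a coverage gap across the interface, and the ``$\epsilon$-coreset'' you invoke for MIS is not a standard object (any faithful summary would have to have size comparable to the MIS, which can be $\Theta(n)$).

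The paper avoids recursion entirely. After rounding (\Cref{lemma:rounding}), it uses a geometric hash (\Cref{lemma:hash}) that partitions $\mathbb{R}^d$ into only $d{+}1$ groups of buckets, each bucket of diameter $O(d^{1.5}\tau)$ and each group internally $\tau$-separated. A sequential algorithm processes the groups one after another, computing an MIS inside each bucket after deleting points already dominated by earlier groups (\Cref{alg:seq_MIS}). The decisive observation (\Cref{lemma:couple-seq-loc}, \Cref{lemma:relevant_buckets}) is that this sequential process is \emph{local}: the MIS inside a bucket of group $i$ depends only on buckets in groups $0,\dots,i{-}1$ within distance $O(d^{2.5}\tau)$, and by the rounding that neighborhood contains at most $(\epsilon^{-1}d)^{O(d)}$ data points. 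Hence every bucket gathers its entire cone of influence onto one machine and simulates all $d{+}1$ iterations locally (\Cref{alg:loc_MIS}), in $O(\log_s n)$ rounds total---no recursion, no boundary reconciliation. Your $3^d$-coloring intuition points in the right direction, but the step you are missing is this locality argument on a one-shot decomposition with only $\poly(d)$ color classes and $\poly(d)\cdot\tau$-diameter buckets, which keeps the influence region small enough to fit in local memory.
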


Furthermore, we can improve the $(2+\epsilon)$ approximation bound if we allow bi-criteria approximations: we design a $(1 + \epsilon)$-approximation \kCenter algorithm that uses $(1 + \epsilon)k$ centers (\Cref{thm:low_dim_bicrit}).
This bi-criteria approximation is almost optimal, and is the first of its kind in the literature.

\begin{theorem}
    \label{thm:low_dim_bicrit}
There exists an MPC algorithm that given $\epsilon\in (0,1)$, $k \ge 1$, and a dataset $P \subset \mathbb{R}^d$ of $n$ points  distributed across MPC machines with local memory $s \geq (\Omega(d \epsilon^{-1}))^{\Omega(d)} \poly\log n$,
with probability at least $1-1/n$ computes
a $(1 + \epsilon, 1 + \epsilon)$-approximate solution\footnote{An \emph{$(\alpha, \beta)$-approximate} solution to \kCenter (also called a \emph{bi-criteria} solution) is a center set $C \subset \mathbb{R}^d$ that has at most $\beta k$ centers and has cost at most $\alpha$ times the optimal cost of using at most $k$ centers.}
to \kCenter, using $O(\log_s n)$ rounds and total memory $O(n \cdot \poly\log n \cdot (O(d \epsilon^{-1}))^{O(d)})$.
\end{theorem}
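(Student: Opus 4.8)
The plan is to reduce to the following subtask and solve it in $O(\log_s n)$ rounds: given a guessed radius $r$ with $\OPT\le r\le(1+\epsilon)\OPT$, output at most $(1+\epsilon)k$ centers of \kCenter cost at most $(1+O(\epsilon))r$. To obtain such an $r$ I would first invoke \Cref{thm:low_dim_2approx} to get a value $v$ with $\OPT\le v\le(2+\epsilon)\OPT$, and then run the subtask in parallel for every $r=v/(1+\epsilon)^j$, $0\le j=O(\epsilon^{-1})$, one of which lies in $[\OPT,(1+\epsilon)\OPT]$. Running these $O(\epsilon^{-1})$ guesses, together with $O(\log n)$ independent repetitions of the randomized subtask for probability amplification, in parallel multiplies the total memory by only $\poly(\epsilon^{-1},\log n)$ (absorbed into the stated bound) and leaves the round complexity unchanged; among all runs that return at most $(1+\epsilon)k$ centers I would output the one with the smallest guessed $r$. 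Since the run with the correct guess and a lucky repetition succeeds with high probability, this output has at most $(1+\epsilon)k$ centers and cost $(1+O(\epsilon))\OPT$. It remains to describe the subtask.

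For a fixed $r\ge\OPT$ I would overlay two nested axis-parallel grids: a \emph{coarse} grid with a uniformly random offset and side length $L=\Theta(dr/\epsilon)$, and a \emph{fine} grid refining it with side length $\Theta(\epsilon r/\sqrt d)$. Then every fine cell has diameter at most $\epsilon r$, and every coarse cell has diameter $O(d^{1.5}r/\epsilon)$ and hence contains at most $m:=(O(d\epsilon^{-1}))^{O(d)}\le s$ fine cells. Snapping each input point to a fixed representative of its fine cell perturbs it by at most $\epsilon r$, and afterwards the snapped points inside any coarse cell take at most $m$ distinct values, which I can gather onto a single machine after a group-by-and-dedup across fine cells; this, the broadcast of the random offset, and the cell assignments are all standard MPC primitives running in $O(\log_s n)$ rounds.

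The crux is a probabilistic argument showing the bi-criteria slack pays for the optimal clusters the coarse grid cuts. Let $C^\ast_1,\dots,C^\ast_k$ be the optimal clusters (padded to exactly $k$). A nonempty $C^\ast_t$ has diameter at most $2\OPT\le2r<L$, so its bounding box crosses a coarse hyperplane in coordinate $i$ with probability at most $2r/L$, independently over the $d$ coordinates; thus the expected number of coarse cells it meets is at most $(1+2r/L)^d\le e^{2rd/L}\le1+O(\epsilon)$. If $\kappa_i$ denotes the number of optimal clusters meeting coarse cell $i$, then $\sum_i\kappa_i$ equals the number of cluster/cell incidences, with $\E[\sum_i\kappa_i]\le(1+O(\epsilon))k'$ for $k'\le k$ the number of nonempty clusters; since $\sum_i\kappa_i-k'\ge0$ and $\E[\sum_i\kappa_i-k']\le O(\epsilon)k$, choosing the constant in $L$ large enough makes $\sum_i\kappa_i\le(1+\epsilon)k$ hold with probability $\ge 99/100$ (amplified to $1-1/n$ by the $O(\log n)$ repetitions). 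Moreover, the snapped part of $C^\ast_t$ inside coarse cell $i$ lies within $(1+\epsilon)r$ of $c^\ast_t$, so that cell's snapped point set is covered by $\kappa_i$ balls of radius $(1+\epsilon)r$.

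To finish, on each coarse cell's machine I would compute, for every budget $j=1,\dots,m$, a $(1+\epsilon)$-approximate optimal $j$-center solution of that cell's snapped point set — a purely local computation on $\le m$ points, e.g.\ by brute force over a fine net of candidate centers inside the cell — and set $k_i$ to the smallest $j$ achieving cost $\le t:=(1+\epsilon)^2 r$. By the previous paragraph $k_i\le\kappa_i$, so $\sum_i k_i\le\sum_i\kappa_i\le(1+\epsilon)k$ with high probability; I check this by an MPC aggregation and, if it holds, output the union of the corresponding witness center sets. Every snapped point is then within $t$ of this set, hence every original point is within $t+\epsilon r=(1+O(\epsilon))r\le(1+O(\epsilon))\OPT$ of it, and rescaling $\epsilon$ gives the $(1+\epsilon,1+\epsilon)$ guarantee; all remaining steps are standard MPC primitives in $O(\log_s n)$ rounds with total memory $n\cdot\poly\log n\cdot(O(d\epsilon^{-1}))^{O(d)}$. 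I expect the main obstacle to be exactly this probabilistic step: a single random shift cannot leave all (up to $k$) optimal clusters uncut, so rather than protecting every cluster one must bound the \emph{total} cluster/cell incidence count and show its overshoot is only $O(\epsilon k)$ in expectation — precisely what the extra $\epsilon k$ centers can afford — while still making the snap–dedup–local-solve pipeline fit in $O(\log_s n)$ rounds and near-linear total memory.
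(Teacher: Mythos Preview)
Your proposal is correct and takes a genuinely different route from the paper. The paper reduces \Cref{thm:low_dim_bicrit} to an approximate-MDS lemma (\Cref{thm:mds}) whose proof partitions $\mathbb{R}^d$ using the consistent hashing of \Cref{lemma:hash}, computes an MDS locally in each bucket, and then \emph{deterministically} enumerates a shift set $\mathcal{V}^d$ of size $(d/\epsilon)^{O(d)}$ to find one shift under which the total over-count is small. The reason the paper needs the elaborate hashing is inequality~\eqref{eqn:back_to_union}: with plain hypercubes a single optimal center can lie in the $\tau$-annulus of up to $2^d$ cells, and their deterministic averaging then forces the cell side to scale like $2^{\Theta(d)}\tau$, yielding a $2^{\Theta(d^2)}$ local-space blowup; the $(d+1)$ bound of \Cref{fact:consistent_hash} is precisely what rescues this.

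You sidestep that obstacle entirely by using a \emph{randomly shifted} hypercube grid and bounding the expected total cluster/cell incidence count directly: $\E[\sum_i\kappa_i]\le(1+2r/L)^d k$ is already $(1+O(\epsilon))k$ at side length $L=\Theta(dr/\epsilon)$, with no sum-to-union step and hence no $2^d$ factor, so \Cref{lemma:hash} is never needed. What your approach buys is simplicity: a one-line expectation plus Markov replaces all of \Cref{sec:hash} and \Cref{sec:mds}. What the paper's approach buys is a fully deterministic core subroutine (the only randomness in their final algorithm is the coarse $\OPT$ guess of \Cref{lemma:guess_OPT}), whereas your algorithm is inherently randomized through the shift and the $O(\log n)$ repetitions; since \Cref{thm:low_dim_bicrit} is stated with a $1-1/n$ success probability anyway, this costs nothing for the theorem as written.
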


Observe that for the memory regime of $s = n^{\sigma}$ with a constant $\sigma\in(0,1)$,
both \Cref{thm:low_dim_2approx,thm:low_dim_bicrit} run in a constant number of rounds. Moreover, $\Theta(\log_s n)$ is the complexity of far more rudimentary tasks, e.g., outputting the summation of $n$ numbers (see, e.g., \cite{RVW18}).
The dependence on $d$ in the memory bounds is $2^{\Theta(d \log d)}$.
Thus, for any constant $\delta \in (0,1)$, if $d = o(\log n/\log\log n)$, then the algorithms work in a constant number of rounds with local memory $s \ge n^{\delta}$ and total memory $n^{1+o(1)}$,
which is the regime studied in the previous works on fully scalable algorithms for \kCenter~\cite{BEFM21,CCM23}. Furthermore, if $d = o(\log\log n/\log\log\log n)$, then the total memory is in the desirable regime of $O(n \poly\log(n))$.

\paragraph{High-dimensional regime.}
Next, we go beyond the low-dimensional regime and explore the high dimension case where $d$ can be as large as $O(\log n)$.\footnote{As also observed in e.g.,~\cite{Cohen-AddadLNSS21,CGJKV24},
one can assume $d = O(\log n)$ without loss of generality by a Johnson-Lindenstrauss transform.
}
For this regime, we provide the first fully scalable MPC algorithm for \kCenter, and it achieves an $O(\log n/ \log \log n)$-approximation, running in a constant number of rounds (\Cref{thm:kcenter_high_dim}).

\begin{theorem}
\label{thm:kcenter_high_dim}
There exists an MPC algorithm that given $\varepsilon\in (0, 1)$,  $k\geq 1$, and a dataset $P\subset \mathbb{R}^{d}$ of $n$ points distributed across MPC machines with local memory $s\geq \poly(d\log n)$, with probability at least $1-1/n$ computes an $O(\epsilon^{-1}\log n/ \log \log n)$-approximate solution to \kCenter, using $O(\log_{s} n)$ rounds and total memory $O(n^{1 + \epsilon}\poly(d\log n))$.
\end{theorem}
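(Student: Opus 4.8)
The plan is to reduce, in $O(1)$ rounds, to computing an approximate net for each of $O(\epsilon^{-1}\log n)$ candidate radii, and to build each net by a grid‑hashing process run over $O(\log_s n)$ scales. First I would apply a Johnson--Lindenstrauss map so that $d=O(\epsilon^{-2}\log n)$; this is oblivious, runs in $O(1)$ rounds, and perturbs all pairwise distances by at most a $(1\pm\epsilon)$ factor, hence the optimal cost by the same factor. Next, in $O(\log_s n)$ rounds I would compute a crude $\poly(n)$‑factor estimate of the optimal radius $r^\star$ (e.g.\ from the minimum and maximum pairwise distances in a random sample) and restrict attention to the $O(\epsilon^{-1}\log n)$ geometrically spaced candidates $r\in\{(1+\epsilon)^j\}$ in the relevant window. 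For a fixed candidate $r$, the goal is to compute a set $C\subseteq P$ that is (a) pairwise more than $2r$ apart and (b) within distance $R:=O(\epsilon^{-1}r\log n/\log\log n)$ of every point of $P$. Property (a) forces $|C|\le k$ whenever $r\ge r^\star$, since each point of $C$ lies in a distinct optimal ball of diameter at most $2r^\star\le 2r$, and property (b) gives $\cost(P,C)\le R$. All candidates are run in parallel (a $\poly\log n$ memory blow‑up, absorbed), those with $|C|\le k$ are kept, and the cheapest such $C$ is returned; the smallest feasible radius is at most $(1+\epsilon)r^\star$, so the overall ratio is $O(\epsilon^{-1}\log n/\log\log n)$.

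The set $C$ is built by $O(\log_s n)$ scales of a grid‑hashing and peeling process. At each scale I impose a randomly shifted axis‑parallel grid whose side length $w$ is chosen so that (i) the points of one cell plus the bookkeeping fit in local memory $s=\poly(d\log n)$, and (ii) every Euclidean ball of radius $r$ meets few cells in expectation --- by independence across coordinates, $\E[\#\{\text{cells met}\}]\le(1+2r/w)^{d}$, so it suffices that $w$ be large enough to make this $n^{o(1)}$ (equivalently, each optimal ball is spread over only $n^{o(1)}$ cells, so the hashed instance has $O(k\, n^{o(1)})$ distinct images with good probability). Within each nonempty cell one uses local random priorities to pick a surviving representative, adds it to $C$, deletes every surviving point within distance $2r$ of a chosen representative, coarsens the grid by a bounded factor, and recurses on the survivors. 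After $O(\log_s n)$ scales no survivor remains, and each deleted point is within $2r$ of the representative that deleted it; because representatives chosen at different scales may again be close, the separation in (a) and the covering radius in (b) must be tracked across scales, and the accumulated covering radius works out to $R=O(\epsilon^{-1}r\log n/\log\log n)$, where $\log n/\log\log n$ is precisely $\log_s n$ for $s=\poly(d\log n)$ and the $\epsilon^{-1}$ reflects a quality--memory trade‑off (a finer hash uses more total memory, which is what produces the $n^{1+\epsilon}$ term).

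Correctness is then just the combination of invariants (a) and (b). For the resources: there are $O(\log_s n)$ scales of $O(1)$ rounds each; routing points to their cells, replicating each point across the $n^{O(\epsilon)}$ cells relevant to it (for the deletion step), and duplicating data across the $\poly\log n$ candidates together cost $O(n^{1+\epsilon}\poly(d\log n))$ total memory; and the failure probability is amplified below $1/n$ by a union bound over all scales and candidates.

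The main obstacle is maintaining the exact $2r$‑separation (a) while the representatives are chosen simultaneously across cells whose diameter far exceeds $2r$: two representatives from distinct, even non‑adjacent, cells can be within $2r$ of each other, so the ``conflict graph'' spans many cells and, in dimension $\Theta(\log n)$, is dense enough that a naive maximal‑independent‑set step cannot be run in $\poly(d\log n)$ memory. Resolving this --- layering the selection over the $O(\log_s n)$ scales so that each conflict is settled at the unique scale where both of its endpoints still survive, and using the hash to confine conflicts to $n^{o(1)}$‑size sub‑instances --- while simultaneously keeping the covering radius down to $O(\log n/\log\log n)\cdot r$, is the crux. The remaining ingredients (the Johnson--Lindenstrauss step, the aspect‑ratio estimation, the parallel guessing, the grid expectation bound, and the $|C|\le k$ charging) are routine.
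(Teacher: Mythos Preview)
Your proposal has a genuine gap, and it differs fundamentally from the paper's approach.

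\textbf{The gap.} You tie the $O(\log n/\log\log n)$ covering radius to the number of scales, arguing that $\log_s n = O(\log n/\log\log n)$ when $s=\poly(\log n)$. But the theorem allows any $s\ge\poly(d\log n)$, including $s=n^\sigma$ for constant $\sigma$, in which case $\log_s n=O(1)$ while the approximation ratio must still be $O(\log n/\log\log n)$. So the ratio cannot be an artifact of the round count. More seriously, your multi-scale scheme is underspecified exactly where it matters. You pick one representative per cell and delete points within $2r$; since cell diameters must be $\gg r$ (to keep $(1+2r/w)^d$ small with $d=\Theta(\log n)$), almost nothing is deleted per cell per scale, and coarsening the grid only makes this worse. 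You do not explain why all points are covered after $O(\log_s n)$ scales, nor how the ``layering'' resolves conflicts between representatives in adjacent cells without either blowing up the covering radius or requiring a genuine MIS computation. You correctly identify this as the crux and then leave it open.

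\textbf{What the paper does instead.} The paper does not iterate over scales at all. It first applies a \emph{consistent hashing} preprocessing (Lemma~\ref{lemma:consist_hash}) with gap parameter $\Gamma=O(d/\log\log n)$ and diameter $\ell=O(\epsilon^{-1}\Gamma\tau)$: map each point to its bucket and keep one representative per bucket, obtaining $P'$. After JL, $d=O(\log n)$, so $\Gamma=O(\log n/\log\log n)$ and every point of $P'$ has at most $\Lambda=\poly(\log n)$ neighbors within $O(\tau)$. Then it runs a \emph{single round} of Luby's algorithm on $P'$: assign uniform labels, keep each point that is the local minimum in its approximate $\tau$-ball. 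Independence is immediate. The entire technical content is the probabilistic analysis (Lemma~\ref{lemma:bound_T} and Lemma~\ref{lemma:key}) showing that, on an instance with neighborhood sizes bounded by $\Lambda=\poly(\log n)$, the one-round Luby output is a $(\tau, O(\log n/\log\log n)\tau)$-ruling set with high probability; this beats the $\Theta(\log n)$ bound that one-round Luby achieves on general graphs. The $O(\log_s n)$ rounds come only from the MPC primitive (Lemma~\ref{lemma:geometric_aggregation}) that computes the minimum label over each approximate ball, not from any iteration of the ruling-set construction itself, and the $n^{1+\epsilon}$ total memory and the $\epsilon^{-1}$ in the ratio both arise from the approximation parameter $\beta=O(\epsilon^{-1})$ in that primitive. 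Your JL step, coarse-$\OPT$ estimate, geometric guessing of $\tau$, and the $|C|\le k$ charging via $2\tau$-independence all match the paper; what is missing is precisely the mechanism that produces the ruling set.
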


We are not aware of any earlier fully scalable MPC algorithms for \kCenter in high dimension that we could compare with.
In fact, fully scalable MPC algorithms for clustering problems in high dimension are generally less understood,
and the only known result is an $O(\log^{2} n)$-approximation for \kMedian~\cite{Cohen-AddadLNSS21} (and in fact this ratio may be improved to $O(\log^{1,.5} n)$ using the techniques from~\cite{AhanchiAHKZ23} although it is not explicitly mentioned),
and as far as we know, nothing is known for \kCenter and \kMeans.
These existing results for \kMedian rely on the tree embedding technique,
and currently only an $O(\log^{1.5} n)$-distortion is known~\cite{AhanchiAHKZ23} (which translates to the ratio).
As a result, even if these techniques could be adapted for \kCenter,
it would only provide an $O(\log^{1.5} n)$-approximation, which falls short of the $O(\log n/ \log \log n)$-approximation achieved by our method;
in fact, our bound is even better than the fundamental lower bound of $\Omega(\log n)$-approximation of tree embedding.
This is not to mention the added technical difficulty of using the tree embedding: its expected distance distortion guarantee is too weak to be useful for the ``max'' aggregation of distances in \kCenter.

\subsection{Technical Overview}
\label{sec:tech_overview}

Our algorithms for \kCenter rely on variants of the classic reductions to geometric versions of the \emph{ruling set} (RS) and \emph{minimum dominating set} (MDS) problems,
which are fundamental problems in distributed computing.
We briefly describe the reductions in \Cref{subsec:intro_reductions}, particularly to mention our exact setup of RS and MDS, and state the results we obtain for each.
Then in \Cref{sec:intro_mis_and_mds} and \Cref{sec:intro_rs}, we provide a technical overview of our proposed MPC algorithms for these results, focusing on the key challenges and core techniques. 
While the relation between \kCenter and RS/MDS is well-known and has also been 
used in MPC algorithms for \kCenter in general metrics~\cite{HZ23,AG23}, it has not been studied for Euclidean \kCenter in the fully scalable setting.
This is a key technical difference to previous fully scalable algorithms for Euclidean \kCenter~\cite{BEFM21,CCM23} which employ successive uniform sampling to find centers. 

Both our low dimension and high dimension results rely on geometric hashing techniques (in \Cref{sec:hash}),
through which we utilize the Euclidean structure.
For low dimension, our algorithms are based on natural parallel algorithms where similar variants were also considered in graph MPC algorithms,
and the geometric hashing is the key to achieve the new bounds. 
For high dimension, our algorithm is a variant of the one-round version of Luby's algorithm. 
It has been known that the one-round Luby's algorithm yields a $\Theta(\log n)$ bound for RS in general graphs (see e.g.~\cite[Exercise 1.12]{dist_graph_book}). 
However, our new variant crucially makes use of the Euclidean structure via geometric hashing,
and it breaks the mentioned $\Theta(\log n)$ bound
in general graphs, improving it by an $O(\log\log n)$ factor in the Euclidean setting;
See \Cref{sec:intro_rs} for a more formal discussion.

\subsubsection{Reductions and Results for Geometric RS and MDS}
\label{subsec:intro_reductions}
To establish \Cref{thm:low_dim_2approx,thm:low_dim_bicrit,thm:kcenter_high_dim},
we begin with introducing the definitions and reductions for RS and MDS.
Let $\tau, \alpha>0$ be parameters, and let $\OPT$ be the minimum cost of the solution for \kCenter.

\paragraph{Geometric RS and MDS.}
A subset $S\subseteq P$ is called a $\tau$-independent set ($\tau$-IS) for $P$, if for every $x\neq y\in S$, $\dist(x, y)>\tau$,
and we say $S \subseteq \mathbb{R}^d$ is a $\tau$-dominating set ($\tau$-DS) for $P$,
if for every $x \in P$, $\dist(x, S) \leq \tau$.
A subset $S \subseteq P$ is a $(\tau, \alpha)$-ruling set ($(\tau, \alpha)$-RS) for $P$ if $S$ is both a $\tau$-IS and $\alpha$-DS for $P$.
A $\tau$-MDS is a $\tau$-DS with the minimum size, denoted as $\MDS_\tau(P)$.
A related well known notion is maximal independent set (MIS), where a $\tau$-MIS is $(\tau, \tau)$-RS.

\paragraph{Reductions.}
It is known (see, e.g., \cite{DBLP:journals/jacm/HochbaumS86}, and also \Cref{fact:is_ub}) that any $\tau$-IS  of $P$ for $\tau \geq 2\OPT$ must have at most $k$ points.
Therefore, an MPC algorithm that computes a $(2\OPT, \alpha)$-RS of $P$ would immediately yield $\alpha$-approximation for \kCenter on $P$.
On the other hand, for MDS,
since the optimal solution for \kCenter itself is a candidate for a $\tau$-MDS
and has at most $k$ points for $\tau = \OPT$,
a $(1 + \epsilon)$-approximation to $\tau$-MDS would yield $(1 + \epsilon)k$ centers.
This relation helps to obtain the desired bi-criteria approximation.
Compared with the setting of RS which could only leads to some $O(1)$-approximation,
MDS operates on the entire $\mathbb{R}^d$.
This is necessary for the $(1 + \epsilon)$ ratio since centers need to be picked from $\mathbb{R}^d$ instead of only from $P$.

\paragraph{Results for RS and MDS.}
We obtain the following results for RS and MDS, in both low and high dimension.
Combining with the abovementioned reductions, these results readily imply \Cref{thm:low_dim_2approx,thm:low_dim_bicrit,thm:kcenter_high_dim}.
All results run in $O(\log_s n)$ rounds which is constant in the typical setup of $s = n^{\sigma}$ for constant $0 < \sigma < 1$.
In low dimension, we obtain $(\tau, (1 + \epsilon)\tau)$-RS and a $(1 + \epsilon)\tau$-DS whose size is at most $(1 + \epsilon)$ times the $\tau$-MDS (which in a sense is a ``bi-criteria'' approximation).
Both results use $(\epsilon^{-1}d)^{O(d)} \cdot \poly(\log(n))$ local space,
and $n \poly(d\log n) \cdot (\epsilon^{-1}d)^{O(d)}$ total space.
In high dimension, we obtain $(\tau, (\epsilon^{-1} \log n / \log\log n)\tau)$-RS,
using (ideal) $\poly(d\log n)$ local space and $n^{1 + \epsilon} \poly(d \log n)$ total space.
These results are summarized in \Cref{tab:rs_mds}.

\begin{table}[t]
    \begin{tabular}{llll}
        \toprule
        guarantee & local space & total space & reference \\
        \midrule
        $(\tau, (1 + \epsilon)\tau)$-RS & $(\epsilon^{-1}d)^{O(d)} \cdot \tilde O(1)$ & $\tilde{O}(n)  \cdot (\epsilon^{-1}d)^{O(d)}$ & \Cref{thm:MIS} \\
        $(1 + \epsilon)\tau$-DS of size $(1 + \epsilon) |\MDS_\tau(P)|$ & $(\epsilon^{-1}d)^{O(d)} \cdot \tilde O(1)$ &  $\tilde{O}(n) \cdot (\epsilon^{-1}d)^{O(d)}$ & \Cref{thm:mds} \\
        $(\tau, O\left(\epsilon^{-1} \frac{\log n}{\log \log n}\right)\tau)$-RS & $\tilde{O}(1)$  & $\tilde{O}(n^{1 + \epsilon})$  & \Cref{thm:ruling_set} \\
        \bottomrule
    \end{tabular}
    \caption{RS and MDS results, where $\tilde{O}$ hides $\poly(d\log n)$ factor, all run in $O(\log_s n)$ rounds.}
    \label{tab:rs_mds}
\end{table}

We remark that
MIS, RS, and MDS are fundamental yet notoriously challenging problems in MPC. Existing studies on these problems are mostly under (general) graphs or general metric spaces, and they achieve worse bounds than ours, e.g., they need to use a super-constant number of rounds~\cite{Onak18,GhaffariU19,GhaffariGJ20,GLMPSSSUV23,GP24,JKPS25}, and/or are not fully scalable~\cite{CKPU23,HZ23,GP24}.
However, our results seem to suggest that these problems in Euclidean spaces behave very differently than in graphs/general metrics.
On the one hand, we obtain fully-scalable algorithms in both low and high dimension,
but on the other hand, our algorithms are only ``approximations'' to MIS and MDS; for instance, in low dimension, both our RS and MDS results have $(1 + \epsilon)$ factor off in the dominating parameter to MIS and MDS.
For RS/MIS and MDS without violating dominating parameter,
we are only aware of a line of research in distributed computing for growth-bounded graphs, see Schneider and Wattenhofer \cite{SW10}, which indirectly lead to $O(\log^*n)$-rounds fully scalable algorithms for MIS/MDS in $\mathbb{R}^d$ for constant $d$.
It is still open to design fully scalable algorithms for MIS, even in 2D, in \emph{constant} number of rounds.
In fact, this problem is already challenging on a 2D input set with diameter $O(\tau)$.
Nevertheless, our RS and MDS results suffice for approximations  for \kCenter.

\subsubsection{RS and MDS in Low Dimension}
\label{sec:intro_mis_and_mds}
The RS and MDS in low dimension starts with a rounding to a $\epsilon \tau / \sqrt d$-grid.
Specifically, we move each data point to the nearest $\epsilon \tau / \sqrt{d}$-grid point, whose coordinates are multiples of $\epsilon \tau / \sqrt{d}$, denoted as $P'$.\footnote{In our proof we actually need to use a slightly different rounding to make sure the image also belongs to $P$.}
Then we show that any $(\tau, \alpha \tau)$-RS to $P'$ yields a $(\tau, (1 + \epsilon)\alpha \tau)$-RS to the original dataset $P$,
and similarly, any $\tau$-DS to $P'$ yields a $(1 + \epsilon)\tau$-DS to $P$.
Hence, we can assume without loss of generality that $P$ is a subset of the said grid,
and find RS and MDS on it.
This rounding is useful,
since it ensures that in any subset of $\mathbb{R}^d$ of diameter $\gamma \cdot \tau$ ($\gamma \geq 1$),
the number of the grid points is at most $(O(d \gamma / \epsilon))^d$.
Hence, as long as $s \geq \Omega(d / \epsilon)^{d}$, we can afford to bring all grid points in a small area to a single machine and solve RS/MDS on it locally.

\paragraph{An overview for the proof of RS.}
In fact, on the rounded dataset $P$, we find a $(\tau, \tau)$-RS which is as well $\tau$-MIS on $P$.
A standard way of finding MIS in a graph is a greedy algorithm: start with the entire vertex set,
and if the current vertex set is nonempty, then find any vertex $x$, add it to the output (MIS) set,
remove all vertices that are adjacent to $x$ from the current vertex set, and repeat.
In the geometric setting, we can use an improved version where in each iteration
we identify a large number of vertices that are independent, instead of only one.
Specifically,
we partition the entire $\mathbb{R}^d$ into some $T$ \textit{groups} $\mathcal{W}_{1}, \ldots, \mathcal{W}_{T}$,
such that each group consists of \textit{regions} that are $\tau$ apart from each other.
Furthermore, each region has a bounded diameter $\alpha \tau$ for some $\alpha \geq 1$. \footnote{The reader may notice the reminiscence with the widely-used notion of network decomposition in graphs~\cite{awerbuch1989network,rozhovn2020polylogarithmic}. In that notion, the node set $V$ of the graph is partitioned into $T=O(\log n)$ groups $V_1$, \dots, $V_T$, such that in the subgraph induced by each $V_i$, each connected component has diameter at most $\alpha=O(\log n).$}
For $d = 2$, there is a simple way to partition with $T = O(1)$ and $\alpha = O(1)$,
as illustrated in \Cref{fig:decomposition}.
For general $d$, we give a partition that achieves $T = O(d)$ and $\alpha = O(d^{1.5})$. See \Cref{lemma:hash} for the precise statement.

\begin{figure}[ht]
    \centering
    \includegraphics[width=0.3\textwidth]{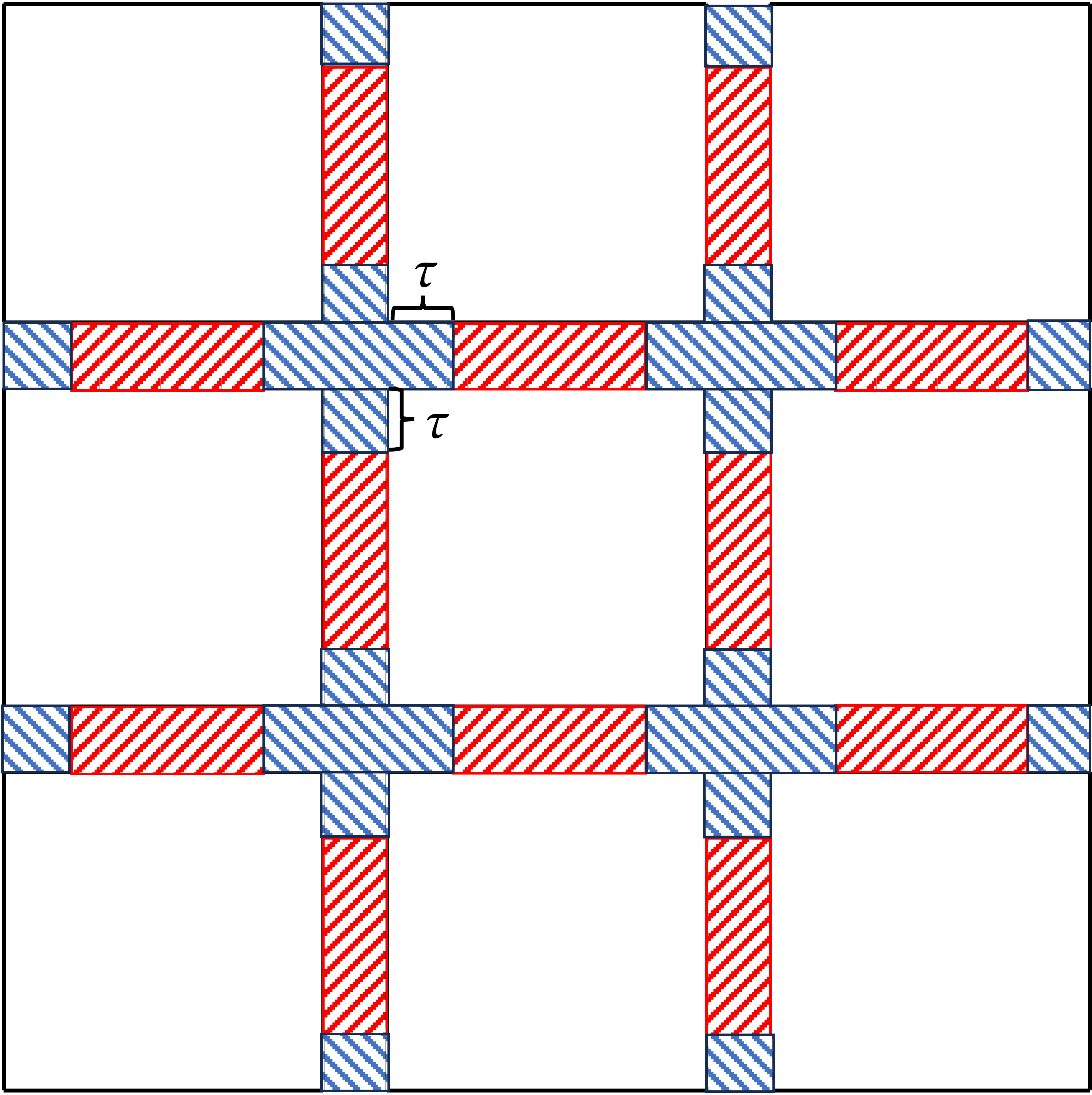}
    \caption{A space partition in 2D with $T = 3$ and $\alpha = 5$.
    The first group is squares with side-length $5\tau/\sqrt{2}$ (the blank space), the second is the red-shaded rectangles, and the third is the cross-like structures in blue shades.}
    \label{fig:decomposition}
\end{figure}

A key property of this decomposition is that the MIS computation in each region can be done independently within each group, as regions are $\tau$-separated.
This yields an $O(T\log_{s}n)$-round MPC algorithm: iterating over the $T$ groups $\mathcal{W}_{1}, \ldots, \mathcal{W}_{T}$, computing an MIS in each region in parallel, and removing points within $\tau$ from selected MIS points before proceeding to the next group.
Since each region in any group is of diameter $\alpha\tau$ which is at most $d^{1.5}\tau$,
there are at most $O(\varepsilon^{-1}d)^{O(d)}$ data points in each region, using the assumption of the rounded instance.
Hence, as long as $s \geq \Omega(\varepsilon^{-1}d)^{\Omega(d)}$,
the data points in each region can be stored in a single machine. Each such iteration can be implemented in $O(\log_s n)$ MPC rounds, and thus the overall scheme which has $T$ iterations takes $O(T\log_s n)$ MPC rounds.

To reduce this to $O(\log_{s} n)$ rounds, we exploit the locality of the above procedure.
Instead of iterating sequentially over groups, each region $R\in \mathcal{W}_{i}$ for $1\leq i\leq T$  directly determines its final subset $R'$ by identifying the influence from earlier groups $\bigcup_{j<i}\mathcal{W}_{j}$ that are within a bounded range $O(i\tau + (i-1)\alpha\tau)\leq \poly(d)\cdot \tau$.
Since an MIS selection only affects a $\tau$-radius per iteration, and each region has a diameter at most $\alpha\tau$, the cumulative affected area over $i$ iterations remains bounded. Leveraging the rounding property again, the number of relevant regions remains at most $O(\varepsilon^{-1}d)^{O(d)}$, allowing all necessary regions to be replicated locally.
This ensures that each region can compute its MIS in parallel in $O(\log_{s}n)$ rounds, provided $s\geq \Omega(\varepsilon^{-1}d)^{\Omega(d)}\poly(\log n)$.

\paragraph{An overview for the proof of MDS.}
We start with a weaker local space bound that requires a $2^{\Omega(d^2)}$ dependence of $d$ in $s$,
instead of the claimed $2^{\Omega(d \log d)}$ bound.
Our approach starts with a simple algorithm:
partition $\mathbb{R}^d$ into hypercubes of side-length $\alpha \tau$, where $\alpha \geq 1$ is a parameter to be determined. Then send the data points in each hypercube to a single machine, and solve locally the exact MDS of each hypercube.
Taking the union of these MDSs forms the final dominating set. Clearly, this returns a dominating set for the entire dataset,
but it may not be of small size.
Intuitively, the major gap in this algorithm is when the optimal solution uses points located at the boundary of the hypercubes to dominate the adjacent hypercubes,
while the algorithm described here only uses a point to dominate points in a single hypercube.

To address this issue, we observe that bridging the gap between the algorithm's solution and the optimal solution requires only bounding the number of points in the outer ``$\tau$-extended region'' of each hypercube $R$, denoted as $U_{\tau}^{\infty}(R)$, that intersect with the optimal solution.
Specifically, it suffices to show that this number is at most an $\varepsilon$-fraction of the optimal solution size.
To establish this bound, we apply an averaging argument:
If we shift the entire hypercube partitioning by some multiple of $O(\tau)$, there must exist a shift that satisfies our requirement, provided that the hypercube side-length $\alpha\tau$ is sufficiently large.
This may be visualized more easily in 1D:
each $U_{\tau}^{\infty}(R)$ is simply two intervals of length $\tau$ to the left and right of $R$ (which itself is also an interval, whose length is $\alpha \tau$).
Then by shifting a multiple of $O(\tau)$, the two intervals of $U_\tau^{\infty}$,
 after these shifts,
form a partition of the entire $\mathbb{R}$.
Unfortunately, this simple shifting of hypercubes only leads to $\alpha = 2^{O(d)}$,
which translates to a $2^{O(d^2)}$ dependence of $d$ in the local space $s$.
The main reason for this $\alpha = 2^{O(d)}$ bound is that the same point in the optimal solution may belong to up to $2^{O(d)}$ sets $U_\tau^{\infty}(R)$ (for some hypercube $R$).

To further reduce $\alpha = \poly(d)$, which leads to the $d^{O(d)}$ local space bound,
we need to employ a more sophisticated geometric hashing. We state this in \Cref{lemma:hash} and \Cref{fact:consistent_hash}.
This hash maps each point in $\mathbb{R}^d$ into some bucket,
and we would replace the hypercubes in the abovementioned algorithm with such buckets.
An important property of this bucketing is that any point in $\mathbb{R}^d$ (hence any point in the optimal solution)
can intersect at most $\poly(d)$ number of sets $U_\tau^{\infty}(R)$ over all buckets $R$, instead of $2^{O(d)}$ as in the simple hypercube partition.
However, the use of this new bucketing also introduces additional issues.
Specifically, since the buckets are of complicated structure,
it is difficult to analyze the even more complex set $U^\infty_\tau(R)$ for the averaging argument.
To this end, we manage to show (in \Cref{lemma:hash}, third property) that the union of $\bigcup_R U^\infty_\tau(R)$
is contained in the complement of a Cartesian power of (1D) intervals (e.g., $([1, 2] \cup [3, 4])^d$).
This structure of Cartesian power is similar enough to the $U_{\tau}^{\infty}$ annulus of hypercubes (which may be handled by projecting to each dimension),
albeit taking the complement.
This eventually enables us to use a modified averaging argument to finish the proof.

\subsubsection{RS in High Dimension}
\label{sec:intro_rs}
Our $(\tau, O(\epsilon^{-1}\log n / \log\log n)\tau)$-RS in high dimension is a modification of the well-known Luby's algorithm.
In this discussion, we assume $\epsilon = \Theta(1)$ and ignore this parameter.
The first modification is that,
unlike the standard Luby's algorithm which runs for $O(\log n)$ iterations~\cite{Luby85},
our algorithm only runs Luby's for one iteration:
for every data point $x \in P$, generate a uniform random value $h(x) \in [0, 1]$,
and then for each $x\in P$, include $x$ in the RS if $x$ has the smallest $h$ value in $B_P(x, \tau)$ (the ball centered at $x$ with radius $\tau$, intersecting points in $P$). 
This one-round Luby's algorithm achieves $(\tau, O(\log n)\tau)$-RS (with high probability), and we also show that this is tight in general graphs; see \Cref{sec:one_round_luby}.\footnote{Similar bounds were also mentioned without proof in the literature, see e.g.~\cite[Exercise 1.12]{dist_graph_book}.
We give a proof (sketch) for this tight bound for completeness.}
However, this is worse than the $O(\log n / \log \log n)$ factor that we can achieve.

\paragraph{A new preprocessing step based on geometric hashing.}
Hence, we need to introduce the second important modification to this one-round Luby's algorithm in order to bypass the $O(\log n)$ factor.
Specifically, before running the one-round Luby's algorithm,
we run a preprocessing step to map the data points to the buckets of a geometric hashing.
The geometric hashing that we use is the \emph{consistent hashing}~\cite{CJKVY22} (see \Cref{lemma:consist_hash}),
and the concrete guarantee in our context is that, each hash bucket has diameter $\ell := O(\log n / \log \log n)\tau$,
and for any subset $S \subseteq \mathbb{R}^d$ with diameter at most $O(\tau)$,
the number of buckets that $S$ intersects is at most $\Lambda := \poly(\log n)$.
We pick an arbitrary point in $P$ from each (non-empty) bucket, denoting the resultant set as $P'$,
and we run the one-round Luby's on $P'$.
Clearly, this hashing step only additively increases the dominating parameter by $\ell = O(\log n / \log\log n)\tau$ which we can afford.
At a high level, the use of this rounding is to limit the size of the $O(\tau)$-neighborhood for every point,
which is analogue to the degree of a graph.
In a sense, what we prove is that one-round Luby's on graphs with degree bound $\poly(\log n)$ yields an $O(\log n / \log\log n)$-ruling set.

Next, we explain in more detail why this hashing step helps to obtain $O(\log n / \log\log n)\tau$ dominating parameter.
This requires us to do a formal analysis to one-round Luby's algorithm (which did not seem to appear in the literature),
and utilize the property of hashing that for every $x\in P'$, $|B_{P'}(x, \tau)| \leq \Lambda = \poly(\log n)$.

\paragraph{A re-assignment argument.}
Let $R$ be the resultant set found by running one-round Luby's on $P'$.
Fix a point $p \in P'$, and we need to upper bound $\dist(x, R)$.
To this end, we interpret the algorithm as defining an auxiliary sequence $S = (x_0 := p, x_1, \ldots, x_T)$ (where $T$ is also random).
Specifically, $S$ is formed in the following process: we start with $i = 0$,
whenever $x_{i}$ is not picked into $R$ we define $x_{i + 1}$ as the point with the smallest $h$ value in $B_{P'}(x, \tau)$,
and we terminate the procedure otherwise.
Clearly, $\dist(x, R) \leq T \tau$, and it suffices to upper bound $T$ (which is a random stopping time).
Indeed, a similar plan of re-assignment argument has been employed in~\cite{CGJKV24},
but the definition and analysis of $S$ is quite different since they focus on facility location.

Now, a crucial step is to bound the probability of the event that, given the algorithm picks a prefix $(x_0, \ldots, x_i)$ of sequence $S$,
the algorithm picks some new $x_{i + 1} \in P'$ instead of terminating.
We call this \emph{extension} probability.
Ideally, if we can give this extension probability a universal upper bound $\gamma < 1$,
then for $t \geq 1$,
the probability that $T > t$ is at most $\gamma^t$, which decreases exponentially with respect to $t$.
However, this seemingly good bound may not immediately be sufficient,
since one still needs to take a union bound over sequences of length $t$, because the sequence $S$ is random.
A na\"ive bound is $n^t$ since each $x_i$ may be picked from any point in $P'$, and this is positively large (i.e., $\gamma^t$ cannot ``cancel it out''.).
Moreover, an added difficulty is that the ``ideal'' case of having universal upper bound $\gamma$ for the extension probability may not be possible in the first place.

Our analysis resolves both difficulties.
We show in \Cref{lemma:ext_cond} that the extension probability is roughly upper bounded by $|B_{P'}(x_i, \tau)| / |\bigcup_{j = 1}^i B_{P'}(x_j, \tau)|$ (recalling that $(x_0, \ldots, x_i)$ is given).
For the union bound,
since the hashing guarantees that $|B_{P'}(x, \tau)| \leq \Lambda = \poly(\log n)$ for any $x \in P'$,
we can approximate the size $|B_{P'}(x_i, \tau)|$ by rounding to the next power of $2$, denoted as $\lceil |B_{P'}(x_i, \tau)| \rceil_2$,
and this yields only $O(\log\log n)$ possibilities after rounding.
For a (fixed) sequence $S' := (x_0, \ldots, x_m)$,
we define its \emph{configuration} as the rounded value of $(\lceil |B_{P'}(x_1, \tau)| \rceil_2, \ldots, \lceil |B_{P'}(x_m, \tau)| \rceil_2)$.
We can do a union bound with respect to the configuration of the sequences,
and there are at most $O(\log \log n)^t$ number of configurations for length-$t$ sequences.

Finally, given a sequence $S' = (x_0, \ldots, x_t)$ with a given configuration (for some $t$),
to upper bound the extension probability,
we need to analyze $\prod_{i} \frac{|B_{P'}(x_i, \tau)|}{|\sum_{j = 1}^i B_{P'}(x_j, \tau)|}$,
and we show in a key \Cref{lemma:key} that this is upper bounded by $\exp(-\Omega(t) \log(t / \log \Lambda))$.
Therefore, we can pick $t = \log n / \log \log n$,
apply the union bound and conclude that $\Pr[T > t] \leq (\log \log n)^t \cdot \exp(-\Omega(t) \log(t / \log \Lambda)) \leq 1 / \poly(n)$.

We also provide a (sketch) of how to slightly modify our analysis to show the one-round Luby's algorithm yields $O(\tau, O(\log n)\tau)$-RS, in \Cref{sec:one_round_luby}. As mentioned, this did not seem to appear in the literature.
We also provide a tight instance for this one-round Luby's algorithm, in \Cref{sec:luby_graph_lb}.

     \subsection{Related Work}

The \kCenter problem, as one of the fundamental clustering problems \cite{Gonzalez85,HS85,DBLP:journals/jacm/HochbaumS86}, has been studied extensively in sequential setting, and more recently, also in parallel setting. For MPC algorithms for \kCenter, most of prior works have focused on non-fully scalable algorithms that have a dependence of $\Omega(k)$ in the local memory $s$ and can generally achieve $O(1)$ rounds. 
In general metric space,~\cite{EneIM11} obtains a large-constant approximation,
using $O(1 / \sigma)$ rounds if the local memory is $\poly(k)n^\sigma$,
which offers a tradeoff between the number of rounds and the local memory.
More recent works aim to achieve a smaller constant ratio, down to factor 2, at the cost of having local memory size with a fixed polynomial dependence in $n$ and/or a polynomial dependence in the number of machines~\cite{MalkomesKCWM15,ImM15,HZ23,AG23}.
There has been also some \kCenter studies on doubling metrics instances (which is a generalization of $\mathbb{R}^d$)~\cite{CeccarelloPP19,BBM23}, where in the bi-criteria (or with outliers) setting, 
not only a constant but even a $(1+\epsilon)$ ratio can be achieved due to the low-dimensional structure~\cite{BBM23}.

Fully-scalable MPC algorithms have been also studied for related clustering problems of \kMedian and \kMeans in $\mathbb{R}^d$. \cite{Cohen-AddadLNSS21} describes a hierarchical clustering algorithm that in a constant number of rounds returns a $\poly\log (n)$-approximation for \kMedian using $s = \poly(d) \cdot n^{\sigma}$ local memory; we are not aware of any similar approximation for \kMeans (even when $d = O(1)$ and with $\poly\log n$ ratio). Furthermore, since the techniques used in \cite{Cohen-AddadLNSS21} rely critically on the approach of hierarchically well-separated trees, they are unlikely to lead to sub-polylogarithmic approximation ratio algorithms.
On the other hand, bi-criteria approximation are known for both \kMedian and \kMeans~\cite{BhaskaraW18,CGJKV24}, and in a constant number of rounds and with $s = \poly(d) \cdot n^{\sigma}$ local memory, one can approximate $k$-median and $k$-means with $O(1)$ ratio using $(1 + \epsilon)k$ centers~\cite{CGJKV24}. In the same setting, it is also possible to achieve a $(1 + \epsilon)$-approximation for special inputs~\cite{Cohen-AddadMZ22}.

     \section{Preliminaries}
\label{sec:prelim}

For integer $n \geq 1$, let $[n] := \{1, \ldots, n\}$. 
For a mapping $f : X \to Y$, denote $f^{-1}(y) := \{x \in X: f(x) = y\}$ as the pre-image of $f$. 
For some $d \geq 1$, a set $S \subseteq \mathbb{R}^d$ and $x \in \mathbb{R}^d$,
write $S + x$ to denote $\{ x + y : y \in S \}$.
For $t > 0$, let $\mathcal{G}_t \subseteq \mathbb{R}^d$ be the set of $t$-grid points, that is, points whose coordinates take values as integer multiples of $t$. 
For a subset $S \subset \mathbb{R}^d$, let $\diam(S) := \max_{x, y\in S} \dist(x, y)$ be its diameter.
Similarly define $\dist_\infty$ and $\diam_\infty$ as the $\ell_\infty$ version.

\paragraph{Neighborhoods.}
Let $B(x, r):=\{y\in \mathbb{R}^{d}: \dist(x,y)\leq r\}$ be a ball centered at $x\in \mathbb{R}^{d}$ with radius $r\geq 0$.
For a point set $X\subseteq\mathbb{R}^{d}$, let $B_{X}(x, r):= B(x, r)\cap X$ denote a ball inside $X$.
For a point set $S\subset \mathbb{R}^{d}$ and $\gamma>0$, let $N_{\gamma}^{\infty}(S)$ be the $\gamma$-neighborhood of $S$ under the $\ell_{\infty}$ distance, i.e., 
$$N_{\gamma}^{\infty}(S):= \{x\in\mathbb{R}^{d}: \exists s\in S, ||x-s||_{\infty}\leq \gamma\},$$ 
and let $U_{\gamma}^{\infty}(S):= N_{\gamma}^{\infty}(S)\setminus S$ be the $\gamma$-annulus of $S$ under $\ell_{\infty}$ distance.

\paragraph{Geometric independent set, ruling set and dominating set.} 
Let $\tau>0$ be some threshold, $\alpha>0$ be some parameter and $P\subset \mathbb{R}^{d}$ be a dataset. 
A subset $S\subseteq P$ is called a $\tau$-independent set ($\tau$-IS) for $P$, if for every $x\neq y\in S$, $\dist(x, y)>\tau$, 
and we say $S \subseteq \mathbb{R}^d$ is a $\tau$-dominating set ($\tau$-DS) for $P$,
if for every $x \in P$, $\dist(x, S) \leq \tau$. 
A subset $S \subseteq P$ is a $(\tau, \alpha)$-ruling set ($(\tau, \alpha)$-RS) for $P$ if $S$ is both a $\tau$-IS and $\alpha$-DS for $P$. 
A $\tau$-MDS is a $\tau$-DS with the minimum size, denoted as $\MDS_\tau(P)$. 
A related well known notion is maximal independent set (MIS), where a $\tau$-MIS for $P$, denoted as $\MIS_{\tau}(P)$, is a $(\tau, \tau)$-RS for $P$.

\paragraph{\kCenter.}
Recall that the objective of \kCenter is defined in \Cref{sec:intro} as $\cost(P, C)$ for a dataset $P$ and center set $C$.
Let $\OPT(P)$ be the minimum value of the solution for \kCenter, i.e., $\OPT(P):= \min_{C\subset \mathbb{R}^{d}} \cost(P, C)$. When the context is clear, we simply write $\OPT$ for $\OPT(P)$.

\paragraph{Standard MPC primitives.}

In our algorithms we frequently use several basic primitives on MPC with local memory $s \geq \poly\log(N)$ using total memory $O(N \poly\log N)$ and number of rounds $O(\log_s N)$, where $N$ is the size of a generic input.
This includes standard procedures of broadcast and converge-cast (of a message that is of size $\leq \sqrt s$), see e.g.~\cite{mpc_book}.
Goodrich et al.~\cite{GSZ11} show that the task of sorting $N$ numbers can also be performed deterministically in the above setting. 

\begin{lemma}[Packing property, cf. {\cite[Lemma 4.1]{pollard1990empirical}}]
    \label{lemma:packing}
    For a point set $S\subset \mathbb{R}^{d}$ such that $\forall x \neq y \in S, dist(x, y) \geq \rho$, we have that $|S|\leq (\frac{3\diam(S)}{\rho})^{d}$.  
\end{lemma}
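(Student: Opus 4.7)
The plan is a standard volumetric packing argument. The only nontrivial case is $|S| \geq 2$, for then $\diam(S) \geq \rho$ (so that the bound $(3\diam(S)/\rho)^d$ is meaningful); the case $|S| \leq 1$ is handled separately by inspection.

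So assume $|S| \geq 2$. First I would fix an arbitrary reference point $x_0 \in S$ and observe that $S \subseteq B(x_0, \diam(S))$ by definition of the diameter. Next, associate to each $x \in S$ the Euclidean ball $B(x, \rho/2)$. The separation hypothesis $\dist(x, y) \geq \rho$ for $x \neq y \in S$, together with the triangle inequality, implies that the collection $\{B(x, \rho/2) : x \in S\}$ is pairwise disjoint (any common point would force $\dist(x, y) < \rho$). Moreover, since every $x \in S$ satisfies $\dist(x, x_0) \leq \diam(S)$, each such ball is contained in the single larger ball $B(x_0, \diam(S) + \rho/2)$.

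The conclusion now follows by comparing $d$-dimensional Lebesgue volumes. Writing $c_d$ for the volume of the unit ball in $\mathbb{R}^d$, disjointness and containment give
\[
    |S| \cdot c_d (\rho/2)^d \;\leq\; c_d (\diam(S) + \rho/2)^d,
\]
so $|S| \leq (1 + 2\diam(S)/\rho)^d$. Since we are in the case $\diam(S) \geq \rho$, we have $1 \leq \diam(S)/\rho$, hence $1 + 2\diam(S)/\rho \leq 3\diam(S)/\rho$, and the bound $(3\diam(S)/\rho)^d$ follows.

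There is no real obstacle here; the only mild care needed is in the boundary case $|S| \leq 1$, and in absorbing the additive ``$+1$'' from the volume inequality into the factor of $3$ via the assumption $\diam(S) \geq \rho$. Note that the argument is genuinely Euclidean (it uses that volumes scale as $r^d$ with a universal constant $c_d$), which is why the final bound depends on $d$ only in the exponent.
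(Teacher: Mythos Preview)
The paper does not supply its own proof of this lemma; it simply cites Pollard. Your argument is the standard volumetric packing proof and is correct for the case $|S|\geq 2$: disjoint balls of radius $\rho/2$ around the points of $S$ all sit inside a ball of radius $\diam(S)+\rho/2$, and comparing volumes gives $|S|\leq (1+2\diam(S)/\rho)^d\leq (3\diam(S)/\rho)^d$.

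One small correction: you say the case $|S|\leq 1$ is ``handled separately by inspection,'' but in fact the stated inequality is \emph{false} when $|S|=1$, since then $\diam(S)=0$ and the right-hand side is $0$. This is a defect of the lemma's statement rather than of your method, and it is harmless in context: every application of the lemma in the paper is to a set with at least two points (equivalently, with $\diam(S)\geq\rho>0$). It would be cleaner to say this explicitly rather than to claim the singleton case passes by inspection.
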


     \section{Geometric Hashing}
\label{sec:hash}

We present our new geometric hashing in \Cref{lemma:hash}.
This hashing is crucially used in our low dimension results.
The construction of this hashing is the same as~\cite[Theorem 5.3]{CJKVY22},
and the first two properties have also been established in~\cite{CJKVY22}.
However, the third property is new, and is based on a careful analysis that utilizes the structure of this specific construction.
\begin{restatable}{lemma}{lemmahash}
    \label{lemma:hash}
    For every $\beta>0$, $\ell \geq \Theta(d^{1.5} \beta)$, there is a hash function $f : \mathbb{R}^{d} \to \mathbb{R}^{d}$ such that the following holds.
    \begin{enumerate}
        \item  Each bucket has diameter at most $\ell$, namely, for every image $u \in f(\mathbb{R}^{d})$, $\diam(f^{-1}(u)) \leq \ell$.
        \item  The bucket set $\{ f^{-1}(u) : u \in f(\mathbb{R}^d) \}$ can be partitioned into $d + 1$ groups $\{ \mathcal{W}_i \}_{i=0}^d$,
        such that every two buckets $S \neq S'$ in the same group $\mathcal{W}_i$ $(0 \leq i \leq d)$ has $\dist(S, S') \geq \dist_\infty(S, S') > \beta$.
\item  For every $0<\tau\leq \beta$, $\left(\bigcup_{u \in f(\mathbb{R}^{d})} U^\infty_\tau(f^{-1}(u)) \right) \cap L(z, 2b)^d = \emptyset$,\footnote{Recall that the notation $L(\cdot, \cdot)^d$ denotes the $d$-th Cartesian power of $L(\cdot, \cdot)$.}
    where $z:=\ell/ \sqrt{d}$, $b := d\beta +\tau$ and $L(p, q):= \bigcup_{a\in \ZZ} [ap + q, (a+1)p - q)$ for $p > 2q$.
    \end{enumerate} 
    Furthermore, it takes $\poly(d)$ space to store $f$ and to evaluate $f(x)$ for every $x \in \mathbb{R}^d$.
\end{restatable}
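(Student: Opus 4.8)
The construction follows \cite[Theorem 5.3]{CJKVY22}: one builds the hash by a sequence of $d$ ``rounding'' steps, where in step $i$ one partitions space along coordinate $i$ into slabs of width $\Theta(\ell/\sqrt{d})$ but with a consistent tie-breaking that, after all $d$ steps, yields buckets of bounded diameter. Concretely, the bucket containing a point $x$ is determined coordinate by coordinate: in each coordinate the point is snapped to one of $d+1$ shifted grids, and the index of which grid ``wins'' at step $i$ depends only on the already-decided lower coordinates, producing the required $d+1$ color classes. Properties~1 and~2 are exactly the diameter bound and the separation-by-color-class statement proved in \cite{CJKVY22}, so I would simply cite them; the whole content of the lemma is Property~3.

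\textbf{Proving Property~3.} Fix $0<\tau\le\beta$ and set $z:=\ell/\sqrt d$, $b:=d\beta+\tau$. I want to show that no point of $\bigcup_u U^\infty_\tau(f^{-1}(u))$ lies in $L(z,2b)^d$. Suppose $x\in L(z,2b)^d$; then in \emph{every} coordinate $j$, $x_j$ lies strictly inside some grid cell $[a_jz+2b,(a_j+1)z-2b)$, i.e.\ $x_j$ is at $\ell_\infty$-distance $>2b$ from every grid hyperplane $\{\,\cdot = mz\,\}$ of the base $z$-grid. The key structural claim I would extract from the construction is: if $x$ is this ``$2b$-deep'' in the base grid cell in all coordinates, then the bucket $f^{-1}(f(x))$ contains the entire $\ell_\infty$-ball $B_\infty(x,\tau)$, and in fact a $b$-neighborhood $N^\infty_{b}(\{x\})$ lies in the same bucket. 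This is because each of the $d$ sequential rounding steps can shift the ``decision boundary'' by at most the cumulative error of the earlier steps, which is bounded by $O(d\beta)$; since $x$ is more than $2b = 2(d\beta+\tau)$ away from every base-grid hyperplane, all these perturbed boundaries stay more than $\tau$ away from $x$, so every point within $\ell_\infty$-distance $\tau$ of $x$ makes the same sequence of decisions as $x$ and lands in the same bucket. Hence $x\notin U^\infty_\tau(f^{-1}(u))=N^\infty_\tau(f^{-1}(u))\setminus f^{-1}(u)$ for any $u$: either $x\notin N^\infty_\tau(f^{-1}(u))$, or $x$ is in $f^{-1}(u)$ together with its whole $\tau$-neighborhood. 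This gives the claimed emptiness.

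\textbf{Main obstacle.} The delicate part is the quantitative claim that being $2b$-deep in the \emph{base} $z$-grid cell in all coordinates guarantees being $\tau$-deep in the \emph{final} bucket. This requires opening up the recursive definition of the hash from \cite{CJKVY22}: one must track, step by step, how much the effective slab boundary in coordinate $i$ can move as a function of the (bounded) displacement in coordinates $1,\dots,i-1$, and verify the cumulative displacement over all $d$ steps is at most $d\beta = b-\tau$. I would set up an induction on the coordinate index $i$: let $\delta_i$ bound the $\ell_\infty$-uncertainty in the location of the step-$i$ cut accumulated from steps $1,\dots,i-1$; show $\delta_i\le (i-1)\beta$ using that each step's grid has spacing related to $\beta$ and contributes at most $\beta$ to the drift; conclude $\delta_d+\tau\le d\beta+\tau=b\le 2b$, so the $2b$-depth hypothesis in coordinate $i$ dominates. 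The constant in $\ell\geq\Theta(d^{1.5}\beta)$ must be chosen large enough (relative to $\sqrt d$) that $z=\ell/\sqrt d$ exceeds $2\cdot 2b=4(d\beta+\tau)$, which is what makes the sets $L(z,2b)$ nonempty and the statement nonvacuous; I would verify $z>4b$ suffices and that this is implied by the hypothesis on $\ell$. Everything else (the ``$N^\infty_\tau\setminus f^{-1}(u)$'' bookkeeping, the packing/$\poly(d)$ storage remarks) is routine.
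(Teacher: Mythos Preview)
Your high-level strategy for Property~3 is sound and is essentially the contrapositive of what the paper does: you want to show that any $x \in L(z,2b)^d$ has its entire $\ell_\infty$-ball of radius $\tau$ inside the bucket $f^{-1}(f(x))$, while the paper instead shows $\bigcup_u U_\tau^\infty(f^{-1}(u)) \subseteq N_b^\infty(A_{d-1})$ (with $A_{d-1}$ the union of $(d{-}1)$-faces of the base $z$-grid) and then observes this set is disjoint from $L(z,2b)^d$. Both routes reduce to the same structural fact: all bucket boundaries lie within $\ell_\infty$-distance $d\beta$ of the grid faces $A_{d-1}$.

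The gap is that you have misdescribed the construction, and this makes your ``main obstacle'' induction unworkable as written. The hash of \cite[Theorem~5.3]{CJKVY22} does \emph{not} process coordinates one at a time with a per-coordinate snapping; it iterates over face \emph{dimensions} $i=0,\ldots,d$, carving out for each $i$-face $Q$ the bucket $N_{\ell_i}^\infty(Q)\setminus B_{\leq i-1}$ with $\ell_i=(d-i)\beta$, and the groups $\mathcal{W}_i$ are indexed by face dimension, not by coordinate. So there is no ``step-$i$ cut in coordinate $i$'' whose drift $\delta_i$ you can track, and your proposed bound $\delta_i\le(i-1)\beta$ has no referent. Fortunately, with the correct construction the key claim is easier than you anticipated: since $B_{\leq d-1}=\bigcup_{j\le d-1} N_{\ell_j}^\infty(A_j)\subseteq N_{d\beta}^\infty(A_{d-1})$ (using $\ell_j\le d\beta$ and $A_j\subseteq A_{d-1}$), any $x$ at $\ell_\infty$-distance $\ge 2b$ from every grid hyperplane lands in a core bucket $Q\setminus B_{\leq d-1}$ for a $d$-cell $Q$, and every $y$ with $\|y-x\|_\infty\le\tau$ is still at distance $\ge 2b-\tau>d\beta$ from $A_{d-1}$, hence in the same bucket. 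No coordinate induction is needed---just this one containment, which is precisely what the paper isolates as its intermediate claim.
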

\begin{proof}
    The proof can be found in \Cref{sec:proof_hashing}.
\end{proof}

\Cref{lemma:hash} readily implies the following property.
Roughly speaking, it ensures that for any point set $S$ with small enough $\ell_\infty$ diameter, the number of intersected buckets in the hash is bounded.
\begin{fact}
    \label{fact:consistent_hash}
    The second property of \Cref{lemma:hash}
    implies that for every $S \subset \mathbb{R}^d$ such that $\diam_\infty(S) \leq \beta$,
    it holds that $|f(S)| \leq d + 1$.
\end{fact}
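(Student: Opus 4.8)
The plan is to argue by contradiction, using the second property of \Cref{lemma:hash} directly. Suppose $S \subset \mathbb{R}^d$ has $\diam_\infty(S) \leq \beta$ but $|f(S)| \geq d+2$. Since the bucket set is partitioned into the $d+1$ groups $\mathcal{W}_0, \dots, \mathcal{W}_d$, by the pigeonhole principle at least two of the buckets hit by $S$, say $f^{-1}(u) \neq f^{-1}(u')$ with $u, u' \in f(S)$, must lie in the same group $\mathcal{W}_i$. Pick points $x, x' \in S$ with $f(x) = u$ and $f(x') = u'$; then $x \in f^{-1}(u)$ and $x' \in f^{-1}(u')$, so $\dist_\infty(f^{-1}(u), f^{-1}(u')) \leq \|x - x'\|_\infty \leq \diam_\infty(S) \leq \beta$. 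But the second property of \Cref{lemma:hash} guarantees $\dist_\infty(f^{-1}(u), f^{-1}(u')) > \beta$ for distinct buckets in the same group, a contradiction. Hence $|f(S)| \leq d+1$.

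This is essentially the only idea needed; there is no real obstacle here, since the statement is an immediate combinatorial consequence of the separation guarantee combined with pigeonhole. The one point worth stating carefully is that $\dist_\infty$ between two sets is defined as the infimum of $\|a - b\|_\infty$ over $a$ in one set and $b$ in the other, so exhibiting the single pair $(x, x')$ with both points in $S$ suffices to upper bound that set distance by $\diam_\infty(S)$. I would write this out in two or three sentences and be done.
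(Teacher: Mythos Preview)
Your argument is correct and is exactly the intended one: the paper does not spell out a proof but simply notes that the fact ``readily'' follows from the second property, and your pigeonhole-plus-separation contradiction is the natural way to unpack that remark. Nothing further is needed.
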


Geometric hash functions that has similar guarantee as in \Cref{fact:consistent_hash} has been studied under the notion of consistent hashing~\cite{CJKVY22}, or sparse partitions which interpret the hashing as a space partition~\cite{JLNRS05,Filtser24}.
Specifically, the definition of consistent hashing is stated as follows.
The consistently guarantee of \Cref{fact:consistent_hash} is slightly stronger in the sense that the diameter bound of $S$ is in $\ell_\infty$ instead of $\ell_2$.
\begin{definition}[{\cite[Definition 1.6]{CJKVY22}}]
    \label{def:consistent_hash}
    A mapping $\varphi: \mathbb{R}^{d} \to \mathbb{R}^{d}$ is called a $\Gamma$-gap $\Lambda$-consistent hash with diameter bound $\ell>0$, or simply $(\Gamma, \Lambda)$-hash , if it satisfies: 
    \begin{itemize}
        \item Diameter: for every image $z\in \varphi(\mathbb{R}^{d})$, we have $\diam(\varphi^{-1}(z))\leq \ell$; and 
        \item Consistency: for every $S\subset \mathbb{R}^{d}$ with $\diam(S)\leq \ell/ \Gamma$, we have $|\varphi(S)|\leq \Lambda$. 
    \end{itemize}
\end{definition}
\Cref{lemma:consist_hash} gives a space-efficient consistent hashing with near-optimal parameter tradeoffs.
We rely on this parameter tradeoff in a preprocessing step (\Cref{alg:rs_hash}) of our high dimension ruling set result \Cref{thm:ruling_set}.
\begin{lemma}[{\cite[Theorem 5.1]{CJKVY22}}]
    \label{lemma:consist_hash}
    For every $\Gamma\in [8, 2d]$, there exists a (deterministic) $(\Gamma, \Lambda)$-hash $\varphi: \mathbb{R}^{d} \to \mathbb{R}^{d}$
    where $\Lambda = \exp(8d/\Gamma) \cdot O(d\log d)$. 
    Furthermore, $\varphi$ can be described using $O(d^{2}\log^{2}d)$ bits and one can evaluate $\varphi(x)$ for any point $x\in \mathbb{R}^{d}$ in space $O(d^{2}\log^{2}d)$. 
\end{lemma}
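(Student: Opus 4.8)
Since \Cref{lemma:consist_hash} is quoted verbatim from \cite{CJKVY22}, the plan is to recall and outline the construction and analysis given there rather than devise a new one. The hash $\varphi$ is a space partition whose cells carry two calibrated scales: a \emph{diameter scale} $\ell$ and a finer \emph{granularity scale} $\rho = \Theta(\ell/\Gamma)$ (the hypothesis $\Gamma\ge 8$ is there to keep these scales sufficiently separated). A clean way to realize it is a ball-carving scheme: take a maximal $\rho$-separated set $Y\subset\mathbb{R}^{d}$ (so $Y$ is a $\rho$-net), impose a priority order on $Y$, and set $\varphi(x)$ to be the highest-priority net point inside $B(x,r)$ for a radius $r=\Theta(\ell)$; an essentially equivalent realization uses $O(d)$ shifted axis-parallel grids and assigns $x$ to the cell of the first grid in which $x$ is ``deep''. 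In either realization the diameter property is immediate: each bucket $\varphi^{-1}(z)$ sits inside one ball $B(z,r)$ (respectively one grid cell), so $\diam(\varphi^{-1}(z))\le 2r=:\ell$; and the $O(d^{2}\log^{2} d)$-bit description budget is spent storing the $O(d)$ shift vectors (respectively the net/lattice parameters) to $O(d\log d)$ bits of precision.

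The consistency bound is the heart of the proof and is where I expect the main difficulty. Fix $S$ with $\diam(S)\le \ell/\Gamma$ and a reference point $x_{0}\in S$. Every net point that can possibly serve some $x\in S$ lies in $B(x_{0},\, r+\ell/\Gamma)$, and these net points are pairwise $\rho$-separated, so by the packing lemma (\Cref{lemma:packing}) there are at most $(6(r+\ell/\Gamma)/\rho)^{d}$ of them; plugging in $r=\Theta(\rho)$ and $\ell=\Theta(\rho)$ this is only $\exp(\Theta(d))$ and does not improve with $\Gamma$. Getting down to the claimed $\exp(8d/\Gamma)\cdot O(d\log d)$ requires the additional observation that almost all of those ``reachable'' net points never actually win an assignment from $S$: a net point $y$ contributes to $\varphi(S)$ only if some $x\in S$ sees $y$ yet sees none of the higher-priority reachable net points, and a careful geometric accounting over the priority order — the step that genuinely exploits the Euclidean structure of the carving rather than just the triangle inequality — bounds the number of such ``winners'' by $\exp(O(d/\Gamma))$ up to a $\poly(d)$ overhead. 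A naively analyzed axis-parallel grid loses a factor $\sqrt d$ in the exponent (only $\exp(\Theta(d^{1.5}/\Gamma))$, which is the weaker $\Gamma=\Theta(d^{1.5})$, $\Lambda=d+1$ regime of \Cref{lemma:hash}); avoiding this loss is exactly why the cells/net must be chosen with care, and making this accounting rigorous — including an argument that only $\poly$-many ``combinatorial types'' of $S$ need to be handled — is the part I would budget the most effort for.

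Finally, two routine steps close the proof. First, derandomize: the priority order (respectively the grid shifts) is the only randomness, so the method of conditional expectations against a suitable potential — or an explicit, fixed ordering shown to be no worse than a random one — turns $\varphi$ into the deterministic map claimed and upgrades the ``expected number of winners'' bound to the worst-case bound $\Lambda$, at the cost of only constant factors. Second, bound the evaluation cost: to compute $\varphi(x)$ one enumerates the $\poly(d)$ net points inside $B(x,r)$ (respectively the $O(d)$ candidate shifted cells containing $x$) and selects the unique winner, all within space $O(d^{2}\log^{2} d)$ once the description of $\varphi$ is loaded. Assembling the diameter bound, the consistency bound, the deterministic construction, and the space/time accounting yields the stated $(\Gamma,\Lambda)$-hash with $\Lambda=\exp(8d/\Gamma)\cdot O(d\log d)$.
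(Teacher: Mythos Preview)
The paper does not prove \Cref{lemma:consist_hash} at all: it is stated as a direct quotation of \cite[Theorem~5.1]{CJKVY22} and used as a black box, with no accompanying argument or even a proof sketch in this paper. So there is no ``paper's own proof'' to compare your proposal against; your instinct in the first sentence --- that this is a cited result and one should simply point to the source --- is exactly what the paper does, and you could have stopped there.

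That said, a brief comment on the sketch you did write. The outline you give (ball-carving over a net with a priority order, or equivalently shifted grids) is a reasonable high-level description of how consistent/sparse partitions are typically built, and your identification of the crux --- that a naive packing bound only gives $\exp(\Theta(d))$ and one needs a finer ``winners'' argument to get $\exp(O(d/\Gamma))$ --- is on target. But note that the construction this paper \emph{does} spell out in its appendix is the one behind \Cref{lemma:hash} (cited as \cite[Theorem~5.3]{CJKVY22}), which is the hypercube-face-based partition, not a ball-carving scheme; Theorem~5.1 of \cite{CJKVY22}, which is what \Cref{lemma:consist_hash} quotes, is a different construction, and your sketch should not be read as a faithful summary of it without checking the source. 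In any case, for the purposes of this paper a one-line citation suffices.
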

     \section{MPC Algorithms for RS in Low Dimension}
\label{sec:MIS}
\begin{lemma}
    \label{thm:MIS}
    There is a deterministic MPC algorithm that given threshold $\tau>0$, constant $\varepsilon\in (0, 1)$ and dataset $P\subseteq \mathbb{R}^d$ of $n$ points distributed across MPC machines with local memory $s\geq \Omega(\varepsilon^{-1}d)^{\Omega(d)}\cdot \poly(\log n)$, 
computes a $(\tau, (1 + \epsilon)\tau)$-RS for $P$ in $O(\log_{s}n)$ rounds, using total memory $O(n\poly(d\log n))\cdot O(\varepsilon^{-1}d)^{O(d)}$. 
\end{lemma}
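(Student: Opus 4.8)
The plan is to reduce to the case where $P$ sits on a fine grid, then to realize a greedy maximal-independent-set computation through the space partition of \Cref{lemma:hash}, and finally to collapse the $d+1$ sequential phases of that computation into $O(\log_s n)$ rounds by exploiting locality.

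\emph{Rounding.} First I would move every point of $P$ to a fixed representative of $P$ lying in the same cell of the grid $\mathcal{G}_{\epsilon\tau/\sqrt d}$ (such a representative exists since the point is in its own cell; this is the care alluded to in the footnote, ensuring images stay in $P$), obtaining $P'\subseteq P$. Any two points merged this way are within $\ell_2$-distance at most $\epsilon\tau$, the cell diameter. I would then argue that a $\tau$-MIS $S$ of $P'$ is a $(\tau,(1+\epsilon)\tau)$-RS of $P$: it is a $\tau$-IS of $P$ since $S\subseteq P'\subseteq P$ and $S$ is $\tau$-independent; and it is $(1+\epsilon)\tau$-dominating for $P$ since every $p\in P$ has its representative $p'\in P'$ with $\dist(p,p')\le\epsilon\tau$ and $\dist(p',S)\le\tau$. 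So it suffices to compute a $\tau$-MIS of the grid-restricted set $P'$. The rounding also buys the crucial counting bound: by \Cref{lemma:packing} (or directly), every subset of $\mathbb{R}^d$ of diameter $O(\gamma\tau)$ contains at most $(O(\gamma\sqrt d/\epsilon))^d$ points of $P'$.

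\emph{Partition and sequential algorithm.} I would invoke \Cref{lemma:hash} with $\beta=\tau$ and $\ell=\Theta(d^{1.5}\tau)$, obtaining a deterministically computable hash $f$ whose buckets (``regions'') have diameter at most $\ell=O(d^{1.5}\tau)$ and split into $d+1$ groups $\mathcal{W}_0,\dots,\mathcal{W}_d$ with any two regions in the same group more than $\tau$ apart. The sequential greedy algorithm $\mathcal A$ maintains a surviving set $Q$, initially $P'$, and for $i=0,1,\dots,d$: in parallel over all regions $R\in\mathcal{W}_i$ it computes (deterministically, say by lexicographic greedy) a $\tau$-maximal independent set of $Q\cap R$, adds it to the output $S$, and deletes from $Q$ every point within $\tau$ of a newly selected point. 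The $\tau$-separation inside each group makes the per-group selections non-interfering, and a routine check (within-phase independence by construction, cross-phase independence because we delete the $\tau$-ball of each selected point, maximality because every surviving non-selected point is within $\tau$ of a selected one) shows the final $S$ is a $\tau$-MIS of $P'$. Each phase is implementable in $O(\log_s n)$ rounds: by the counting bound each region holds at most $(O(d^2/\epsilon))^d\le s$ points, so we group a region's surviving points onto one machine via sorting, compute locally, and scatter the deletions. This already yields an $O(d\log_s n)$-round algorithm.

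\emph{Collapsing to $O(\log_s n)$ rounds.} The key point is that whether a point of a region $R\in\mathcal{W}_i$ ends up in $S$ depends only on $\mathcal A$'s history restricted to $R$, which is controlled by the selected points of groups $<i$ lying within $\tau$ of $R$; those live in regions intersecting $N^\infty_\tau(R)$, whose own histories are controlled by regions within $N^\infty_\tau$ of them, and so on, a recursion of depth at most $d+1$. Since each region has diameter at most $\ell$ and each level grows the relevant window by $O(\ell+\tau)$, all regions that can influence $R$ lie in an $\ell_\infty$-ball around $R$ of radius $O(d\ell)=O(d^{2.5}\tau)$, a set of diameter $O(d^{2.5}\tau)$ and hence with at most $(O(d^3/\epsilon))^d=(\Omega(\epsilon^{-1}d))^{O(d)}\le s$ points of $P'$. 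So the final algorithm is: for every non-empty region $R$, gather onto one machine all points of $P'$ in this ``influence region'' (using that $f$ is evaluable in $\poly(d)$ space to let each point enumerate the $(\Omega(\epsilon^{-1}d))^{O(d)}$ relevant regions, then a constant number of sort/group steps, $O(\log_s n)$ rounds), locally simulate all $d+1$ phases of $\mathcal A$ restricted to these points, and keep the resulting selection status of the points actually in $R$; output the union over all $R$. Since the influence region is a superset of everything $\mathcal A$ ever consults when deciding points of $R$, this reproduces the global output exactly. Each point of $P'$ lies in the influence region of at most $(\Omega(\epsilon^{-1}d))^{O(d)}$ regions (packing), giving total memory $O(n\poly(d\log n))\cdot O(\epsilon^{-1}d)^{O(d)}$, and the whole algorithm is deterministic since the hash, the per-region greedy, and the sorting primitives all are.

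I expect the third step to require the most care: precisely defining the influence region of a bucket, proving that the local simulation agrees with the global run of $\mathcal A$ on that bucket, and pinning down the depth-$(d+1)$ unrolling so the influence region's diameter stays $\poly(d)\cdot\tau$ and its point count stays within $(\Omega(\epsilon^{-1}d))^{O(d)}$. The rounding reduction, the correctness of the greedy MIS on the \Cref{lemma:hash} partition, and the standard MPC grouping/sorting bookkeeping should all be routine.
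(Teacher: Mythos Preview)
Your proposal is correct and follows essentially the same approach as the paper: the rounding to grid representatives (paper's \Cref{lemma:rounding}), the sequential $(d{+}1)$-phase greedy MIS over the groups $\mathcal{W}_0,\dots,\mathcal{W}_d$ of \Cref{lemma:hash} (paper's \Cref{alg:seq_MIS}), and the localization via an $O(d^{2.5}\tau)$-radius influence region simulated on one machine per bucket (paper's \Cref{alg:loc_MIS} and \Cref{lemma:couple-seq-loc}) all match. Your anticipated point of care---that the local simulation on the influence region exactly reproduces the global output on $R$---is precisely what the paper isolates as \Cref{lemma:couple-seq-loc}, proved by unrolling the dependency chain through the $d{+}1$ levels and using a fixed deterministic MIS rule for consistency.
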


We start with an efficient MPC reduction, stated in \Cref{lemma:rounding}, that turns the input $P$ to a point set that satisfies the following property.
\begin{property}
    \label{prop:packing}
    A point set $P$ satisfies the property, if $\forall x \in \mathbb{R}^d$, $\rho \geq 1$, $B_P(x, \rho\tau) \leq (\epsilon^{-1}\rho d)^{O(d)}$.
\end{property}
\begin{restatable}{lemma}{lemmarounding}
    \label{lemma:rounding}
    There exists an MPC algorithm that computes a point set $P' \subseteq P$ such that $P'$ satisfies \Cref{prop:packing},
    and that any $\tau$-DS for $P'$ is a $(1 + O(\epsilon))\tau$-DS for $P$,
    within round and space bound as in \Cref{thm:MIS}.
\end{restatable}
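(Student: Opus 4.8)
The plan is to discretize $P$ by snapping each point to a fine grid, but — as the footnote in the technical overview hints — with a twist that keeps the image inside $P$. Concretely, let $\delta := \epsilon\tau/\sqrt d$ and consider the grid $\mathcal G_\delta$. For each occupied grid cell (i.e.\ each $g \in \mathcal G_\delta$ with $B_P(g, \delta\sqrt d/2) \cap$ (its Voronoi cell) $\neq\emptyset$), pick an arbitrary \emph{representative} data point $\rep(g) \in P$ lying in that cell, and define $P' := \{\rep(g) : g \text{ occupied}\}$. Thus $P' \subseteq P$, and every $x \in P$ has some $x' \in P'$ in the same cell, so $\dist(x, x') \le \diam(\text{cell}) = \delta\sqrt d = \epsilon\tau$. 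This is the map we will use. I would first verify \Cref{prop:packing}: the points of $P'$ are in distinct cells of side $\delta = \epsilon\tau/\sqrt d$, so any ball $B(x,\rho\tau)$ meets at most $(O(\rho\tau/\delta))^d = (O(\rho\sqrt d/\epsilon))^d = (\epsilon^{-1}\rho d)^{O(d)}$ cells (a volume/packing count, or \Cref{lemma:packing} applied after noting cell centers within the ball are $\delta$-separated), hence $|B_{P'}(x,\rho\tau)| \le (\epsilon^{-1}\rho d)^{O(d)}$ as required.

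Next, the domination transfer. Let $S$ be a $\tau$-DS for $P'$. For any $x \in P$, take its representative $x' \in P'$ (same cell), so $\dist(x,x') \le \epsilon\tau$, and then some $s \in S$ with $\dist(x', s) \le \tau$; by the triangle inequality $\dist(x, s) \le (1+\epsilon)\tau$. Hence $S$ is a $(1+\epsilon)\tau$-DS for $P$ — in fact exactly $(1+\epsilon)\tau$, not merely $(1+O(\epsilon))\tau$, though a reparametrization of $\epsilon$ absorbs the difference either way. (If one instead wants the cleaner statement with additive slack only in the DS parameter, this already suffices; the "$O(\epsilon)$" in the lemma just leaves room for the constant hidden in $\delta$.) Note we do \emph{not} need a reverse transfer for IS here — \Cref{thm:MIS} will compute an RS on $P'$ directly and the IS property on $P'$ is automatically an IS property for a subset of $P$; the only thing that degrades is domination, handled above.

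Finally, the MPC implementation within the round/space budget of \Cref{thm:MIS}. Computing $P'$ is purely local-plus-sorting: each machine computes, for each of its points $p$, the grid cell $\lfloor p/\delta \rfloor$ (an $O(d)$-space, $O(1)$-round operation); then we sort the $n$ points by cell id using the \cite{GSZ11} sorting primitive in $O(\log_s n)$ rounds and $O(n\poly\log n)$ total memory; within each sorted run (one cell), the machine holding it designates the first point as $\rep(\cdot)$ and marks the rest for deletion; a broadcast/aggregation over the run (size possibly $> s$, so done via the standard segmented-aggregation over the sorted array, again $O(\log_s n)$ rounds) propagates the representative. A point learns whether it survives in $P'$ in $O(\log_s n)$ rounds; the total memory is $O(n\poly(d\log n))$, well inside the stated budget. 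The main obstacle — really the only non-routine point — is arranging the representative-selection so that $P' \subseteq P$ (needed because \kCenter and the IS reduction want centers/independent points from $P$, and because later steps compare against $\OPT(P)$): naive rounding to grid \emph{points} would leave $P' \not\subseteq P$, so the per-cell "pick one original point" trick, implemented via sorting-by-cell and segmented min, is the crux; everything else is a triangle-inequality and a packing count.
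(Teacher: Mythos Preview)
Your proposal is correct and follows essentially the same approach as the paper: both map $P$ to the $\epsilon\tau/\sqrt d$-grid, pick one original data point per occupied cell to form $P'\subseteq P$, derive the packing bound from the grid separation via \Cref{lemma:packing} (or an equivalent cell count), get the DS transfer by triangle inequality, and implement the representative selection in MPC via sorting by cell id. The only cosmetic difference is that you bound $\dist(x,x')$ directly by the cell diameter $\epsilon\tau$, whereas the paper routes through the grid point and obtains the slightly looser $(1+2\epsilon)\tau$.
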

    \begin{proof}
        To define $P'$, let $\phi : P \to \mathcal{G}_{\epsilon \tau / \sqrt{d}}$,
        such that for every $x \in P$, $\phi(x)$ maps to the nearest point in $\mathcal{G}_{\epsilon \tau / \sqrt{d}}$ of $x$.
        For every $z \in \phi(P)$, let $\rep(z)$ be an arbitrary fixed point in $\phi^{-1}(z) \cap P$.
        Then, define $P' := \{ \rep(z) : z \in \phi(P) \} \subseteq P$.
        Clearly, this whole process can be done within the claimed round and space as in \Cref{thm:MIS}.
        
        Now fix a $\tau$-DS $S$ for $P'$, and we verify that $S$ is $(1 + \epsilon)\tau$-DS for $P$.
        Observe that for $z \in \phi(P)$, we have $\dist(z, \rep(z)) \leq \epsilon \tau$.
        Then consider $x \in P$, we have
        \begin{align*}
            \dist(x, S)
            &\leq \dist(x, P') + \max_{y \in P'} \dist(y, S) \\
            &= \min_{z \in \phi(P)} \dist(x, \rep(z)) + \tau \\
            &\leq \min_{z \in \phi(P)} \{ \dist(x, z) + \dist(z, \rep(z)) \} + \tau \\
            &\leq (1 + 2\epsilon) \tau.
        \end{align*}
        Finally, to bound $|B_{P'}(x, \rho\tau)|$, observe that
        \begin{align*}
            |B_{P'} (x, \rho \tau)|
            &= |B(x, \rho \tau) \cap \{ \rep(z) : z \in \phi(P) \}| = | \{ z \in \phi(P) : \dist(x, \rep(z)) \leq \rho\tau \} |,
        \end{align*}
        where the second equality follows from the fact that $z \mapsto \rep(z)$ (for $z \in \phi(P)$) is one-to-one correspondence.
        Let $Q :=  \{ z \in \phi(P) : \dist(x, \rep(z)) \leq \rho\tau \} $, and it suffices to bound $|Q|$.
        To this end, we wish to apply \Cref{lemma:packing}.
        Observe that $Q \subseteq \mathcal{G}_{\epsilon \tau / \sqrt d}$.
        To bound the diameter of $Q$, consider $u \neq v \in Q \subseteq \phi(P)$, then by triangle inequality
        \begin{align*}
            \dist(u, v)
            &\leq \dist(u, \rep(u)) + \dist(\rep(u), x) + \dist(x, \rep(v)) + \dist(\rep(v), v) \\
            &\leq 2\epsilon \tau +  2\rho \tau\\
            &= O(\rho)\tau.
        \end{align*}
        Hence, we conclude that $|B_{P'}(x, O(\tau))| = |Q| \leq (\epsilon^{-1} \rho d)^{O(d)}$ by \Cref{lemma:packing}.
        This finishes the proof.
    \end{proof}

Consider a $P'$ as in \Cref{lemma:rounding},
since a $\tau$-MIS on $P'$ is also a $\tau$-DS for $P'$,
then it implies that a $\tau$-MIS on $P'$ is a $(\tau, (1 + O(\epsilon))\tau)$-RS for $P$.
Therefore,  in the remainder of the proof,
it suffices to design an MPC algorithm that finds a $\tau$-MIS on a point set $P$ with the assumption that $P$ satisfies \Cref{prop:packing} (and we omit the notation $P'$).

To proceed, our MIS algorithm relies on the following notion of geometric decomposition.

\begin{definition}
    \label{def:decomposition}
    A collection $\mathcal{W}:=\left\{S_{1}, S_{2}, \ldots\right\}$  of (disjoint) point sets of $\mathbb{R}^{d}$ is an $\alpha$-bounded $\beta$-separated decomposition, or simply $(\alpha, \beta)$-decomposition, if it satisfies the following.
    \begin{enumerate}
        \item For every $S\in \mathcal{W}$, $\diam(S)\leq \alpha$.
        \item For every distinct $S_{i}, S_{j}\in \mathcal{W}$, $\dist(S_{i}, S_{j})>\beta$.
    \end{enumerate}
    In addition, we say that a set of points $X\subseteq \mathbb{R}^{d}$ admits an $(\alpha, \beta)$-decomposition, if $X$ can be \emph{partitioned} into some $(\alpha, \beta)$-decomposition $\mathcal{W}$.
\end{definition}

\begin{fact}
    \label{fact:partition}
    For threshold $\tau > 0$, let $f$ be the hash guaranteed by \Cref{lemma:hash} with parameter $\ell = O(d^{1.5}\tau)$.
    Then by the second property of \Cref{lemma:hash},
    the buckets of the hash can be partitioned into $d + 1$ groups $\{\mathcal{W}_i\}_{i=0}^d$, 
    such that the collection of the buckets $\mathcal{W}_i$ is an $(O(d^{1.5})\tau, \tau)$-decomposition.
\end{fact}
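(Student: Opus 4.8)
The plan is to obtain \Cref{fact:partition} as an immediate corollary of \Cref{lemma:hash}, instantiated with separation parameter $\beta := \tau$ and diameter parameter $\ell := \Theta(d^{1.5}\tau)$. The first step is to check that this choice is admissible: \Cref{lemma:hash} requires $\ell \geq \Theta(d^{1.5}\beta)$, and with $\beta = \tau$ this reads $\ell \geq \Theta(d^{1.5}\tau)$, which holds by construction. This furnishes a hash $f : \mathbb{R}^{d} \to \mathbb{R}^{d}$ to which all three properties of \Cref{lemma:hash} apply.

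Next I would translate the first two properties of \Cref{lemma:hash} into the two defining conditions of an $(\alpha,\beta)$-decomposition in the sense of \Cref{def:decomposition}. The first property states that each bucket $f^{-1}(u)$ has $\diam(f^{-1}(u)) \leq \ell = O(d^{1.5})\tau$; since this holds for every bucket, every set in every group $\mathcal{W}_i$ meets the $\alpha$-bounded condition with $\alpha = O(d^{1.5})\tau$. The second property supplies the partition of the bucket set into $d+1$ groups $\{\mathcal{W}_i\}_{i=0}^d$ and guarantees that any two distinct buckets $S \neq S'$ in the same group satisfy $\dist(S, S') \geq \dist_\infty(S, S') > \beta = \tau$, which is exactly the $\beta$-separated condition with $\beta = \tau$. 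Finally, the buckets are pairwise disjoint because they are the distinct fibers $f^{-1}(u)$ of a function, so each $\mathcal{W}_i$ is genuinely a collection of disjoint point sets. Combining these observations, each $\mathcal{W}_i$ is an $(O(d^{1.5})\tau,\tau)$-decomposition, which is the claim.

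There is no real obstacle here; the only point requiring a moment's care is lining up the parameters of \Cref{lemma:hash} with those of \Cref{def:decomposition}, in particular noting that the $\ell_\infty$-separation delivered by \Cref{lemma:hash} dominates the $\ell_2$-separation asked for in \Cref{def:decomposition}, so nothing is lost in the translation.
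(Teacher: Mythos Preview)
Your proposal is correct and matches the paper's treatment: the paper gives no separate proof of \Cref{fact:partition}, presenting it as an immediate consequence of \Cref{lemma:hash} with $\beta=\tau$ and $\ell=O(d^{1.5}\tau)$, exactly as you spell out. The only additional detail you make explicit that the paper leaves implicit is the $\ell_\infty$-to-$\ell_2$ separation inequality and the disjointness of fibers, both of which are indeed trivial.
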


\paragraph{Proof overview.} 
Before we present the final algorithm, we start with a more sequential version which is listed in \Cref{alg:seq_MIS}, whose correctness is easily analyzed in \Cref{lemma:seq_MIS_correctness}. 
This algorithm relies on the hash guaranteed by \Cref{fact:partition} whose buckets can be partitioned into $d + 1$ groups.
It processes these $d + 1$ groups $\mathcal{W}_0, \ldots, \mathcal{W}_d$ one by one, where in a group $\mathcal{W}_i$ the MIS of all buckets $S \in \mathcal{W}_i$ are computed.
Our final $O(1)$-round algorithm is a parallel implementation of it, where we utilize the locality of buckets among $\mathcal{W}_i$'s.

\begin{algorithm}
\caption{Basic sequential $\tau$-MIS algorithm for the set $P$ satisfying \Cref{prop:packing}}
\label{alg:seq_MIS}
\begin{algorithmic}[1]

    \State let $f$ be the hash and let $\mathcal{W}_0, \ldots, \mathcal{W}_d$ be the groups of buckets from \Cref{lemma:hash} and \Cref{fact:partition}
    
    \Comment{so each $\mathcal{W}_i$ is an $(O(d^{1.5})\tau, \tau)$-decomposition}
      
    \For{$i \gets 0, \ldots, d$}
    \For{$S \in \mathcal{W}_i$}
    \State let $\Iseq_{i, S} \gets$ $\MIS_\tau(S \cap P \setminus B(\Iseq_{i - 1}, \tau))$
    \label{line:Iseq_MIS_def}
    \Comment{define $\Iseq_{-1} \gets \emptyset$}
    \EndFor
    \State let $\Iseq_i \gets \Iseq_{i - 1} \cup \bigcup_{S \in \mathcal{W}_i}\Iseq_{i, S}$
    \label{line:Iseq_i_def}
    \EndFor
    \State return $\Iseq_d$
    \label{line:Iseq_d_def}
\end{algorithmic}
\end{algorithm}

\begin{lemma}
\label{lemma:seq_MIS_correctness}
    \Cref{alg:seq_MIS} returns a $\tau$-MIS for $P$. 
\end{lemma}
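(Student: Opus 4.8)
The plan is to prove two things about $\Iseq_d$: that it is a $\tau$-independent set for $P$, and that it is maximal (equivalently, a $\tau$-DS for $P$). Throughout, I would keep in mind the structural fact that each $\mathcal{W}_i$ is a $(O(d^{1.5})\tau, \tau)$-decomposition, so any two distinct buckets in the same group are at distance $> \tau$; this is what lets the per-bucket computations in a single iteration of the outer loop not interfere with one another.

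\textbf{Independence.} I would show by induction on $i$ that $\Iseq_i$ is a $\tau$-IS. The base case $\Iseq_{-1} = \emptyset$ is trivial. For the inductive step, take distinct $x, y \in \Iseq_i = \Iseq_{i-1} \cup \bigcup_{S \in \mathcal{W}_i} \Iseq_{i,S}$. There are three cases. If both lie in $\Iseq_{i-1}$, apply the inductive hypothesis. If $x \in \Iseq_{i-1}$ and $y \in \Iseq_{i,S}$ for some $S \in \mathcal{W}_i$, then by \cref{line:Iseq_MIS_def} we have $y \in S \cap P \setminus B(\Iseq_{i-1}, \tau)$, so $y \notin B(x, \tau)$, i.e. $\dist(x,y) > \tau$. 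If $x \in \Iseq_{i,S}$ and $y \in \Iseq_{i,S'}$ with $S, S' \in \mathcal{W}_i$: when $S = S'$, both are in $\MIS_\tau(S \cap P \setminus B(\Iseq_{i-1},\tau))$, which is a $\tau$-IS by definition of MIS; when $S \neq S'$, the separation property of the decomposition gives $\dist(x,y) \geq \dist(S, S') > \tau$. This covers all cases, so $\Iseq_d$ is a $\tau$-IS.

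\textbf{Maximality.} I would show, again by induction on $i$, that every point of $P$ lying in a bucket of $\mathcal{W}_0 \cup \dots \cup \mathcal{W}_i$ is within distance $\tau$ of $\Iseq_i$. Since the buckets of $f$ partition $\mathbb{R}^d$, every point of $P$ lies in some bucket, which belongs to exactly one group $\mathcal{W}_i$ with $0 \le i \le d$; so the claim at $i = d$ together with $\Iseq_i \subseteq \Iseq_d$ gives that $\Iseq_d$ is a $\tau$-DS, hence (with independence) a $\tau$-MIS. For the inductive step, let $p \in P$ lie in a bucket of $\mathcal{W}_0 \cup \dots \cup \mathcal{W}_i$. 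If that bucket is in $\mathcal{W}_0 \cup \dots \cup \mathcal{W}_{i-1}$, the inductive hypothesis plus $\Iseq_{i-1} \subseteq \Iseq_i$ finishes it. Otherwise $p \in S$ for some $S \in \mathcal{W}_i$. If $p \in B(\Iseq_{i-1}, \tau)$, then $\dist(p, \Iseq_{i-1}) \le \tau$ and $\Iseq_{i-1} \subseteq \Iseq_i$. If not, then $p \in S \cap P \setminus B(\Iseq_{i-1}, \tau)$, which is precisely the ground set on which $\Iseq_{i,S}$ is a maximal independent set; maximality then forces $\dist(p, \Iseq_{i,S}) \le \tau$ (otherwise $p$ could be added), and $\Iseq_{i,S} \subseteq \Iseq_i$.

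The argument is essentially bookkeeping, so I do not expect a serious obstacle; the one place to be careful is the maximality step, where I must make sure the ``removed'' set $B(\Iseq_{i-1}, \tau)$ is handled consistently — a point removed because it is $\tau$-close to an earlier selected point is fine, and a point kept is dominated within its own bucket's local MIS. One should also note that the finiteness of $P$ (and \cref{lemma:packing} / \Cref{prop:packing}) guarantees each local $\MIS_\tau(\cdot)$ is well-defined on a finite point set, though this is not really needed for correctness of the returned set, only for the later MPC implementation.
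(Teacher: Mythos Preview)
Your proof is correct and follows the same approach as the paper: establish that $\Iseq_d$ is both $\tau$-independent and maximal, using the separation of buckets within a group and the fact that the groups partition $\mathbb{R}^d$. The paper's own proof is a three-sentence sketch that declares independence ``clear'' and maximality immediate from the partition; your version simply spells out the case analysis and the induction that the paper leaves implicit.
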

\begin{proof}
    The algorithm processes each $\mathcal{W}_{i}$ sequentially, remove the points that are within distance $\tau$ from the current independent set $\Iseq_{i-1}$ from $\mathcal{W}_{i}$, and finds MIS on the remaining points of buckets in $\mathcal{W}_{i}$. 
    It is clear that the found sets $\Iseq_{i}$ for $i= 0, \ldots, d$ are all independent, and in particular the returned set $\Iseq_{d}$ is independent. 
    Furthermore, since all $\mathcal{W}_{i}$'s are processed and that they form a partition of $\mathbb{R}^{d}$, $\Iseq_{d}$ is maximal. 
    This finishes the proof. 
\end{proof}

Notice that every distinct buckets $S, S' \in \mathcal{W}_i$ are $\tau$ apart.
Hence, in \Cref{alg:seq_MIS}, the MIS $\Iseq_{i,S}$ may be computed in parallel for all $S \in \mathcal{W}_i$.
However, these MIS also needs to know $\Iseq_{i - 1}$, which depends on the buckets in $\mathcal{W}_0, \ldots, \mathcal{W}_{i - 1}$,
and this requires to (sequentially) examining $\mathcal{W}_0, \ldots, \mathcal{W}_{i - 1}$ before computing $\Iseq_{i ,S}$.

A natural idea for avoiding this sequential computing, is to prune the buckets in sets $\mathcal{W}_0, \ldots, \mathcal{W}_{i - 1}$, so that only those ``relevant'' to $S$ are kept. 
Specifically, we are to figure out subsets $\mathcal{W}_j' \subseteq \mathcal{W}_j$ (which depends on $i$ and $S$) to replace $\mathcal{W}_j$, such that it suffices to simulate the MIS on buckets in $\mathcal{W}_j'$ for computing $\Iseq_{i, S}$. 
We shall see that these subsets $\mathcal{W}_j'$ are not only of small size, i.e., $O(\varepsilon^{-1}d)^{O(d)}$, but also can be easily identified since they consist of buckets that are ``close enough'' to $S$.

\paragraph{Making use of locality of $\mathcal{W}_i$'s.}
We implement this plan and derive a local version of \Cref{alg:seq_MIS}, listed in \Cref{alg:loc_MIS}. 
This algorithm aims to simulate the MIS, $\Iseq_{i, S}$, for every $0 \leq i \leq d$ and $S \in \mathcal{W}_i$. 
The difference to \Cref{alg:seq_MIS} is that it only uses the restricted 
$\mathcal{W}_j' = \{ S' \in \mathcal{W}_j : \dist(S', S) \leq O(d^{2.5}) \cdot \tau \}$ for $j \leq i - 1$ in replacement of buckets in $\mathcal{W}_j$. 
Next, we show that this choice of  $\mathcal{W}_j' = \{ S' \in \mathcal{W}_j : \dist(S', S) \leq O(d^{2.5}) \cdot \tau \}$ indeed suffices for simulating \Cref{alg:seq_MIS} locally.

\begin{algorithm}[ht]
\caption{Localized $\tau$-MIS algorithm for the set $P$ satisfying \Cref{prop:packing}}
\label{alg:loc_MIS}
\begin{algorithmic}[1]

    \State let $f$ be the hash and let $\mathcal{W}_0, \ldots, \mathcal{W}_d$ be the groups of buckets from \Cref{lemma:hash} and \Cref{fact:partition}
    
    \Comment{so each $\mathcal{W}_i$ is an $(O(d^{1.5})\tau, \tau)$-decomposition}
    
    \For{$i \gets 0, \ldots, d$}
    \For{$S\in \mathcal{W}_i$}
    \State for $j \leq i - 1$, let $\mathcal{W}'_j \gets \{ S' \in  \mathcal{W}_j : \dist(S', S) \leq O(d^{2.5}) \cdot \tau \}$ be the relevant set of buckets from $\mathcal{W}_j$
    \label{line:Wjprime}
    
    \State let $\Iloc_0 \gets \emptyset$ 
    \For{$j \gets 0, \ldots, i - 1$}
    \State let $\Iloc_j \gets \Iloc_{j - 1} \cup \bigcup_{S' \in \mathcal{W}_j^{'}} \MIS_{\tau}(S'\cap P\setminus B(\Iloc_{j-1}, \tau))$ 
    \label{line:Iloc_i_def}
    \EndFor
    \State let $\Iloc_{i, S} \gets \MIS_\tau(S \cap P \setminus B(\Iloc_{i - 1}, \tau))$
    \label{line:Iloc_MIS_def}
    \EndFor
    \EndFor
    \State return $\bigcup_{0\leq i \leq d, S \in \mathcal{W}_i} \Iloc_{i, S}$
    \label{line:Iloc_d_def}
\end{algorithmic}
\end{algorithm}

\begin{lemma}
\label{lemma:couple-seq-loc}
    For every $0 \leq i \leq d$, $S \in \mathcal{W}_i$, we have $\Iseq_{i, S} = \Iloc_{i, S}$, where $\Iseq$ is as in \Cref{alg:seq_MIS} and $\Iloc$ is as in \Cref{alg:loc_MIS}.
\end{lemma}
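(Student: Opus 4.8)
The plan is to exploit the fact that \Cref{alg:seq_MIS} and \Cref{alg:loc_MIS} differ in exactly one respect: when \Cref{alg:loc_MIS} reconstructs, for a fixed bucket $S\in\mathcal W_i$, the partial independent set visible before $S$ is processed, it replaces each earlier group $\mathcal W_j$ ($j\le i-1$) by the restricted family $\mathcal W_j'=\{S'\in\mathcal W_j:\dist(S',S)\le O(d^{2.5})\tau\}$. So I would fix $i$ and $S\in\mathcal W_i$, write $\Iloc_0,\dots,\Iloc_{i-1}$ for the sets built in that iteration of \Cref{alg:loc_MIS}, and show that each $\Iloc_j$ agrees with $\Iseq_j$ on a shrinking family of neighborhoods of $S$; the conclusion for $j=i-1$ then gives the lemma. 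The single structural property I would use about the local subroutine is its locality: writing $N_r(T):=\{y\in\mathbb{R}^d:\dist(y,T)\le r\}$, the set $B(I,\tau)$ meets a bucket $S'$ only through $I\cap N_\tau(S')$, so $\MIS_\tau\bigl((S'\cap P)\setminus B(I,\tau)\bigr)$ depends only on $I\cap N_\tau(S')$ (here $\MIS_\tau(\cdot)$ denotes a fixed deterministic function of its argument set). Together with the bucket diameter bound $\diam(S')\le D:=O(d^{1.5})\tau$ from \Cref{fact:partition}, this yields the inclusion $N_\tau(S')\subseteq N_{\dist(S',S)+D+\tau}(S)$ that drives everything.

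Guided by this, I would introduce radii $\rho_j:=\tau+(i-1-j)(D+\tau)$ for $-1\le j\le i-1$. These are non-increasing in $j$, satisfy $\rho_{i-1}=\tau$ and the telescoping identity $\rho_{j-1}=\rho_j+D+\tau$, and, using $i\le d$ and $D=O(d^{1.5})\tau$, obey $\rho_j\le\rho_0\le O(d^{2.5})\tau$; I would take the hidden constant in the definition of $\mathcal W_j'$ in \Cref{line:Wjprime} at least this large, so that $\{S'\in\mathcal W_j:\dist(S',S)\le\rho_j\}\subseteq\mathcal W_j'$ for every $j$. The core of the argument is then the claim that
\[
\Iloc_j\cap N_{\rho_j}(S)=\Iseq_j\cap N_{\rho_j}(S)\qquad\text{for every }-1\le j\le i-1,
\]
proved by induction on $j$. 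The base case $j=-1$ is trivial, since $\Iloc_{-1}=\Iseq_{-1}=\emptyset$ (under the convention $\Iloc_{-1}=\Iseq_{-1}=\emptyset$). For the inductive step I would decompose $\Iloc_j=\Iloc_{j-1}\cup\bigcup_{S'\in\mathcal W_j'}\MIS_\tau\bigl((S'\cap P)\setminus B(\Iloc_{j-1},\tau)\bigr)$ and write $\Iseq_j$ the same way over $\mathcal W_j$. The carryover terms $\Iloc_{j-1}$ and $\Iseq_{j-1}$ agree on $N_{\rho_j}(S)\subseteq N_{\rho_{j-1}}(S)$ by the inductive hypothesis. Among the fresh MIS terms, any bucket $S'$ with $\dist(S',S)>\rho_j$ contributes nothing inside $N_{\rho_j}(S)$, because its MIS lies in $S'$ and $S'$ is disjoint from $N_{\rho_j}(S)$; hence only buckets with $\dist(S',S)\le\rho_j$ matter, and these all lie in $\mathcal W_j'$. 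For such an $S'$, since $N_\tau(S')\subseteq N_{\dist(S',S)+D+\tau}(S)\subseteq N_{\rho_j+D+\tau}(S)=N_{\rho_{j-1}}(S)$, the inductive hypothesis at level $j-1$ gives $\Iloc_{j-1}\cap N_\tau(S')=\Iseq_{j-1}\cap N_\tau(S')$, and then locality of $\MIS_\tau$ forces the fresh term to equal $\Iseq_{j,S'}$. Intersecting both decompositions with $N_{\rho_j}(S)$ makes them coincide, completing the induction.

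Finally, taking $j=i-1$ in the claim gives $\Iloc_{i-1}\cap N_\tau(S)=\Iseq_{i-1}\cap N_\tau(S)$, so one last use of locality yields $\Iloc_{i,S}=\MIS_\tau\bigl((S\cap P)\setminus B(\Iloc_{i-1},\tau)\bigr)=\MIS_\tau\bigl((S\cap P)\setminus B(\Iseq_{i-1},\tau)\bigr)=\Iseq_{i,S}$. I expect the main obstacle to be the radius bookkeeping rather than any individual estimate: the dependency $\Iseq_j\to\Iseq_{j-1}$ leaks an additive $D+\tau$ per level, so the invariant must be carried on a \emph{shrinking} family $\{N_{\rho_j}(S)\}_j$ — each level needs agreement on a strictly larger neighborhood than the next — and one must verify that the outermost telescoped radius $\rho_0$ falls within the $O(d^{2.5})\tau$ cutoff hard-wired into \Cref{line:Wjprime}, which is exactly where that exponent $2.5$ comes from. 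One should also bear in mind that $\Iloc_j$ is genuinely incorrect far from $S$ (it is assembled from a strict subfamily of buckets), so the invariant cannot be strengthened to global equality.
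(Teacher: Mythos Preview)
Your proposal is correct and follows essentially the same approach as the paper: both exploit that the dependency of $\Iseq_{i,S}$ on earlier levels propagates by at most $\tau+\diam(\text{bucket})=O(d^{1.5})\tau$ per level, so after $i\le d+1$ levels only buckets within $O(d^{2.5})\tau$ of $S$ matter. The paper argues this informally (``apply this argument recursively''), whereas you package it as a clean induction on $j$ with the explicit shrinking radii $\rho_j=\tau+(i-1-j)(D+\tau)$ and the invariant $\Iloc_j\cap N_{\rho_j}(S)=\Iseq_j\cap N_{\rho_j}(S)$; your version is in fact more rigorous than the paper's sketch, but the underlying idea and the resulting $O(d^{2.5})\tau$ cutoff are identical.
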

\begin{proof}
Consider some $0 \leq i \leq d$ and $S \in \mathcal{W}_i$. 
By definition, $\Iseq_{i, S} = \MIS_\tau(S \cap P \setminus B(\Iseq_{i-1}, \tau)) = \MIS_\tau(S \cap P \setminus B( \bigcup_{j \leq i-1, S' \in \mathcal{W}_j} \Iseq_{j, S'}, \tau))$. 
By the first equality, we know that $\Iseq_{i, S} \subseteq S$, and by the second equality, we can see that $\Iseq_{i, S}$ only depends on $\Iseq_{j, S'}$'s with $ j \leq i-1, S' \in \mathcal{W}_j$ such that  $B(\Iseq_{j, S'}, \tau) \cap S \neq \emptyset$, which is equivalent to $\dist(\Iseq_{j, S'}, S) \leq \tau$. 

Now, apply this argument again on each abovementioned $\Iseq_{j, S'}$, it holds that such $\Iseq_{j, S'}$ only depends on those $\Iseq_{j', S''}$ with $j' \leq j - 1, S'' \in \mathcal{W}_{j'}$ such that $\dist(S'', S') \leq \tau$ which further implies $\dist(S'', S) \leq 2\tau + \diam(S')\leq 2\tau + O(d^{1.5})\tau$, where the last inequality holds by \Cref{fact:partition}. 
Apply this argument recursively, one can eventually conclude that $\Iseq_{i, S}$ only depends on $\Iseq_{j, S'}$'s such that $j \leq i - 1, S' \in \mathcal{W}_j$ and $\dist(S', S) \leq i \cdot \tau + (i-1)O(d^{1.5})\tau\leq O(d^{2.5})\tau$, since there is at most $d+1$ groups.

Therefore, to evaluate $\Iseq_{i, S}$, one only needs to know buckets $\mathcal{W}_j' = \{ S' \in \mathcal{W}_j : \dist(S', S) \leq O(d^{2.5}) \cdot \tau \}$, simulate \Cref{alg:seq_MIS} on $\mathcal{W}_j'$ for $j \leq i - 1$ to obtain some $I'$, remove from $S$ the points that are within $I'$, and find $\MIS_\tau(S \cap P \setminus B(I', \tau))$. 
Finally, to obtain $\Iseq_{i, S} = \Iloc_{i, S}$, it remains to ensure that the MIS on the same subset in both \Cref{alg:seq_MIS} and \Cref{alg:loc_MIS} are the same/consistent.
Luckily, this can be trivially achieved since one can use a deterministic greedy maximal independent set algorithm for the purpose. 
This finishes the proof of \Cref{lemma:couple-seq-loc}. 
\end{proof}

\begin{lemma}
\label{lemma:relevant_buckets}
    For each $0\leq j\leq d-1$, $\sum_{S'\in \mathcal{W}_{j}'} |S'\cap P|\leq O(\varepsilon^{-1}d)^{O(d)}$ where $\mathcal{W}'_{j}$ is as in \Cref{alg:loc_MIS} line~\ref{line:Wjprime}.  
\end{lemma}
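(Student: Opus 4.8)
The plan is to bound $\sum_{S' \in \mathcal{W}_j'} |S' \cap P|$ by first controlling the total \emph{number} of buckets in $\mathcal{W}_j'$, and then invoking \Cref{prop:packing} to bound the number of points that can lie in their union. Recall that $\mathcal{W}_j' = \{ S' \in \mathcal{W}_j : \dist(S', S) \leq O(d^{2.5})\tau \}$, and that $S \in \mathcal{W}_i$ has $\diam(S) \leq O(d^{1.5})\tau$ by \Cref{fact:partition}. Fix an arbitrary point $x_0 \in S$. Then every bucket $S' \in \mathcal{W}_j'$ satisfies, by triangle inequality, that every point of $S'$ is within $\dist(S', S) + \diam(S) + \diam(S') \le O(d^{2.5})\tau + O(d^{1.5})\tau + O(d^{1.5})\tau = O(d^{2.5})\tau$ of $x_0$. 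Hence $\bigcup_{S' \in \mathcal{W}_j'} (S' \cap P) \subseteq B_P(x_0, O(d^{2.5})\tau)$, and so by \Cref{prop:packing} (applied with $\rho = O(d^{2.5})$), we get $\sum_{S' \in \mathcal{W}_j'} |S' \cap P| \le |B_P(x_0, O(d^{2.5})\tau)| \le (\epsilon^{-1} \cdot O(d^{2.5}) \cdot d)^{O(d)} = O(\epsilon^{-1}d)^{O(d)}$, which is exactly the claimed bound. (The polynomial-in-$d$ blowup inside the base is absorbed into the $O(d)$ exponent, since $(\poly(d))^{O(d)} = d^{O(d)}$.)

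A second, even simpler route avoids bounding the number of buckets altogether and goes directly through the packing bound: since the $S' \cap P$ for $S' \in \mathcal{W}_j'$ are pairwise disjoint (they are distinct buckets of a hash, hence disjoint by construction) and all contained in $B_P(x_0, O(d^{2.5})\tau)$, their sizes sum to at most $|B_P(x_0, O(d^{2.5})\tau)|$, and \Cref{prop:packing} finishes it. I would present this version as the main argument, since it is the most direct and does not require separately reasoning about bucket counts. The only place one must be careful is in assembling the $O(d^{2.5})\tau$ radius: the $\dist(S', S) \le O(d^{2.5})\tau$ term from the definition of $\mathcal{W}_j'$ already dominates the $\diam(S), \diam(S') = O(d^{1.5})\tau$ terms, so the radius stays $O(d^{2.5})\tau$ and $\rho = O(d^{2.5})$ is a valid (constant-in-$n$, polynomial-in-$d$) choice for \Cref{prop:packing}.

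The main (and really the only) obstacle is bookkeeping: one must make sure that the constant hidden in the ``$O(d^{2.5})$'' radius of $\mathcal{W}_j'$ in line~\ref{line:Wjprime} of \Cref{alg:loc_MIS} is consistent with the one that came out of the recursion in \Cref{lemma:couple-seq-loc} (there the bound was $i \cdot \tau + (i-1) \cdot O(d^{1.5})\tau \le O(d^{2.5})\tau$ using $i \le d+1$), and that \Cref{prop:packing} is quantified over \emph{all} $\rho \ge 1$, so feeding it a $\rho$ that is polynomial in $d$ is legitimate. No genuinely hard step is involved; this is a one-paragraph packing argument. I would write it as: fix $x_0 \in S$, observe containment in a single ball $B_P(x_0, O(d^{2.5})\tau)$ via triangle inequality, note disjointness of the $S' \cap P$, and apply \Cref{prop:packing}.
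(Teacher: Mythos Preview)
Your proposal is correct and takes essentially the same approach as the paper: convert the sum to the size of a union via disjointness of the buckets, use the triangle inequality to contain everything in a ball (or equivalently bound the diameter) of radius $O(d^{2.5})\tau$, and invoke \Cref{prop:packing}. The paper's proof bounds $\diam\bigl(\bigcup_{S'\in\mathcal{W}_j'} S'\bigr)$ directly rather than fixing a center $x_0$, but this is cosmetic.
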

\begin{proof}
Consider some $0\leq i\leq d$ and $S\in \mathcal{W}_{i}$. 
Fix a $j\leq i-1$. 
Recall that $\mathcal{W}'_{j} = \{S'\in \mathcal{W}_{j}: \dist(S', S)\leq O(d^{2.5})\tau \}$ as in line~\ref{line:Wjprime}. 
Since the buckets in the same $\mathcal{W}_{j'}$ for every $j'$ are at least $\tau$ apart, we have that $\sum_{S'\in \mathcal{W}_{j}'} |S'\cap P| = |\bigcup_{S'\in \mathcal{W}_{j}'} (S'\cap P)|$. 
Fix any two points $x\neq y$ that come from buckets $S_{1}, S_{2}\in \mathcal{W}_{j}'$ respectively. 
Recall that each bucket has a diameter $O(d^{1.5})\tau$. 
Then, by the triangle inequality and the definition of $\mathcal{W}_{j}'$, we have that 
$$\dist(x, y)\leq \diam(S_{1}) + \dist(S_{1}, S) + \diam(S) + \dist(S, S_{2}) + \diam(S_{2})\leq O(d^{2.5})\tau.$$
This means that the diameter of $\bigcup_{S'\in \mathcal{W}_{j}'} S'$ is at most $O(d^{2.5})\tau$. 
Then, by \Cref{prop:packing}, we have that $|\bigcup_{S'\in \mathcal{W}_{j}'} (S'\cap P)| = |\bigcup_{S'\in \mathcal{W}_{j}'} S' \cap P|\leq O(\varepsilon^{-1}d)^{O(d)}$.
This finishes the proof of \Cref{lemma:relevant_buckets}.  
\end{proof}

\paragraph{MPC implementation.}
    Our final MPC algorithm is an MPC implementation of \Cref{alg:loc_MIS}.
    We do the two for-loops, i.e., for $i$ and $S$, in parallel,
    and run the steps inside the for-loops, i.e., the steps to compute $\Iloc_{i, S}$, only on a single machine.
    This procedure yields the same output as \Cref{alg:loc_MIS}, whose correctness has been analyzed in the above.

\paragraph{Space and round analysis.}
    The local space usage is dominated by
    computing $\Iloc_{i, S}$ (for some $i$ and $S \in \mathcal{W}_i$) in a single machine,
    which is dominated by $O(d\cdot \sum_{S'\in \mathcal{W}_j'}|S'\cap P|)$, for $\mathcal{W}_j'$ in line~\ref{line:Wjprime}. 
    By \Cref{lemma:relevant_buckets} one concludes that for each $0\leq j\leq d-1$, $\sum_{S'\in \mathcal{W}_j'}|S'\cap P| \leq O(\varepsilon^{-1}d)^{O(d)}$. 
This concludes the local space requirement.
    This also implies the total space is bounded by 
    $O(n\poly(d\log n))\cdot O(\varepsilon^{-1}d)^{O(d)}$, since the computation of $\mathcal{W}_j'$ requires to copy data points. 
    For the round complexity,
    since we do parallel-for loops, the round complexity is dominated by the steps inside each loop.
    These steps can be done in $O(\log_s n)$ rounds,
    since $\mathcal{W}_j'$ can be evaluated locally using the (data-oblivious) hash $f$,
    and other steps mostly require standard procedure of fetching the points from a certain bucket to a machine.

    This finishes the proof of \Cref{thm:MIS}. 
    \qed

    \section{MPC Algorithms for Approximate $\MDS$ in Low Dimension}
\label{sec:mds}

\begin{restatable}{lemma}{thmmds}
    \label{thm:mds} 
    There is a deterministic MPC algorithm that given threshold $\tau > 0$,  parameter $\varepsilon\in(0,1)$ and dataset $P\subseteq \mathbb{R}^d$ of $n$ points distributed across MPC machines with local memory $s\geq \Omega(\varepsilon^{-1}d)^{\Omega(d)}\poly(\log n)$, 
computes a $(1 + \epsilon)\tau$-DS $S \subset \mathbb{R}^{d}$ for $P$ such that $|S| \leq (1 + \epsilon) |\MDS_{\tau}(P)|$ in $O(\log_{s}n)$ rounds, using total memory $O(n \poly(\log n))\cdot O(\varepsilon^{-1}d)^{O(d)}$. 
\end{restatable}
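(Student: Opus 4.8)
The plan is to first invoke \Cref{lemma:rounding} to replace $P$ by a subset $P'\subseteq P$ satisfying \Cref{prop:packing}: any $\tau$-DS for $P'$ is a $(1+O(\epsilon))\tau$-DS for $P$, and since $P'\subseteq P$ every $\tau$-DS of $P$ dominates $P'$ as well, so $|\MDS_\tau(P')|\le|\MDS_\tau(P)|$. Thus, after rescaling $\epsilon$ by a constant, it suffices to produce for an input $P$ obeying \Cref{prop:packing} a $\tau$-DS $S$ with $|S|\le(1+\epsilon)|\MDS_\tau(P)|$ within the stated resources.

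\emph{The algorithm I would use.} Set $\beta:=2\tau$ and apply \Cref{lemma:hash} with $\ell:=\Theta(d^{3.5}\tau/\epsilon)$ (which is $\ge\Theta(d^{1.5}\beta)$), obtaining a hash $f$; write $z:=\ell/\sqrt d$ and $b:=d\beta+\tau=O(d\tau)$. For a shift $v\in\mathbb{R}^d$ let $f_v(x):=f(x-v)+v$; its buckets are the translates $\{f^{-1}(u)+v\}$, and by translation-equivariance $f_v$ still satisfies all three properties of \Cref{lemma:hash}, the Cartesian-power set in the third property becoming $v+L(z,2b)^d$. Given $v$, the inner routine hashes $P$ by $f_v$, routes the points of each nonempty bucket $R$ onto a single machine (feasible since $\diam(R)\le\ell$ and \Cref{prop:packing} gives $|P\cap R|\le(d/\epsilon)^{O(d)}$), computes locally an exact minimum $\tau$-DS $S_R\subset\mathbb{R}^d$ of $P\cap R$ (a finite computation, since minimum $\tau$-domination of a finite set reduces to partitioning it into parts of enclosing-ball radius $\le\tau$), and returns $S_v:=\bigcup_R S_R$, which is plainly a $\tau$-DS of $P$. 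The overall algorithm runs this inner routine in parallel for every $v$ in the grid $G:=\{0,\Delta,\dots,(M-1)\Delta\}^d$ with $M:=\Theta(d^2/\epsilon)$ and $\Delta:=z/M$, and outputs the $S_v$ of minimum cardinality (breaking ties lexicographically in $v$), so it is deterministic.

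\emph{Size bound, the heart of the argument.} Let $O^*$ be an optimal $\tau$-DS of $P$. For any bucket $R$ of $f_v$, each $p\in P\cap R$ has some $c\in O^*$ with $\|p-c\|_2\le\tau$, hence $\|p-c\|_\infty\le\tau$ and $c\in N^\infty_\tau(R)$; so $O^*\cap N^\infty_\tau(R)$ is a $\tau$-DS of $P\cap R$, giving $|S_R|\le|O^*\cap N^\infty_\tau(R)|=|O^*\cap R|+|O^*\cap U^\infty_\tau(R)|$. Summing over buckets (which partition $\mathbb{R}^d$) gives $|S_v|\le|O^*|+\sum_R|O^*\cap U^\infty_\tau(R)|$. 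Because distinct buckets of the same group are $>\beta=2\tau$ apart in $\ell_\infty$, any point of $\mathbb{R}^d$ lies in $U^\infty_\tau(R)$ for at most one $R$ per group, hence at most $d+1$ buckets in total, so $\sum_R|O^*\cap U^\infty_\tau(R)|\le(d+1)\cdot|O^*\cap\bigcup_R U^\infty_\tau(R)|$. The third property of \Cref{lemma:hash} then gives $\bigcup_R U^\infty_\tau(R)\subseteq\mathbb{R}^d\setminus(v+L(z,2b)^d)=\{x:\exists i,\ x_i-v_i\notin L(z,2b)\}$. With $v$ drawn uniformly from $G$ and a union bound over coordinates, $\E_v[|\{c\in O^*:\exists i,\ c_i-v_i\notin L(z,2b)\}|]\le\sum_{c\in O^*}\sum_i\Pr_{v_i}[c_i-v_i\notin L(z,2b)]$; since $\mathbb{R}\setminus L(z,2b)$ meets each length-$z$ period in a single interval of length $4b$, each such probability is $\le 4b/z+1/M$, and the choices of $\ell$ and $M$ are calibrated so that $(d+1)\cdot d\cdot(4b/z+1/M)\le\epsilon$. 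Hence some $v\in G$ has $\sum_R|O^*\cap U^\infty_\tau(R)|\le\epsilon|O^*|$, so the returned set has size $\le(1+\epsilon)|O^*|=(1+\epsilon)|\MDS_\tau(P)|$.

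\emph{Implementation, and the main obstacle.} Rounding (\Cref{lemma:rounding}), evaluating the data-oblivious $f_v$, routing each bucket's $(d/\epsilon)^{O(d)}$ points, and the per-machine exact $\tau$-DS each take $O(\log_s n)$ rounds by standard MPC primitives; the $|G|$ inner runs are independent, and since $s\ge\Omega(\epsilon^{-1}d)^{\Omega(d)}\poly\log n$ can be taken $\ge|G|=(d/\epsilon)^{O(d)}$, the closing minimization over $|G|$ counts costs $O(\log_s|G|)=O(1)$ rounds. Local memory is dominated by one bucket plus the $\poly(d)$ description of $f$, i.e.\ $(d/\epsilon)^{O(d)}\poly\log n$; total memory is $|G|$ replicas of $P$, i.e.\ $O(n\poly\log n)\cdot O(\epsilon^{-1}d)^{O(d)}$. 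I expect the delicate part to be the size bound: pushing the ``annulus leakage'' $\sum_R|O^*\cap U^\infty_\tau(R_v)|$ below an $\epsilon$-fraction of $|O^*|$ while keeping the bucket diameter only $\poly(d)\cdot\tau/\epsilon$ — a plain hypercube grid would instead need diameter $2^{\Theta(d)}\tau$ and hence $2^{\Theta(d^2)}$ local memory — which is exactly what the Cartesian-power-complement shape in the third property of \Cref{lemma:hash} is built to give, and which one must pair with a shift grid coarse enough that $|G|=(d/\epsilon)^{O(d)}$ fits the memory budget yet fine enough that the averaging still loses only an $\epsilon$-fraction.
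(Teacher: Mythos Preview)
Your proposal is correct and follows essentially the same approach as the paper: reduce to a packed instance via \Cref{lemma:rounding}, apply the hash of \Cref{lemma:hash} with $\ell=\Theta(d^{3.5}\epsilon^{-1}\tau)$, enumerate a grid of $(d/\epsilon)^{O(d)}$ shifts, solve $\MDS_\tau$ exactly on each bucket, and bound the ``annulus leakage'' $\sum_R|O^*\cap U_\tau^\infty(R)|$ by combining the $(d{+}1)$-group separation (second property) with the Cartesian-power-complement structure (third property) and a coordinate-wise averaging over shifts. The only cosmetic differences are the sign convention for $f_v$ and that the paper's shift set $\{0,4b,\ldots,4b(T{-}1)\}$ with $T=z/4b$ makes the shifted annuli an exact partition of $\mathbb{R}$ (so the averaging is clean), whereas your discrete-uniform argument picks up a harmless extra $1/M$ term; both give the same $(d/\epsilon)^{O(d)}$ shift count and the same final bound.
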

We make use of the same reduction as in \Cref{sec:MIS}, specifically \Cref{lemma:rounding},
so that we can assume without loss of generality that $P$ satisfies \Cref{prop:packing}.

\paragraph{Proof overview.}
We give an outline of our MPC algorithm in \Cref{alg:imp_approx_mds}. 
The high level idea is to use the hash $f$ as in \Cref{lemma:hash},
compute locally a $\MDS$ for each bucket, and then take the union of this as the approximate solution.
This simple plan may not work well when data points are located around the ``boundary'' of the buckets,
since the optimal solution may use only a few ``boundary'' points near the buckets to dominate the points inside the buckets.

To combat this issue, we plan to shift the buckets (implemented by hash $f_v$ in line~\ref{line:f_v}),
and use an averaging argument to show there exists some good shift $v$,
such that the boundary points contribute to the optimal solution very little.
This averaging argument requires to use both a larger bucket diameter
$\ell = O(d^{3.5}\epsilon^{-1} \beta)$ (as opposed to the default $\ell = O(d^{1.5} \beta)$ as in \Cref{lemma:hash}),
and a large enough support $\mathcal{V}^d$ of the shift.
Both $\ell$ and $|\mathcal{V}^d|$ affect the space usage, and it is crucial to give good upper bounds for them.
A key property is \Cref{fact:consistent_hash}, and this helps us to use only about $(d\epsilon^{-1})^{O(d)}$ number of shifts.

\begin{algorithm}[ht]
    \caption{Algorithm outline for $(1 + \epsilon)$-approximate $\MDS_{\tau}(P)$}
    \label{alg:imp_approx_mds}
    \begin{algorithmic}[1]
        \State let $\beta:= 2\tau$, $\ell:= O(d^{3.5}\varepsilon^{-1}\beta)$, $z:= \ell/\sqrt{d}$, $b:= d\beta+\tau$, $T:= z/4b$, and $\mathcal{V} := \{ 0, 4b, \ldots, 4b(T-1) \}$
        \label{line:parameters}
       
        \State let $f: \mathbb{R}^{d}\to \mathbb{R}^{d}$ be a hash function with parameter $\beta$  and $\ell$ as guaranteed by \Cref{lemma:hash} 

        \label{line:f}

        \For{$v \in \mathcal{V}^d$ in parallel}
        \label{line:for-loop}
        
        \State define $f_v : \mathbb{R}^d \to \mathbb{R}^d$ such that $f_{v}(x):= f(x + v)$ for $x \in \mathbb{R}^d$
        \label{line:f_v}
         
        \State for each bucket $u\in f_{v}(P)$, compute $\MDS_{\tau}(f_{v}^{-1}(u)\cap P)$
        \label{line:mds_bucket}
            
        \State let $D_{v}\gets \bigcup_{u\in f_{v}(P)} \MDS_{\tau}(f_{v}^{-1}(u)\cap P)$
        \label{line:mds_aggregate}

        \EndFor
        \State return $\widehat{M} := D_{v}$ such that $|D_{v}|$ is the minimum over all $v \in \mathcal{V}^d$
        \label{line:return}
           
    \end{algorithmic}
\end{algorithm}

\paragraph{MPC implementation.}
We discuss how to implement \Cref{alg:imp_approx_mds} in MPC.
By \Cref{lemma:hash}, the hash in line~\ref{line:f} is deterministic and only takes $\poly(d)$ space,
so it can be locally stored in every machine without communication.
Next, we examine the lines in the parallel for-loop.
Line~\ref{line:f_v} can be executed locally in each machine.
Line~\ref{line:mds_bucket} can be implemented by first sorting the points with respect to their hash value (i.e., bucket ID),
then solving the $\MDS$ for each subset of points with the same hash value.
Notice that since each bucket has diameter at most $\ell = O(d^{3.5}\epsilon^{-1} \beta) \leq O((d^{3.5}\epsilon^{-1})\tau)$,
it contains at most $O(\varepsilon^{-1}d)^{O(d)}$ points from $P$ by \Cref{prop:packing}, 
so all these points can fit in the same machine (due to the assumption on $s$).
Therefore, their $\MDS$ can be solved locally without further communication.
Line~\ref{line:mds_aggregate} requires removing duplicated elements in $\MDS_\tau(f_v^{-1}(u) \cap P)$ for $u \in f_v(P)$,
which can be done using sorting.
Finally, line~\ref{line:return} can be implemented using standard procedures, specifically counting and finding the minimum element.

\paragraph{Round and space complexity.}
By the above, the round complexity is $O(\log_s n)$ since every step can be implemented in $O(\log_s n)$ rounds and the for-loop is done in parallel.
The local space requirement is dominated by line~\ref{line:mds_bucket},
which is $O(\varepsilon^{-1}d)^{O(d)}\cdot \poly(\log n)$.
The total space is dominated by the parallel-for loop,
which has $|\mathcal{V}|^d \leq (d\epsilon^{-1})^{O(d)}$ invocations,
and this results in $O(n \poly(\log n)) \cdot O(\varepsilon^{-1}d)^{O(d)}$ total space.

\paragraph{Error analysis.}
Let $M^* := \MDS_\tau(P)$ be an optimal/minimum solution of $\tau$-dominating set for $P$.
Observe that the algorithm returns $\widehat{M}$,
such that its size $|\widehat{M}| = \min_{v \in \mathcal{V}^d} |D_v|$.
Hence, it suffices to show there exists $v^* \in \mathcal{V}^d$ such that $|D_{v^*}| \leq (1 + \epsilon) |M^*|$.
To do so, we start with analyzing $|D_v|$ for a generic $v$ as follows,
in order to obtain a more concrete condition of $v^*$.
\begin{align}
    |D_v|
    &= \left|\bigcup_{u\in f_{v}(P)} \MDS_{\tau}(f_{v}^{-1}(u)\cap P)\right| \nonumber \\
    &\leq \sum_{u \in f_v(P)} |\MDS_\tau(f_v^{-1}(u) \cap P)| \nonumber \\
    &\leq \sum_{u \in f_v(P)} |M^* \cap N_\tau^\infty(f_v^{-1}(u))| \nonumber \\
    &\leq \sum_{u \in f_v(P)} |M^* \cap f_v^{-1}(u)| + |M^* \cap U_\tau^\infty(f_v^{-1}(u))| \nonumber \\
    &= |M^*| + \sum_{u \in f_v(P)} |M^* \cap U_\tau^\infty(f_v^{-1}(u))|. \label{eqn:DvtoMstar}
\end{align}
We next further analyze $\sum_{u \in f_v(P)} |M^* \cap U_\tau^\infty(f_v^{-1}(u))|$,
and the plan is to use the properties of \Cref{lemma:hash}.
However, those properties are for $f$ instead of directly for $f_v$, and hence we need the following \Cref{fact:f_fv} that relates $f_v^{-1}$ and $f^{-1}$.
\begin{fact}
    \label{fact:f_fv}
    For every $v \in \mathbb{R}^d$, 
    the image set of $f_v$, i.e., $f_v(\mathbb{R}^d)$, equals that of $f$,
    and for every image $u \in f_v(\mathbb{R}^d)$, $f_v^{-1}(u) = f^{-1}(u) - v$.
\end{fact}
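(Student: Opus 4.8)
The claim is that shifting the input to a hash by $v$ simply shifts all the buckets by $-v$, and in particular does not change the set of bucket labels. The plan is to prove the two statements directly from the definition $f_v(x) := f(x+v)$, using the elementary set-theoretic identity $f^{-1}(u) - v = \{\, y : f(y+v) = u \,\}$ for the preimage.

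First I would verify $f_v^{-1}(u) = f^{-1}(u) - v$ for any $u$ in the range of $f_v$. Unwinding definitions, $f_v^{-1}(u) = \{\, x \in \mathbb{R}^d : f_v(x) = u \,\} = \{\, x : f(x+v) = u \,\}$. Substituting $y := x + v$ (a bijection of $\mathbb{R}^d$) turns this into $\{\, y - v : f(y) = u \,\} = \{\, y : f(y) = u \,\} - v = f^{-1}(u) - v$, which is exactly the claim. Then for the image sets: $u \in f_v(\mathbb{R}^d)$ iff $f_v^{-1}(u) \neq \emptyset$ iff $f^{-1}(u) - v \neq \emptyset$ iff $f^{-1}(u) \neq \emptyset$ iff $u \in f(\mathbb{R}^d)$, using that translation by $-v$ preserves nonemptiness; hence $f_v(\mathbb{R}^d) = f(\mathbb{R}^d)$.

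There is essentially no obstacle here — this is a one-line change of variables — so the only thing to be careful about is bookkeeping: making sure the substitution $y = x+v$ is applied consistently and that we are quantifying over the correct index set (all $u \in f_v(\mathbb{R}^d) = f(\mathbb{R}^d)$). The reason this small fact is isolated as a lemma is that the subsequent error analysis wants to invoke the three properties of \Cref{lemma:hash}, which are phrased in terms of $f^{-1}$, while \Cref{alg:imp_approx_mds} works with the shifted buckets $f_v^{-1}(u)$; \Cref{fact:f_fv} lets one translate statements about $U_\tau^\infty(f_v^{-1}(u))$ into statements about $U_\tau^\infty(f^{-1}(u)) - v$ (using that $U_\tau^\infty$, defined via $\ell_\infty$ balls, commutes with translation), which is where the averaging over $v \in \mathcal{V}^d$ will then be carried out.
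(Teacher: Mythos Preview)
Your proof is correct and is exactly the direct change-of-variables argument one would expect; the paper itself states \Cref{fact:f_fv} without proof, treating it as immediate from the definition $f_v(x) := f(x+v)$. Your additional remark about why the fact is isolated (to transfer the properties of \Cref{lemma:hash} from $f^{-1}$ to the shifted buckets $f_v^{-1}$) is also accurate.
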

By \Cref{fact:f_fv},
\begin{equation}
    \label{eqn:remove_v}
    \sum_{u \in f_v(P)} |M^* \cap U_\tau^\infty(f_v^{-1}(u))|
    \leq \sum_{u \in f_v(\mathbb{R}^d)} |M^* \cap U_\tau^\infty(f_v^{-1}(u))|
    \leq \sum_{u \in f(\mathbb{R}^d)} |M^* \cap (U_\tau^\infty(f^{-1}(u)) - v)|.
\end{equation}
We next apply \Cref{fact:consistent_hash} to further upper bound \eqref{eqn:remove_v}, i.e.,
\begin{equation}
    \label{eqn:back_to_union}
    \sum_{u \in f(\mathbb{R}^d)} |M^* \cap (U_\tau^\infty(f^{-1}(u)) - v)|
    \leq 
    (d + 1) \cdot \left| \bigcup_{u \in f(\mathbb{R}^d)} M^* \cap (U_\tau^\infty(f^{-1}(u)) - v) \right|.
\end{equation}
    To see this, for $x \in \mathbb{R}^d$,
    if some $u$ satisfies $x \in U_\tau^\infty(f^{-1}(u))$,
    then there exists $y \in f^{-1}(u)$ such that $\|x - y\|_\infty \leq \tau$,
    which is equivalent to $N_\tau^\infty(x) \cap f^{-1}(u) \neq \emptyset$.
    Since $\diam_\infty(N_\tau^\infty(x)) \leq 2\tau = \beta$,
    by \Cref{fact:consistent_hash},
    $ |\{u : N_\tau^\infty(x) \cap f^{-1}(u) \neq \emptyset \}|
    = |f(N_\tau^\infty(x))| \leq d+ 1$.
    Therefore,
    every point $x \in \mathbb{R}^d$ can only participate in at most $ d + 1$ terms of
    $\sum_{u \in f(\mathbb{R}^d)} |M^* \cap (U_\tau^\infty(f^{-1}(u)) - v)|$,
    and this implies \eqref{eqn:back_to_union}.

Then, 
applying \Cref{fact:f_fv} and the third property of \Cref{lemma:hash} to the $U_\tau^\infty$'s of \eqref{eqn:back_to_union},
we have
\begin{align*}
    \bigcup_{u \in f(\mathbb{R}^d)} (U_\tau^\infty(f^{-1}(u)) - v)
    \subseteq \overline{L(z, 2b)^d} - v,
\end{align*}
where $L(z, 2b) = \bigcup_{a \in \ZZ} [az + 2b, (a + 1)z - 2b)$,
and $\overline{L(z, 2b)^d}$ is the complement (with respect to $\mathbb{R}^d$) of $L(z, 2b)^d$.
Therefore,
\begin{equation}
    \label{eqn:to_shift}
    \left| \bigcup_{u \in f(\mathbb{R}^d)} M^* \cap (U_\tau^\infty(f^{-1}(u)) - v) \right|
    \leq \left| M^* \cap (\overline{L(z, 2b)^d} - v) \right|.
\end{equation}
Hence, it suffices to show there exists $v^* \in \mathcal{V}^d$,
such that $|M^* \cap (\overline{L(z, 2b)^d} - v^*)| \leq \epsilon |M^*|$.
This $v^*$ may be viewed as a ``shift'' of $\overline{L(z, 2b)^d}$,
and we would use an averaging argument to show the existence of it.
For $j \in \mathcal{V}$, let $\hat{L}(j) := \bigcup_{a \in \ZZ}[az - 2b - j, az + 2b - j)$.
Notice that $\hat{L}(j)$'s are of small length which is $4b$,
and $\{ \hat{L}(j) : j \in \mathcal{V} \}$ is a partition of $\mathbb{R}$.
Moreover, it can be observed that $\hat{L}(0) = \overline{L(z, 2b)}$.

\paragraph{Picking $v^*$.}
For $i \in [d]$, define multiset $\proj_i(M^*) := \{ x_i : x = (x_1, \ldots, x_d) \in M^* \}$ as taking the $i$-th coordinates of points in $M^*$.
Now, by an averaging argument (over $j \in \mathcal{V}$),
for every $i \in [d]$, there exists $j_i^*$ such that 
$ |\proj_i(M^*) \cap \hat{L}(j_i^*)| \leq \frac{1}{|\mathcal{V}|} \cdot |M^*| $.
We pick $v^* := (j_1^*, \ldots, j_d^*)$.

\paragraph{Concluding error bound.}
We continue to analyze \eqref{eqn:to_shift} when plugging in $v = v^*$.
\begin{align*}
    |M^* \cap (\overline{L(z, 2b)^d} - v^*)|
    &\leq \sum_{i = 1}^d |\proj_i(M^*) \cap (\hat{L}(0) - j_i^*)| \\
    &= \sum_{i = 1}^d |\proj_i(M^*) \cap \hat{L}(j_i^*)| \\
    &\leq \frac{d}{|\mathcal{V}|} |M^*|
    = \frac{d}{T} |M^*|.
\end{align*}
Combining this with \eqref{eqn:DvtoMstar}, \eqref{eqn:remove_v} and \eqref{eqn:back_to_union},
we conclude that
\begin{align*}
    |\widehat{M}| = \min_{v \in \mathcal{V}^d} |D_v|
    \leq |D_{v^*}| \leq |M^*| + \frac{d(d + 1)}{T} |M^*|
    \leq (1 + \epsilon) |M^*|.
\end{align*}
This finishes the proof of \Cref{thm:mds}.
\qed

     \section{MPC Algorithms for RS in High Dimension}
\label{sec:ruling_set}

\begin{lemma}
    \label{thm:ruling_set}
    There is an MPC algorithm that given threshold $\tau>0$,  $0<\varepsilon<1$ and a dataset $P$ of $n$ points in $\mathbb{R}^{d}$ distributed across MPC machines with local memory $s\geq \poly(d\log n)$, with probability at least $1-1/n$ computes a $(\tau, O(\varepsilon^{-1}\frac{\log n}{\log\log n})\tau)$-ruling set for $P$ in $O(\log_{s} n)$ rounds, using total memory $O(n^{1+\varepsilon}\poly(d\log n))$.    
\end{lemma}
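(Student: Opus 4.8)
\textbf{Proof plan for \Cref{thm:ruling_set}.}
The plan is to realize the two-stage strategy outlined in \Cref{sec:intro_rs}: first a preprocessing step based on consistent hashing, then a single round of Luby's algorithm, followed by a careful probabilistic analysis of the dominating parameter. First I would set $\Gamma := \Theta(\log\log n)$ (more precisely, a value in $[8, 2d]$ so that $\ell / \Gamma \geq 4\tau$ and $\Lambda = \exp(8d/\Gamma)\cdot O(d\log d) \leq \poly(\log n)$ requires $d/\Gamma = O(\log\log n)$; since we may assume $d = O(\log n)$ by Johnson--Lindenstrauss, this forces $\Gamma = \Theta(\log n / \log\log n)$ and hence $\ell = O(\log n/\log\log n)\cdot \tau$ after rescaling the hash so its diameter bound matches a multiple of $\tau$). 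Invoke \Cref{lemma:consist_hash} to obtain a deterministic $(\Gamma,\Lambda)$-hash $\varphi$ with diameter bound $\ell = O(\frac{\log n}{\log\log n})\tau$ and $\Lambda = \poly(\log n)$. Form $P'$ by picking an arbitrary point of $P$ from each non-empty bucket $\varphi^{-1}(z)$; this can be done in $O(\log_s n)$ rounds by sorting on the bucket id. Any point $x \in P$ lies within $\ell$ of the representative of its bucket, so a $\tau'$-DS for $P'$ is a $(\tau' + \ell)$-DS for $P$; it therefore suffices to produce a $(\tau, O(\frac{\log n}{\log\log n})\tau)$-RS for $P'$, and the hash guarantees that for every $x\in P'$, $|B_{P'}(x,\tau)| \leq |\varphi(B(x,\tau))| \leq \Lambda$ (using that $B(x,\tau)$ has diameter $2\tau \leq \ell/\Gamma$ and that distinct points of $P'$ lie in distinct buckets). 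Then run one round of Luby on $P'$: draw i.i.d.\ uniform $h(x)\in[0,1]$ and let $R := \{ x \in P' : h(x) < h(y)\ \forall y \in B_{P'}(x,\tau)\setminus\{x\}\}$. That $R$ is a $\tau$-IS is immediate. MPC-wise, each point only needs the $h$-values in its $\tau$-ball, and since every $\tau$-ball in $P'$ has $\leq\Lambda = \poly(\log n)$ points, gathering these neighborhoods fits the local memory $s \geq \poly(d\log n)$; the total memory blows up only by the $n^\varepsilon$ factor needed to replicate neighborhoods across machines when distances within $\tau$ straddle machine boundaries.

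The core is the dominating-parameter analysis: for a fixed $p\in P'$, I would define the auxiliary greedy sequence $S = (x_0 := p, x_1, \ldots, x_T)$ where, as long as $x_i \notin R$, $x_{i+1}$ is the $h$-minimizer of $B_{P'}(x_i,\tau)\setminus\{x_i\}$ (equivalently the unique neighbour with smallest $h$), and we stop once $x_i \in R$; then $\dist(p,R) \leq T\tau$ and it remains to show $\Pr[T > t] \leq 1/\poly(n)$ for $t := \Theta(\log n/\log\log n)$. The key conditional estimate (\Cref{lemma:ext_cond} in the overview) is that, conditioned on the prefix $(x_0,\ldots,x_i)$ being the actual sequence, the ``extension probability'' that $x_{i+1}$ exists (i.e.\ $x_i\notin R$) is at most roughly $|B_{P'}(x_i,\tau)| \big/ \bigl|\bigcup_{j=1}^i B_{P'}(x_j,\tau)\bigr|$; intuitively, the event that the prefix occurs already pins down which element of $\bigcup_{j\leq i}B_{P'}(x_j,\tau)$ carries the global $h$-minimum among that union (it must be in $B_{P'}(x_i,\tau)$, for otherwise an earlier $x_j$ would have been stopped or routed elsewhere), and by symmetry that minimizer is uniform over the union. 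To turn the product $\prod_{i=1}^{t} \frac{|B_{P'}(x_i,\tau)|}{|\bigcup_{j\leq i}B_{P'}(x_j,\tau)|}$ into a usable bound I would handle the union-bound issue via \emph{configurations}: round each $|B_{P'}(x_i,\tau)|$ up to the next power of two, giving only $O(\log\Lambda) = O(\log\log n)$ possible values per coordinate and hence at most $(O(\log\log n))^t$ configurations for length-$t$ sequences. For a fixed configuration, \Cref{lemma:key} (to be proved by an elementary counting/concavity argument on how a union of sets of bounded sizes can grow) gives $\prod_i \frac{|B_{P'}(x_i,\tau)|}{|\bigcup_{j\leq i}B_{P'}(x_j,\tau)|} \leq \exp\bigl(-\Omega(t)\log(t/\log\Lambda)\bigr)$, which with $t = \log n/\log\log n$ and $\log\Lambda = O(\log\log n)$ beats $(\log\log n)^t$ by a $1/\poly(n)$ margin; a final union bound over the $n$ starting points $p$ yields the $1-1/n$ success probability.

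The main obstacle I anticipate is making the conditional extension-probability bound (\Cref{lemma:ext_cond}) fully rigorous: one must condition on the \emph{event} that the random prefix equals a specific sequence $(x_0,\ldots,x_i)$ and argue that, under this conditioning, the identity of the $h$-minimizer of $\bigcup_{j\leq i}B_{P'}(x_j,\tau)$ is still (close to) uniformly distributed over that union --- the subtlety is that conditioning on the prefix does reveal quite a bit about the relative order of $h$-values, so one needs the re-assignment argument (\ala~\cite{CGJKV24}) to show these revealed constraints are consistent with the minimizer lying essentially anywhere in the union, and then quantify the small correction. Closely related is ensuring \Cref{lemma:key} is tight enough: the bound must be $\exp(-\Omega(t)\log(t/\log\Lambda))$ rather than merely $\exp(-\Omega(t))$, since the latter would only reproduce the $O(\log n)$ factor and not the $O(\log n/\log\log n)$ improvement --- extracting the extra $\log(t/\log\Lambda)$ factor requires exploiting that once the accumulated union size exceeds $\mathrm{poly}(\Lambda)$, each further step multiplies the survival probability by a factor that is itself shrinking, which is exactly where the degree bound $\Lambda = \poly(\log n)$ from the hashing preprocessing is used. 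Everything else --- the $\ell/\Gamma \geq 2\tau$ bookkeeping, the MPC round/memory accounting, and absorbing the $\varepsilon$ into the approximation via the choice of $t$ --- is routine.
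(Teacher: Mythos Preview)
Your plan tracks the paper's argument closely --- Johnson--Lindenstrauss, preprocessing via consistent hashing, one-round Luby, the auxiliary-sequence/configuration analysis --- and you have correctly identified the key probabilistic lemmas. There is, however, one genuine gap on the algorithmic side and one point where your outline diverges from the paper in a way that matters.

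The gap is the MPC implementation of the Luby step. You write that ``each point only needs the $h$-values in its $\tau$-ball'' and that gathering these neighborhoods is easy because each ball has at most $\Lambda$ points. But knowing that $|B_{P'}(x,\tau)|$ is small does not tell you how to \emph{find} $B_{P'}(x,\tau)$ in $O(\log_s n)$ rounds with $\poly(d\log n)$ local memory: that is a near-neighbor query in high dimension, and nothing in \Cref{lemma:consist_hash} lets you enumerate the $\Lambda$ nearby buckets. The paper does \emph{not} compute exact balls; instead it invokes a geometric-aggregation primitive (\Cref{lemma:geometric_aggregation}, from~\cite{CGJKV24}) that returns $\min h(\cdot)$ over an \emph{approximate} neighborhood $A_{P'}^\beta(x,\tau)$ sandwiched between $B_{P'}(x,\tau)$ and $B_{P'}(x,\beta\tau)$ with $\beta = O(\epsilon^{-1})$. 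This primitive is precisely what costs $n^{1+\epsilon}$ total memory, and the factor $\beta = O(\epsilon^{-1})$ it introduces is where the $\epsilon^{-1}$ in the dominating radius actually comes from --- not from ``the choice of $t$'' as you suggest. The entire analysis (independence, the auxiliary sequence, \Cref{lemma:ext_cond}, the configuration argument) then has to be carried out with $A_{P'}^\beta$ in place of $B_{P'}$; the hash diameter is accordingly set to $\ell = O(\beta\Gamma\tau)$ so that \Cref{fact:Lambda_ub} holds for the approximate balls.

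One smaller divergence: your configurations record $\lceil|B_{P'}(x_i,\tau)|\rceil_2$, whereas the paper records powers of two of $|\Anew(S'_{[0,i]})|$, the \emph{new} part of the $i$-th ball not covered by earlier ones. The advantage of the paper's choice is that the $\Anew$'s are pairwise disjoint, so $|\Acond(S'_{[0,i]})| \geq \sum_{j<i}|\Anew(S'_{[0,j]})|$ exactly, and the recursion on $\sum_{S'\in\mathcal{S}_{\pi_{\leq i}}}\Pr[S'\sqsubseteq S]$ collapses cleanly into $\prod_i 2^{\pi_i}/\sum_{j\leq i}2^{\pi_j}$; this is what makes the $\exp(-\Omega(t)\log(t/\log\Lambda))$ bound in \Cref{lemma:key} fall out via grouping by the value of $\pi_i$, a factorial, and Jensen. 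With your definition the denominators are still unions but the numerators overlap, so the telescoping is messier --- not fatal, but you should expect to switch to the $\Anew$ bookkeeping when you actually prove the key lemma.
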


Our algorithm may be viewed as a Euclidean version of one-round Luby's,
with two major differences. One is an additional preprocessing algorithm \Cref{alg:rs_hash},
and this step is crucially needed to improve the ruling set parameter from $O(\log n)$ (which is what one-round Luby's algorithm can achieve in general~\cite{dist_graph_book}) to $(\log n / \log\log n)$.
The other is that it is not immediate to exactly simulate the one-round Luby's in high dimensional Euclidean spaces,
as the neighborhood around a data point can be huge and is not easy to handle in MPC.
To this end, we need to use some approximate ball $A^\beta_P(p, \tau)$ that is ``sandwiched'' between $B_P(p, \tau)$ and $B
_{P}(p, \beta \tau)$ for some $\beta$ and every point $p \in P$,
and an MPC procedure for this has been suggested in~\cite{CGJKV24} (restated in \Cref{lemma:geometric_aggregation}).
We would address this inaccuracy introduced by $A^\beta_P(\cdot, \cdot)$ in the analysis of the offline algorithm.
In the remainder of this section, we use the notations from \Cref{alg:rs_hash,alg:local_kcenter}.

\begin{algorithm}
    \begin{algorithmic}[1]
        \caption{Preprocessing, with input $P \subseteq \mathbb{R}^d$ of $n$ points, $\beta \geq 1, \tau > 0$}
        \label{alg:rs_hash}
\State let $\hash$  be a consistent hashing with parameter $\Gamma \gets O(d / \log\log n)$
        $\Lambda \gets \poly(d\log n)$ and diameter $\ell \gets O(\beta\Gamma \tau)$, using \Cref{lemma:consist_hash}
\State for every $z \in \hash(P)$, pick an arbitrary representative point $x \in \hash^{-1}(z) \cap P$,
        denoted as $\rep(z)$
        \State return $P' \gets \rep(\hash(P))$
\end{algorithmic}
\end{algorithm}

\begin{algorithm}
    \caption{Local algorithm for RS on $P'$ resultant from \Cref{alg:rs_hash}, same $\beta \geq 1, \tau > 0$}
    \label{alg:local_kcenter}
    \begin{algorithmic}[1]
\State for each $p\in P'$, pick a uniformly random label $h(p)\in [0,1]$
        
        \State initialize $R\gets \emptyset$, and for every $p \in P'$, $R \gets R \cup \{p\}$ if $p$ has the smallest label in $A_{P'}^{\beta}(p, \tau)$ 
            \label{line:rule}
        \State return $R$
    \end{algorithmic}
\end{algorithm}
    \begin{lemma}
        \label{lemma:C_independent_set}
$R$ is a $\tau$-independent set for $P'$ (with probability $1$).  
    \end{lemma}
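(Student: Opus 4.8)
The plan is to reduce the probabilistic statement to a purely deterministic one and then derive a contradiction from the selection rule in line~\ref{line:rule}. Since $P'$ is finite and the labels $\{h(p) : p \in P'\}$ are i.i.d.\ uniform on $[0,1]$, a union bound over the finitely many pairs shows that with probability $1$ all labels are pairwise distinct. It therefore suffices to prove that, on any realization of $h$ with pairwise-distinct labels, the set $R$ produced by \Cref{alg:local_kcenter} is a $\tau$-independent set for $P'$.

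To this end I would argue by contradiction: suppose there are distinct $x, y \in R$ with $\dist(x, y) \le \tau$. The key point is to route the comparison of labels through the \emph{exact} ball rather than the approximate ball. Since $\dist$ is symmetric, $\dist(x,y)\le\tau$ gives $y \in B_{P'}(x,\tau)$ and $x \in B_{P'}(y,\tau)$; and since the approximate ball always contains the exact one, $y \in A_{P'}^{\beta}(x,\tau)$ and $x \in A_{P'}^{\beta}(y,\tau)$. As $x \in R$, line~\ref{line:rule} forces $h(x)$ to be the smallest label in $A_{P'}^{\beta}(x,\tau)$, hence $h(x) < h(y)$ (the inequality is strict because the labels are distinct). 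Swapping the roles of $x$ and $y$ and using $y \in R$ yields $h(y) < h(x)$, a contradiction. Hence no such pair exists and $R$ is $\tau$-independent on every realization with distinct labels, which establishes the lemma.

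I do not expect any substantive obstacle; the only subtlety worth flagging is that $A_{P'}^{\beta}(\cdot,\tau)$ is not assumed to be a symmetric relation, so one cannot directly compare $A_{P'}^{\beta}(x,\tau)$ with $A_{P'}^{\beta}(y,\tau)$ --- the argument must instead pass through the exact ball $B_{P'}$, which is symmetric, using only the containment $B_{P'}(p,\tau)\subseteq A_{P'}^{\beta}(p,\tau)$. One also implicitly uses $p \in A_{P'}^{\beta}(p,\tau)$ (immediate from $p \in B_{P'}(p,\tau)$), so that ``$p$ has the smallest label in $A_{P'}^{\beta}(p,\tau)$'' is well-posed. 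The ``with probability $1$'' qualifier in the statement is exactly what lets us ignore ties in the labels; alternatively, breaking ties by a fixed ordering of $P'$ would make the conclusion hold deterministically.
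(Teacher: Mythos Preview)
Your proposal is correct and follows essentially the same route as the paper: assume two distinct points $x,y\in R$ with $\dist(x,y)\le\tau$, pass through $B_{P'}\subseteq A_{P'}^{\beta}$ to place each in the other's approximate ball, and derive the contradictory pair $h(x)<h(y)$ and $h(y)<h(x)$ from the selection rule. The only difference is that you make the ``labels are pairwise distinct with probability $1$'' step explicit, whereas the paper leaves it implicit when asserting strict inequalities.
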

\begin{proof} 
        Fix a point $x\in R$. 
        Suppose for the contrary that there is a point $y \neq x \in R$ satisfying $\dist(x, y)\leq \tau$. 
        Then we have that $y\in B(x, \tau)$ and $x\in B(y,\tau)$.  
        Observe that $R\subseteq P'$ which implies that $x, y\in P'$. 
        Since for every $p\in P'$, $B_{P'}(p, \tau)\subseteq A_{P'}^{\beta}(p, \tau)$, then we have that $y\in B_{P'}(x, \tau)\subseteq A_{P'}^{\beta}(x, \tau)$ and $x\in B_{P'}(y, \tau)\subseteq A_{P'}^{\beta}(y, \tau)$.  
        However, we have $h(x) = \min(h(A_{P'}^{\beta}(x, \tau)))$ and $h(y)=\min(h(A_{P'}^{\beta}(y, \tau)))$ by line~\ref{line:rule} of \Cref{alg:local_kcenter},
        which implies that $h(x)<h(y)$ and $h(y)<h(x)$, respectively, leading to a contradiction. 
        Therefore, we complete the proof. 
    \end{proof}

    \begin{lemma}
        \label{lemma:rs_Pprime_to_P}
        For any $\alpha \geq 1$,
        any $(\tau, \alpha \tau)$-ruling set for $P'$
        is an $(\tau, (\alpha + \beta\Gamma) \tau)$-ruling set for $P$.
    \end{lemma}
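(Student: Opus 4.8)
The plan is to verify the two defining properties of a $(\tau, (\alpha+\beta\Gamma)\tau)$-ruling set for $P$ separately: that the set is a $\tau$-independent set for $P$, and that it is an $(\alpha+\beta\Gamma)\tau$-dominating set for $P$. Let $S$ be a $(\tau, \alpha\tau)$-ruling set for $P'$. The independence property is immediate and requires no real work: $S \subseteq P' \subseteq P$, and independence is an intrinsic property of the point set $S$ itself (it only says all pairwise distances in $S$ exceed $\tau$), so $S$ being $\tau$-independent ``for $P'$'' is literally the same statement as $S$ being $\tau$-independent ``for $P$''.

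The substantive part is the domination bound. First I would recall from \Cref{alg:rs_hash} that $P'$ is obtained by running the consistent hash $\hash$ of \Cref{lemma:consist_hash} with diameter bound $\ell = O(\beta\Gamma\tau)$ on $P$ and keeping one representative $\rep(z)$ per nonempty bucket; in particular, for every $x \in P$, the point $\rep(\hash(x)) \in P'$ lies in the same hash bucket $\hash^{-1}(\hash(x))$ as $x$, and since each bucket has diameter at most $\ell$ we get $\dist(x, P') \le \dist(x, \rep(\hash(x))) \le \ell \le \beta\Gamma\tau$ (absorbing the hidden constant into the $O(\cdot)$; I should make sure the constant in $\ell$ matches so the bound is cleanly $\beta\Gamma\tau$, or else state it with the appropriate constant). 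Now take any $x \in P$. There is some $p \in P'$ with $\dist(x,p) \le \beta\Gamma\tau$, and since $S$ is an $\alpha\tau$-dominating set for $P'$ there is some $c \in S$ with $\dist(p, c) \le \alpha\tau$. By the triangle inequality $\dist(x, S) \le \dist(x,c) \le \dist(x,p) + \dist(p,c) \le (\beta\Gamma + \alpha)\tau$, which is exactly the claimed domination radius.

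Combining the two parts, $S$ is both a $\tau$-independent set for $P$ and an $(\alpha+\beta\Gamma)\tau$-dominating set for $P$, hence a $(\tau,(\alpha+\beta\Gamma)\tau)$-ruling set for $P$, as desired. I do not expect any real obstacle here; the only point requiring a little care is bookkeeping the constant hidden in $\ell = O(\beta\Gamma\tau)$ so that the final additive term is stated as $\beta\Gamma\tau$ exactly (as in the lemma statement) rather than $O(\beta\Gamma\tau)$ — presumably the construction in \Cref{lemma:consist_hash} is invoked with constants chosen precisely so that the diameter is at most $\beta\Gamma\tau$, and I would simply cite that. One might also remark that this lemma is the ``glue'' that lets the downstream argument charge only an additive $\beta\Gamma\tau = O(\log n/\log\log n)\,\tau$ to the dominating parameter from the preprocessing step, which is affordable.
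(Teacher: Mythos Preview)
Your proposal is correct and follows essentially the same argument as the paper: both observe that $S\subseteq P'\subseteq P$ gives independence for free, and both bound $\dist(x,S)$ by routing through $\rep(\hash(x))\in P'$ (the paper phrases it via the nearest point $P'(p)$ but immediately bounds that by $\dist(p,\rep(\hash(p)))\le \ell$) and applying the triangle inequality. Your remark about the constant in $\ell=O(\beta\Gamma\tau)$ is apt; the paper simply writes $\ell\le \beta\Gamma\tau$ at the end, implicitly absorbing the constant.
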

    \begin{proof}
        Let $S$ be a $(\tau, \alpha \tau)$-ruling set for $P'$.
        Since $S \subseteq P' \subseteq P$ and $S$ is $\tau$-independent by definition,
        it remains to verify for every $p \in P$, $\dist(p, S) \leq (\alpha + \ell) \tau$.
        Fix some $p \in P$.
        For a generic point set $W \subseteq \mathbb{R}^d$ and a point $x \in \mathbb{R}^d$,
        let $W(x)$ be the nearest point in $W$ to $x$.
        Then
        \begin{align*}
            \dist(p, S)
            \leq \dist(p, P'(p)) + \dist(P'(p), S(p))
            \leq \dist(p, \rep(\hash(p))) + \alpha \tau
            \leq \ell + \alpha \tau \leq (\alpha + \beta\Gamma) \tau.
        \end{align*}
    \end{proof}

    \begin{fact}
        \label{fact:Lambda_ub}
        For every $p \in P'$, $|A_{P'}^\beta(x, \tau)| \leq \Lambda$.
    \end{fact}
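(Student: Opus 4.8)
The plan is to exploit two features of the preprocessing in \Cref{alg:rs_hash}: (i) the approximate ball $A^{\beta}_{P'}(p,\tau)$ is contained in a genuine ball of radius $\beta\tau$ around $p$, hence has bounded diameter; and (ii) the consistent hash $\hash$ underlying $P'$ is injective on $P'$ (since $P'$ keeps exactly one representative per non-empty bucket) and is ``consistent'' on low-diameter sets. Combining these bounds $|A^{\beta}_{P'}(p,\tau)|$ by the consistency parameter $\Lambda$.

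Concretely, I would first recall the sandwiching guarantee of the approximate-ball primitive (\Cref{lemma:geometric_aggregation}), which gives $B_{P'}(p,\tau)\subseteq A^{\beta}_{P'}(p,\tau)\subseteq B_{P'}(p,\beta\tau)$. Thus every $q\in A^{\beta}_{P'}(p,\tau)$ satisfies $\dist(p,q)\le\beta\tau$, and the triangle inequality yields $\diam\bigl(A^{\beta}_{P'}(p,\tau)\bigr)\le 2\beta\tau$. Next I would check that this diameter does not exceed the consistency threshold $\ell/\Gamma$ of $\hash$: since \Cref{alg:rs_hash} sets $\ell=\Theta(\beta\Gamma\tau)$, choosing the hidden constant so that $\ell\ge 2\beta\Gamma\tau$ gives $2\beta\tau\le\ell/\Gamma$. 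The consistency property (\Cref{def:consistent_hash}, instantiated via \Cref{lemma:consist_hash}) then yields $\bigl|\hash\bigl(A^{\beta}_{P'}(p,\tau)\bigr)\bigr|\le\Lambda$. Finally, since $P'=\rep(\hash(P))$ contains exactly one point from each non-empty bucket, $\hash$ is injective on $P'$, so $\bigl|A^{\beta}_{P'}(p,\tau)\bigr|=\bigl|\hash\bigl(A^{\beta}_{P'}(p,\tau)\bigr)\bigr|\le\Lambda$, as claimed.

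There is no substantial obstacle: the statement is essentially a bookkeeping consequence of the parameter choices made in \Cref{alg:rs_hash}. The only points needing care are (a) using the \emph{outer} containment $A^{\beta}_{P'}(p,\tau)\subseteq B_{P'}(p,\beta\tau)$ (the inner containment is useless for an upper bound); (b) ensuring the implicit constant in $\ell=O(\beta\Gamma\tau)$ is large enough that $\ell/\Gamma\ge 2\beta\tau$; and (c) remembering that both the approximate ball and the hash are taken with respect to the reduced set $P'$, which is precisely what makes $\hash$ injective there and the two cardinalities equal. No computation beyond a single triangle-inequality step is required.
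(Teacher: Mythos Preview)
Your proposal is correct and follows exactly the reasoning the paper intends: the paper's own proof is the single sentence ``Follows immediately from the definition of parameters $\Gamma, \ell, \Lambda$ and \Cref{lemma:consist_hash},'' and you have simply unpacked that sentence into its three ingredients (outer containment of $A^{\beta}_{P'}$, the diameter bound $2\beta\tau\le\ell/\Gamma$, and injectivity of $\hash$ on $P'$). There is nothing to add.
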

    \begin{proof}
        Follows immediately from the definition of parameters $\Gamma, \ell, \Lambda$ and \Cref{lemma:consist_hash}.
    \end{proof}
    
    \begin{lemma}
        \label{lemma:approximation_ratio}
For every $t \geq \Omega(\log^2 \Lambda)$,
        the set $R$ returned by \Cref{alg:local_kcenter} satisfies
        \begin{equation*}
            \forall p\in P',\quad \dist(p, R) \leq O(\beta t) \tau
        \end{equation*}
        with probability at least $1 - n \cdot \exp(-\Omega(t) \log(t / \log \Lambda))$.
    \end{lemma}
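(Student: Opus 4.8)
The plan is to follow the re-assignment argument sketched in the technical overview, analyzing the random stopping time $T$ associated to a point $p \in P'$. Fix $p \in P'$. Define an auxiliary sequence $S = (x_0 := p, x_1, x_2, \ldots)$ as follows: given $x_i$, if $x_i \in R$ we stop; otherwise we set $x_{i+1}$ to be the point with the smallest $h$-value in $A_{P'}^\beta(x_i, \tau)$. Since $x_i \notin R$ exactly means $x_i$ is not the minimizer of $h$ over $A_{P'}^\beta(x_i,\tau)$, the point $x_{i+1}$ is well-defined and distinct from $x_i$, with $h(x_{i+1}) < h(x_i)$; in particular the $h$-values along $S$ are strictly decreasing, so $S$ never repeats a point and $T$ is finite with probability $1$. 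Because $A_{P'}^\beta(x_i,\tau) \subseteq B_{P'}(x_i, \beta\tau)$, consecutive points satisfy $\dist(x_i, x_{i+1}) \leq \beta\tau$, hence $\dist(p, x_T) \leq \beta T \tau$, and since $x_T \in R$ we get $\dist(p, R) \leq \beta T \tau$. So it suffices to show $\Pr[T > t] \leq \exp(-\Omega(t)\log(t/\log\Lambda))$ for each fixed $p$, and then union bound over the (at most $n$) points of $P'$.

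The core of the argument is bounding $\Pr[T > t]$. Here I would invoke the two ingredients flagged in the overview. First, \Cref{lemma:ext_cond} (the ``extension probability'' lemma): conditioned on the algorithm having generated the prefix $(x_0, \ldots, x_i)$ of $S$, the probability that the sequence extends to some $x_{i+1}$ (rather than stopping) is at most roughly $|A_{P'}^\beta(x_i,\tau)| \big/ \bigl|\bigcup_{j=1}^i A_{P'}^\beta(x_j,\tau)\bigr|$ --- intuitively because extension requires the global minimum of $h$ over $\bigcup_{j\le i} A_{P'}^\beta(x_j,\tau)$ to land inside the ``new'' part contributed by $x_i$. Second, to turn a per-step bound into a tail bound one must union bound over which sequences $S$ of length $t$ can occur, and the naive count of $n^t$ is too large. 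The fix is to record only the \emph{configuration} of $S$, namely $\bigl(\lceil |A_{P'}^\beta(x_1,\tau)|\rceil_2, \ldots, \lceil |A_{P'}^\beta(x_t,\tau)|\rceil_2\bigr)$ (sizes rounded up to the next power of $2$); by \Cref{fact:Lambda_ub} each ball has size $\le \Lambda = \poly(d\log n)$, so there are only $O(\log\Lambda) = O(\log\log n)$ possible rounded sizes per coordinate and hence at most $O(\log\Lambda)^t$ configurations. For each fixed configuration, \Cref{lemma:key} bounds the product $\prod_{i=1}^{t} \frac{|A_{P'}^\beta(x_i,\tau)|}{|\bigcup_{j=1}^i A_{P'}^\beta(x_j,\tau)|}$ by $\exp(-\Omega(t)\log(t/\log\Lambda))$. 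Chaining the conditional extension bounds along the sequence and summing over configurations yields
\begin{equation*}
  \Pr[T > t] \;\le\; O(\log\Lambda)^t \cdot \exp\bigl(-\Omega(t)\log(t/\log\Lambda)\bigr) \;=\; \exp\bigl(-\Omega(t)\log(t/\log\Lambda)\bigr),
\end{equation*}
where the last step uses $t \geq \Omega(\log^2\Lambda)$ so that the $\log(t/\log\Lambda)$ factor dominates $\log\log\Lambda$ and absorbs the $\log\Lambda$ from the configuration count. A final union bound over the $\le n$ points $p \in P'$ gives failure probability at most $n \cdot \exp(-\Omega(t)\log(t/\log\Lambda))$, and on the complementary event $\dist(p,R) \le \beta T\tau \le O(\beta t)\tau$ for all $p \in P'$, which is the claim.

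A few technical points need care. One must make precise what ``conditioning on the prefix $(x_0,\ldots,x_i)$'' means as an event on the randomness $\{h(q)\}_{q\in P'}$, and verify that \Cref{lemma:ext_cond}'s bound holds conditionally --- the subtlety is that knowing the prefix already reveals a lot about the $h$-values (e.g.\ that $x_j$ minimizes $h$ over $A_{P'}^\beta(x_{j-1},\tau)$ but not over its own ball), so one has to argue that, after this conditioning, the location of the minimum of $h$ over the residual set $\bigcup_{j=1}^i A_{P'}^\beta(x_j,\tau) \setminus \{x_0,\ldots,x_i\}$ is still essentially uniform, giving the stated ratio up to constants. The rounding to powers of $2$ loses only a factor of $2$ per term in the product, which is harmless since \Cref{lemma:key} already carries $\Omega(t)$ in the exponent. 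I expect the main obstacle to be exactly this conditioning/independence bookkeeping in applying \Cref{lemma:ext_cond} along a random sequence --- making the union-bound-over-configurations argument rigorous requires summing the extension probabilities over a tree of possible continuations and checking that the product telescopes correctly regardless of how the sequence branches; the rest (the geometric consequences $\dist(x_i,x_{i+1})\le\beta\tau$, the ball-size bound $\Lambda$, and plugging into \Cref{lemma:key}) is routine.
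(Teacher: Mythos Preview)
Your proposal is correct and follows essentially the same approach as the paper: construct the auxiliary sequence $S=(x_0,\ldots,x_T)$ for each fixed $p$, bound $\dist(p,R)\le \beta T\tau$, reduce to a tail bound on $T$ via \Cref{lemma:ext_cond} and \Cref{lemma:key}, and union bound over $p\in P'$.

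One small discrepancy worth flagging: the paper's configuration records (powers of~$2$ of) $|\Anew(S'_{[0,i]})|=|A_{P'}^\beta(x_i,\tau)\setminus\bigcup_{j<i}A_{P'}^\beta(x_j,\tau)|$, not the full ball size $|A_{P'}^\beta(x_i,\tau)|$ as you write (which is how the overview phrases it informally). This matters in the proof of \Cref{lemma:key}: because the $\Anew$ pieces are disjoint, $\sum_{j\le i}2^{\pi_j}$ is a genuine lower bound on $|\bigcup_{j\le i}A_{P'}^\beta(x_j,\tau)|$, which is what makes the product bound go through. With full ball sizes that lower bound would fail due to overlap. Also, \Cref{lemma:key} bounds $\sum_{S'\in\mathcal{S}_\pi}\Pr[S'\sqsubseteq S]$ directly, not just a single product---the recursion absorbs the branching over continuations, which is exactly the telescoping you correctly anticipate as the main obstacle.
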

\begin{proof}
        It suffices to show that for every point $p\in P'$, with probability $1-\exp(-\Omega(t) \log(t / \log \Lambda))$,
        $\dist(p, R)\leq O(\beta t)\tau$, 
        since one can conclude the proof using a union bound for all points $p\in P'$.

        Now, fix a point $p\in P'$ and consider $\dist(p, R)$. 
        In order to analyze $\dist(p, R)$, we construct a sequence $S:= (x_{0}, x_{1}, \ldots, x_{T})$ using an auxiliary algorithm, stated in \Cref{alg:find_sequence},
        where the sequence $S$ starts at the point $x_{0}:= p$ and denote the length of $S$ as $T\geq 0$ which is random. 
        We emphasize that \Cref{alg:find_sequence} is defined with respect to the internal states of a specific (random) run of \Cref{alg:local_kcenter},
        and it does not introduce new randomness.
    \begin{algorithm}[H]
        \caption{Finding an assignment sequence $S=(x_{0}=p, \ldots, x_{T})$, for a given $p\in  P'$}
        \label{alg:find_sequence}
        \begin{algorithmic}[1]

            \State let $i\gets 0$, $x_{0}\gets p$, $S\gets (x_{0})$
            
            \While{\Cref{alg:local_kcenter} does not add $x_{i}$ at line~\ref{line:rule}}\label{line:while}
    
            \State let $x_{i+1}$ be the point from $A_{P'}^{\beta}(x_{i}, \tau)$ with the smallest label 
            \label{line:x_i+1}
    
            \State let $S\gets S \circ x_{i+1}$ and $i\gets i+1$
    
            \EndWhile
            \State return $S$
        \end{algorithmic}
    \end{algorithm}
Observe that in \Cref{alg:find_sequence}, for every $i\geq 0$, $x_{i+1}\in A_{P'}^{\beta}(x_{i}, \tau)$.  
        Since for every $p\in P'$, $A_{P'}^{\beta}(p, \tau)\subseteq B_{P'}(p, \beta \tau)$, then we have that
        \begin{equation*}
            \dist(p, R)\leq \sum_{i=1}^{T} \dist(x_{i-1}, x_{i}) \leq T\beta \tau
        \end{equation*}
        by triangle inequality.
        Hence, it remains to give an upper bound for $T$, and we show this in the following \Cref{lemma:bound_T} which is the main technical lemma.
        \begin{lemma}
            \label{lemma:bound_T}
For $t \geq \Omega(\log^2 \Lambda)$,
            we have $\Pr[T\geq t]\leq \exp(-\Omega(t) \log(t / \log \Lambda))$. 
        \end{lemma} 
        \begin{proof}
            The proof is postponed in \Cref{sec:proof_bound_T}.
        \end{proof}
        Finally, as mentioned, applying \Cref{lemma:bound_T} with a union bound finishes the proof of \Cref{lemma:approximation_ratio}.
        \end{proof}

\begin{proof}[Proof of \Cref{thm:ruling_set}]
        Our MPC algorithm for \Cref{thm:ruling_set} is obtained via an MPC implementation of the offline \Cref{alg:rs_hash,alg:local_kcenter}. 
        However, before we do so, our algorithm requires to run the
        Johnson-Lindenstrauss (JL) transform \cite{JL84} on the dataset as preprocessing to reduce $d$ to $d = O(\log n)$, using parameter $\epsilon = O(1)$, restated as follows with our notations.
        \begin{lemma}[JL transform~\cite{JL84}]
            For some $0 < \epsilon < 1$,
            let $M \in \mathbb{R}^{d' \times d}$ be a random matrix such that every entry is an independent standard Gaussian variable $N(0, 1)$ where $d' = O(\epsilon^{-2}\log n )$,
            then the mapping $g : x \mapsto 1 / \sqrt{d'} \cdot Mx$ satisfies that
            with probability $1 - 1 / \poly(n)$,
            $\forall x, y \in P$, $\dist(g(x), g(y))  \in (1 \pm \epsilon) \dist(x, y)$.
        \end{lemma}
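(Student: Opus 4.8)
The plan is to prove this classical fact by the textbook two-step argument: a single-vector concentration bound for the random Gaussian projection, followed by a union bound over all pairs of points in $P$. Since $g$ is linear, $g(x) - g(y) = g(x - y)$, and $\dist(g(x), g(y)) / \dist(x, y) = \|g(u)\|_2$ where $u := (x - y)/\dist(x,y)$ is a unit vector; so it suffices to show that for any fixed unit vector $u \in \RR^d$ we have $\|g(u)\|_2 \in (1 \pm \epsilon)$ with probability at least $1 - n^{-3}$ (say), and then to union-bound over the fewer than $n^2$ pairs $x \ne y$ in $P$.

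First I would reduce the single-vector statement to a $\chi^2$ concentration inequality using the rotational invariance of the Gaussian. The $i$-th coordinate of $Mu$ equals $\langle M_i, u\rangle$, where $M_i$ denotes the $i$-th row of $M$; since $M_i$ has i.i.d.\ $N(0,1)$ entries and $\|u\|_2 = 1$, we have $\langle M_i, u\rangle \sim N(0,1)$, and these are independent across $i \in [d']$. Therefore $\|M u\|_2^2 = \sum_{i=1}^{d'} Z_i^2$ with $Z_1, \dots, Z_{d'}$ i.i.d.\ standard Gaussians, i.e.\ $d' \cdot \|g(u)\|_2^2$ is a $\chi^2$ random variable with $d'$ degrees of freedom, and $\E[\|g(u)\|_2^2] = 1$.

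Next I would invoke the standard concentration bound for $\chi^2$ variables: using the moment generating function $\E[e^{\lambda Z_i^2}] = (1 - 2\lambda)^{-1/2}$ for $\lambda < 1/2$ and a Chernoff argument (together with the symmetric argument for the lower tail), one obtains $\Pr\!\left[\,\bigl|\,\tfrac{1}{d'}\sum_{i=1}^{d'} Z_i^2 - 1\,\bigr| > \epsilon\,\right] \le 2 \exp(-c\,\epsilon^2 d')$ for an absolute constant $c > 0$ and all $\epsilon \in (0,1)$; I would either cite this or carry out the (routine) optimization over $\lambda$. Taking $d' = C\,\epsilon^{-2}\log n$ for a large enough constant $C = C(c)$ makes the right-hand side at most $n^{-3}$. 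On this event $\|g(u)\|_2^2 \in [1-\epsilon, 1+\epsilon]$, and since $\sqrt{1-\epsilon} \ge 1 - \epsilon$ and $\sqrt{1+\epsilon} \le 1 + \epsilon$, this gives $\|g(u)\|_2 \in (1 \pm \epsilon)$ as needed (after at most rescaling the constant hidden in $\epsilon$).

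Finally, a union bound over the $\binom{n}{2} < n^2$ pairs $x \ne y \in P$ shows that with probability at least $1 - n^2 \cdot n^{-3} = 1 - 1/n$, every pairwise distance is preserved within a $(1 \pm \epsilon)$ factor, which is the claimed $1 - 1/\poly(n)$ success probability. There is no genuine obstacle in this argument; the only steps requiring a little care are verifying the rotational-invariance reduction (so that the analysis for an arbitrary point pair really does collapse to that of one fixed unit vector) and tracking the constants so that the per-pair failure probability comfortably beats the $n^2$ pairs in the union bound.
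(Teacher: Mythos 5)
Your proof is correct and is exactly the classical argument behind the cited result: the paper does not prove this lemma itself (it simply invokes \cite{JL84}), and your reduction to a fixed unit vector via linearity and rotational invariance, followed by $\chi^2$ concentration and a union bound over the $\binom{n}{2}$ pairs, is the standard proof of that citation. No gaps; the only care needed is the constant-tracking you already flag.
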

        This JL transform only needs $O(1)$ rounds to run in MPC, since one can generate the matrix $M$ in a lead machine, which uses space $O(d \log n)$, and then broadcast to every other machines.
        Since the pairwise distance between data points is preserved, and that our algorithms only use the distance between data points,
        it is immediate that any $(\tau, \alpha \tau)$-ruling set on $g(P)$
        is as well a $(\Theta(\tau), \Theta(\alpha \tau))$-ruling set on $P$.
        Hence, it suffices to work on $g(P)$ which is of dimension $d' = O(\log n)$.
        Without loss of generality, we can simply assume $P$ is of dimension $O(\log n)$.

        Now, we turn to implement \Cref{alg:rs_hash,alg:local_kcenter} in MPC.
        For \Cref{alg:rs_hash}, since the hash in \Cref{lemma:consist_hash} that we use is data oblivious and only requires $\poly(d) = \poly(\log n)$ space,
        all machines can generate the same hash without communication.
        The other steps in \Cref{alg:rs_hash} may be implemented using standard MPC procedures including sorting, broad-casting and converge-casting.
        For \Cref{alg:local_kcenter},
        the the only nontrivial step is to implement $A_{P'}^{\beta}(\cdot, \cdot)$.
        To this end, we make use of the following lemma from~\cite{CGJKV24}.
In particular, our MPC implementation of \Cref{alg:local_kcenter} applies \Cref{lemma:geometric_aggregation} with $\beta = O(\varepsilon^{-1})$.
        This finishes the description of the algorithm for \Cref{thm:ruling_set}.
\begin{lemma}[{\cite[Theorem 3.1]{CGJKV24}}]
    \label{lemma:geometric_aggregation}
    There is a deterministic MPC algorithm that takes as input
    $0 < \epsilon < 1$, $\tau\geq 0$, $P\subset \mathbb{R}^{d}$ of $n$ points and for each $p \in P$ a value $h(p) \in \mathbb{R}$,
    distributed across machines with local memory $s\geq \poly(d\log n)$,
    computes for every $p\in P$ a value $\min(h(A_{P}(p, \tau)))$, where $A_{P}(p, \tau)$ is an arbitrary set that satisfies 
    \begin{equation}
        B_{P}(p, \tau)\subseteq A_{P}(p, \tau)\subseteq B_{P}(p,  O(\epsilon^{-1} \tau )),
    \end{equation}
    in $O(\log_{s} n)$ rounds and $O(n^{1 + \epsilon}\poly(d\log n))$ total memory.
\end{lemma}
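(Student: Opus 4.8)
The plan is to take $A_P(p,\tau)$ to be the union of the consistent-hashing buckets that meet the ball $B(p,\tau)$, and to compute every value $\min h(A_P(p,\tau))$ by a short pipeline of sorting-based aggregations. First I would instantiate the $(\Gamma,\Lambda)$-hash $\varphi$ of \Cref{lemma:consist_hash} with gap parameter $\Gamma:=\max\{8,\,\Theta(d/(\epsilon\log n))\}\in[8,2d]$ and diameter bound $\ell:=2\Gamma\tau$ (the diameter bound is a free scaling parameter of the construction, exactly as \Cref{alg:rs_hash} uses it); this hash is deterministic, data-oblivious, and stored in $O(d^2\log^2 d)$ bits. In the relevant regime $d=O(\log n)$ --- which the caller anyway enforces by a Johnson--Lindenstrauss reduction before invoking this lemma --- these choices give $\ell=O(\epsilon^{-1}\tau)$ and $\Lambda=\exp(8d/\Gamma)\cdot O(d\log d)=n^{O(\epsilon)}\poly(d\log n)$, and after rescaling $\epsilon$ by a constant we may assume $\Lambda\le n^{\epsilon}$. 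Now set
\[
  A_P(p,\tau):=\bigcup\bigl\{\,\varphi^{-1}(u)\cap P\ :\ u\in\varphi(\mathbb{R}^d),\ \varphi^{-1}(u)\cap B(p,\tau)\neq\emptyset\,\bigr\}.
\]
The inclusion $B_P(p,\tau)\subseteq A_P(p,\tau)$ is immediate, since each $q\in B_P(p,\tau)$ lies in its own bucket $\varphi^{-1}(\varphi(q))$, which meets $B(p,\tau)$ at $q$. Conversely, if $r\in A_P(p,\tau)$ lies in a bucket $\varphi^{-1}(u)$ meeting $B(p,\tau)$ at a point $x$, then $\dist(p,r)\le\dist(p,x)+\diam(\varphi^{-1}(u))\le\tau+\ell=O(\epsilon^{-1}\tau)$, so $A_P(p,\tau)\subseteq B_P(p,O(\epsilon^{-1}\tau))$, as required. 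Finally, since $\diam(B(p,\tau))=2\tau\le\ell/\Gamma$, the consistency guarantee in \Cref{def:consistent_hash} shows that $B(p,\tau)$ meets at most $\Lambda$ buckets, i.e.\ $|\varphi(B(p,\tau))|\le\Lambda$.

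The algorithm then runs four phases, each implemented with the standard $O(\log_s n)$-round MPC primitives: deterministic sorting~\cite{GSZ11}, segmented prefix reductions, and broadcast/converge-cast over $s$-ary trees. Phase~1 (\emph{bucket minima}): sort $P$ by the key $\varphi(q)$ and compute, for every non-empty bucket $u$, $m(u):=\min\{h(q):q\in P,\ \varphi(q)=u\}$ by a segmented prefix-min, producing a table $\{(u,m(u))\}$ of size $O(n)$. Phase~2 (\emph{request generation}): each point $p$ computes, from $p$, $\tau$ and the succinct description of $\varphi$ alone, the at most $\Lambda$ buckets $u$ with $\varphi^{-1}(u)\cap B(p,\tau)\neq\emptyset$ and emits the records $(p,u)$; since $\Lambda$ may exceed the local memory $s$, the item $p$ is first replicated onto $\lceil\Lambda/s\rceil$ machines along a tree of depth $O(\log_s\Lambda)=O(\log_s n)$, so that no machine emits more than $O(s)$ records. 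Phase~3 (\emph{join}): sort the $O(n\Lambda)$ request records together with the bucket table on the bucket id and attach $m(u)$ to each record (records whose bucket contains no point of $P$ are discarded). Phase~4 (\emph{answer}): group the records $(p,u,m(u))$ by $p$, reduce, and output $\min_u m(u)=\min h(A_P(p,\tau))$ for every $p$ (well-defined since $p\in A_P(p,\tau)$). The round count is $O(\log_s n)$ throughout; the total memory is dominated by the $O(n\Lambda)$ records, i.e.\ $O(n^{1+\epsilon}\poly(d\log n))$; a local memory of $s\ge\poly(d\log n)$ suffices to hold the hash description together with each machine's $O(s)$ records; and the whole procedure is deterministic.

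The step I expect to be the main obstacle is Phase~2: one must show not merely that $|\varphi(B(p,\tau))|\le\Lambda$ --- which is immediate from consistency --- but that the set $\varphi(B(p,\tau))$ of nearby buckets can actually be \emph{enumerated} from the $O(d^2\log^2 d)$-bit description of $\varphi$ in a data-oblivious manner and in $\poly(d)$ local space, which requires exploiting the explicit, coordinate-structured form of the hash construction underlying \Cref{lemma:consist_hash}. A secondary point is load balancing: a popular bucket may be requested by very many points, and a single point may request up to $n^{\epsilon}$ buckets; but this is exactly what the sorting-based join of Phase~3 and the tree-replication of Phase~2 are designed to absorb within $O(\log_s n)$ rounds. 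Everything else reduces to routine bookkeeping, matching the consistent-hashing parameters to the targets $\ell=O(\epsilon^{-1}\tau)$ and $\Lambda\le n^{\epsilon}$.
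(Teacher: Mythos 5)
First, a point of comparison: the paper does not prove \Cref{lemma:geometric_aggregation} at all --- it is imported verbatim from \cite[Theorem 3.1]{CGJKV24} and used as a black box. So there is no internal proof to measure your argument against; what you have written is a reconstruction of the external result, and it follows the same route as the cited work: instantiate consistent hashing at diameter scale $O(\epsilon^{-1}\tau)$ so that each ball $B(p,\tau)$ meets at most $\Lambda = n^{O(\epsilon)}$ buckets, define $A_P(p,\tau)$ as the union of those buckets, and aggregate by sorting over the $O(n\Lambda)$ point--bucket incidences. Your sandwich argument, the parameter accounting $\ell=O(\epsilon^{-1}\tau)$ and $\Lambda\leq n^{O(\epsilon)}$, and the four sorting/broadcast phases are all correct and routine.

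The genuine gap is the one you flag yourself, in Phase~2, and it is not a minor implementation detail: it is the actual content of the cited theorem. The consistency guarantee of \Cref{def:consistent_hash} only bounds the cardinality $|\varphi(B(p,\tau))|\leq\Lambda$; it provides no procedure for \emph{listing} those buckets. \Cref{lemma:consist_hash}, as restated in this paper, promises only that $\varphi(x)$ can be evaluated at a given point $x$ in $O(d^{2}\log^{2}d)$ space, and point evaluation does not yield enumeration of the buckets meeting a ball --- indeed a generic $(\Gamma,\Lambda)$-hash satisfying \Cref{def:consistent_hash} need not admit any efficient enumeration. Closing this requires opening up the explicit grid-plus-carving construction behind \Cref{lemma:consist_hash} and showing that the at most $\Lambda$ buckets meeting $B(p,\tau)$ can be generated in an \emph{indexable} fashion (your tree-replication scheme needs each of the $\lceil\Lambda/s\rceil$ replicas of $p$ to produce its own contiguous chunk of the list) using $\poly(d)$ space per bucket; without that, the request records of Phase~2 cannot be emitted and the algorithm does not run. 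Two smaller caveats: (i) your parameters only give $\ell=O(\epsilon^{-1}\tau)$ under the assumption $d=O(\log n)$, which is not part of the lemma statement, though it is harmless where the lemma is invoked (after the Johnson--Lindenstrauss step); (ii) since the paper treats this lemma as external, the ``right'' answer here is arguably just the citation, and your sketch should be read as a plausible but incomplete account of how that citation is proved.
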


        Now we turn to the analysis.
        Observe that the assumptions regarding $A_{P'}(\cdot, \cdot)$
        underlying the analysis of \Cref{alg:rs_hash,alg:local_kcenter} remain valid.
        By \Cref{lemma:C_independent_set}, the found set $R$ is also an $\tau$-independent for $P$ as $R \subseteq P' \subseteq P$.
        Moreover, since we assume $d = O(\log n)$, the parameters $\Gamma = \log n / \log\log n$,
        and $\Lambda = \poly(\log n)$.
        Therefore, applying \Cref{lemma:approximation_ratio} with $t = O(\log n / \log\log n)$,
        we conclude that $\forall p \in P'$, $\dist(p, R) \leq O(\epsilon^{-1} \log n / \log\log n) \tau$,
        with probability $1 - 1 / \poly(n)$.
        Combining with \Cref{lemma:rs_Pprime_to_P},
        we conclude that $R$ is $(\tau, O(\epsilon^{-1}\log n / \log\log n)\tau)$-ruling set for $P$,
        with probability $1 - 1 / \poly(n)$.

Finally, for the round complexity, local memory and total memory, these are dominated by the parallel invocations of \Cref{lemma:geometric_aggregation}, which are $O(\log_{s} n)$, $\poly(d\log n)$ and $O(n^{1+\varepsilon}\poly(d\log n))$, respectively.  
        Therefore, we complete the proof of \Cref{thm:ruling_set}. 
\end{proof}

\subsection{Proof of Lemma~\ref{lemma:bound_T}: Upper Bound The Length of Auxiliary Sequence}
 \label{sec:proof_bound_T}

        Let $S$ be the (random) sequence returned by \Cref{alg:find_sequence},
        and recall that we write $S =(x_0 = p, \ldots, x_T)$.
        For two sequences $S'$ and $S''$, let $S'\circ S''$ denote their concatenation; when $S'' = (x) $ is a singleton sequence, we write $S'\circ x$, which is a shorthand for appending the point $x$ to the sequence $S'$.     
        In addition, let $S'\sqsubseteq S''$ denote that $S'$ is a prefix of $S''$. 
        
        Let $\mathcal{S}$ be the set of all sequences $S':=(x_{0}', x_{1}', \ldots )$ such that: (1) $x_{0}' = p$; (2) for every $i\geq 0$, $x_{i+1}\in A_{P'}^{\beta}(x_{i}, \tau)$.  
        Consider some $S'\in \mathcal{S}$ denoted as $S' := (x_{0}', x_{1}', \ldots, x_{m}')$ and $m\geq 1$.  
        Define 
        \begin{align}
            \label{eqn:Acond_Anew}
            \Acond(S')&:= \bigcup_{i=0}^{m-1}A_{P'}^{\beta}(x_{i}', \tau)\nonumber\\ 
            \Anew(S')&:= A_{P'}^{\beta}(x_{m}', \tau)\setminus \Acond(S'). 
        \end{align} 
        Let $\Econd(S')$ denote the event that $S'\sqsubseteq S$, i.e., \Cref{alg:find_sequence} gets to line~\ref{line:while} with the current sequence equal to $S'$. 
        Let $\Eextend(S')$ denote the event that $\Econd(S')$ holds and in iteration $i=m$ of \Cref{alg:find_sequence},  $x'_{m}$ passes the test in line~\ref{line:while} and the point picked in line~\ref{line:x_i+1} is from $A_{P'}^{\beta}(x_{m}', \tau)$.  
        The next lemma establishes an upper bound for the probability of the sequence gets a new element appended,
        with respect to  $\frac{|\Anew(S')|}{|\Anew(S')\cup \Acond(S')|}$
        which may be understood as the local geometric growth of the dataset.  

        \begin{lemma}
            \label{lemma:ext_cond}
            For every $S'\in \mathcal{S}$, we have that 
            \begin{equation*}
                \label{eqn:formalized}
                \Pr[\Eextend(S')\mid \Econd(S')]\leq \frac{|\Anew(S')|}{|\Anew(S')\cup \Acond(S')|} .
            \end{equation*} 
        \end{lemma}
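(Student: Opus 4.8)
\textbf{Proof plan for \Cref{lemma:ext_cond}.} The plan is to reduce the statement to a short symmetry computation about the i.i.d.\ uniform labels $h(\cdot)$ chosen in \Cref{alg:local_kcenter}; I will work throughout on the probability‑$1$ event that these labels are pairwise distinct. Write $C := \Acond(S')$ and $N := \Anew(S')$, so that $C$ and $N$ are disjoint and $|C \cup N| = |C| + |N| = |\Acond(S')| + |\Anew(S')|$. First I would record two structural facts. (i) Since each ball $A_{P'}^{\beta}(x_i',\tau)$ with $i \le m-1$ is contained in $C$, the event $\Econd(S')$ is — for non‑degenerate $S'$, i.e.\ with distinct consecutive points; degenerate ones have $\Pr[\Econd(S')]=0$ and the claim is vacuous — exactly the conjunction over $i=0,\dots,m-1$ of the events $\{\,x_{i+1}' = \arg\min_{q \in A_{P'}^{\beta}(x_i',\tau)} h(q)\,\}$, hence a function of the \emph{relative order} of $\{h(q): q \in C\}$ alone. (ii) On $\Econd(S')$ the point $x_m'$ carries the smallest label in all of $C$: for every $q \in A_{P'}^{\beta}(x_i',\tau)$ we have $h(x_{i+1}') \le h(q)$, and because $x_i' \in A_{P'}^{\beta}(x_i',\tau)$ also $h(x_{i+1}') \le h(x_i')$, so telescoping gives $h(x_m') \le h(x_{m-1}') \le \dots \le h(x_{i+1}') \le h(q)$; ranging over $i \le m-1$ covers all of $C$, and $x_m' \in A_{P'}^{\beta}(x_{m-1}',\tau) \subseteq C$ using $m \ge 1$.

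Next I would reformulate $\Eextend(S')$ conditionally on $\Econd(S')$. Given $\Econd(S')$, the algorithm appends a further point past $x_m'$ precisely when $x_m'$ is \emph{not} the minimum‑label point of $A_{P'}^{\beta}(x_m',\tau)$; by (ii) every element of $A_{P'}^{\beta}(x_m',\tau) \cap C$ has label $\ge h(x_m')$, so the only possible witness lies in $N = A_{P'}^{\beta}(x_m',\tau) \setminus C$. Hence
\begin{equation*}
  \Eextend(S') \cap \Econd(S') \;=\; \Econd(S') \cap \Bigl\{\, \min_{q \in N} h(q) < h(x_m') \,\Bigr\},
\end{equation*}
where the right‑hand set is read as empty (so the bound is trivial) when $N = \emptyset$.

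Finally I would compute the conditional probability. Put $v := h(x_m')$, which on $\Econd(S')$ equals $\min_{q \in C} h(q)$ by (ii). The labels on $N$ are i.i.d.\ uniform on $[0,1]$ and, since $N \cap C = \emptyset$, independent of the labels on $C$, hence of the pair $(\mathbf{1}_{\Econd(S')}, v)$; therefore
\begin{equation*}
  \Pr[\Eextend(S') \mid \Econd(S')] \;=\; \E\bigl[\, 1 - (1-v)^{|N|} \;\big|\; \Econd(S') \,\bigr].
\end{equation*}
Now by (i) the event $\Econd(S')$ depends only on the relative order of $|C|$ i.i.d.\ uniform labels, which is independent of their order statistics, and $v$ is the smallest of those order statistics; so the conditioning drops out and the displayed expectation equals $\E\bigl[\,1-(1-V)^{|N|}\,\bigr]$ with $V$ the minimum of $|C|$ i.i.d.\ uniforms. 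Integrating against the density $|C|(1-v)^{|C|-1}$ gives
\begin{equation*}
  \E\bigl[1-(1-V)^{|N|}\bigr] \;=\; 1 - \tfrac{|C|}{|C|+|N|} \;=\; \tfrac{|N|}{|C|+|N|} \;=\; \tfrac{|\Anew(S')|}{|\Anew(S')\cup\Acond(S')|},
\end{equation*}
which is the asserted bound (in fact an equality, given $m \ge 1$).

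I expect the delicate points to be exactly two. The main one is the reformulation in the second paragraph — the claim that, conditioned on $\Econd(S')$, the algorithm extends the sequence \emph{iff} some point of $N$ beats $x_m'$'s label — which rests entirely on fact (ii), namely that $x_m'$ already dominates the label of every point of $C$; getting (ii) cleanly via the telescoping chain, and being careful that $m \ge 1$ so that $C \ne \emptyset$ and $x_m' \in C$, is where the work is. The secondary point is the appeal in the last paragraph to the classical independence between a rank‑measurable event and the order statistics of i.i.d.\ continuous variables, which is what lets us peel off the conditioning on $\Econd(S')$. Everything else — disjointness of $C$ and $N$, the reduction of $\Econd(S')$ to a rank event, the Beta‑type integral, and the harmless cases $N=\emptyset$ and degenerate $S'$ — is routine.
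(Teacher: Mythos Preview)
Your proof is correct. The structural facts (i) and (ii) and the reformulation of $\Eextend(S')$ on $\Econd(S')$ exactly match the paper's argument. The difference lies only in the final computation: the paper sums over potential witnesses $q\in\Anew(S')$ and, for each, invokes a cited symmetry claim (that for i.i.d.\ uniform labels, the probability a fixed element is the minimum over a set $X$ given any partial order on $X\setminus\{q\}$ is $1/|X|$) to obtain $1/|\Anew\cup\Acond|$ per term; you instead condition on $v=h(x'_m)$, use independence of $N$-labels from $C$-labels, and integrate $1-(1-v)^{|N|}$ against the Beta density of the minimum of $|C|$ uniforms. Your route is more self-contained---it replaces the external partial-order lemma with the classical independence of ranks and order statistics plus an explicit integral---and it cleanly yields equality rather than just the inequality, while the paper's route packages the same symmetry into a reusable claim.
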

        \begin{proof} 
            Let $S' := (x_{0}'=p, x_{1}', \ldots, x_{m}')\in \mathcal{S}$ be a fixed sequence with a length of $m+1$.  
            Observe that the event $\Econd(S')$ may depend on the random label of points in the set $\Acond(S') = \bigcup_{i=0}^{m-1} A_{P'}^{\beta}(x_{i}', \tau)$, but not on those of points outside $\Acond(S')$. 
            The label choices for points outside $\Acond(S')$ and inside $\Acond(S')$ are independent, hence by the principle of deferred decisions, conditioning on $\Econd(S')$ does not change the distribution of label choices for points in $\Anew(S')=A_{P'}^{\beta}(x_{m}', \tau)\setminus \Acond(S')$, which in turn affect the event $\Eextend(S')$ (e.g., whether $x_{m}'$ passes the test). 
            
            When $\Econd(S')$ occurs, each point $x_{i}'$, for $i\in[m]$, has the smallest label in $A_{P'}^{\beta}(x_{i-1}', \tau)$,
            and it follows by induction that $x_{m}'$ has the smallest label in $\Acond(S')$. 
            When $\Eextend(S')$ occurs, some $x_{m+1}'\neq x_{m}'$ has the smallest label in $A_{P'}^{\beta}(x_{m}', \tau)$, and thus $h(x_{m+1}')< h(x_{m}') = \min(h(\Acond(S')))$, implying that $x_{m+1}'\notin \Acond(S')$ and in fact $x_{m+1}'\in \Anew(S')$. 
            Therefore, 
        \begin{eqnarray}
            \label{eqn:condition_probability}
            \Pr[\Eextend(S') \mid \Econd(S')] \leq \sum_{q\in \Anew(S')} \Pr[h(q) = \min(h(A_{P'}^{\beta}(x_{m}', \tau)))\mid \Econd(S')].
        \end{eqnarray} 
    
        \begin{claim}[\cite{CGJKV24}, Claim 4.13]
            \label{claim:partial_order}
            Fix a finite domain $\mathcal{U}$, and let $g: \mathcal{U}\to [0, 1]$ be random, such that each $g(a)$ is chosen independently and uniformly from $[0, 1]$. 
            Then for every subset $X\subseteq \mathcal{U}$, element $a\in X$, and partial order $\mathcal{P}$ on $X\setminus\{a\}$, we have 
            \begin{equation*}
                \Pr_{g} [g(a) = \min(g(X)) \mid g\propto \mathcal{P}] = 1/|X|, 
            \end{equation*}
            where $g\propto \mathcal{P}$ denotes consistency in the sense that $g(x)\leq g(y)$ whenever $x\prec_{\mathcal{P}} y$. 
        \end{claim} 

        To upper bound \eqref{eqn:condition_probability}, it suffices to give for every $q \in \Anew(S')$ an upper bound for $\Pr[h(q) = \min(h(A_{P'}^{\beta} (x_{m}', \tau)))\mid \Econd(S')]$. 
        Notice the event $\Econd(S')$ defines a partial order via the label $h$,
        such that for $i\in[m]$, each point $x_{i}'$ has the smallest label in $A_{P'}^{\beta}(x_{i-1}', \tau)$,
        i.e., for each $y\neq x_{i}'\in A_{P'}^{\beta}(x_{i-1}', \tau)$, $h(x_{i}')< h(y)$, and this is in particular a partial order on $\Anew(S')\cup \Acond(S') \setminus \{q\}$. 
        Now, apply \Cref{claim:partial_order} with $g:= h$, $a:= q$, $X:= \Anew(S')\cup \Acond(S')$ which contains $a$, and a partial order $\mathcal{P}$ defined from the event $\Econd(S')$ (as mentioned above). 
            Hence,  
        \begin{equation*}
            \forall q\in \Anew(S'), \quad \Pr[h(q) = \min(h(A_{P'}^{\beta}(x_{m}', \tau))) \mid \Econd(S')] = \frac{1}{|\Anew(S')\cup \Acond(S') |}.
        \end{equation*}
            Plugging this into \eqref{eqn:condition_probability}, we obtain that $\Pr[\Eextend(S')\mid \Econd(S')]\leq \frac{|\Anew(S')|}{|\Anew(S')\cup \Acond(S')|}$, 
            and this completes the proof of \Cref{lemma:ext_cond}.  
        \end{proof}

        We continue the proof of \Cref{lemma:bound_T}.
Let $f: \mathcal{S}\to \mathbb{Z}$ be a function such that for each $S'\in \mathcal{S}$,  
        \begin{align*}
            f(S') := \begin{cases}
               i & |\Anew(S')| \in [2^{i-1}, 2^{i})  \\
               -\infty &  \Anew(S') = \emptyset.
            \end{cases}
        \end{align*}
        The following fact follows from \Cref{fact:Lambda_ub}.
        \begin{fact}
            For any $S' \in \mathcal{S}$, we have $f(S') \in \{-\infty\} \cup [\lceil \log_2 \Lambda \rceil] $.
        \end{fact}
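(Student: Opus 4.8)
The plan is a one-line case split driven purely by the definition of $f$ together with \Cref{fact:Lambda_ub}. First I would dispose of the case $\Anew(S') = \emptyset$: here the definition directly gives $f(S') = -\infty$, which lies in $\{-\infty\} \cup [\lceil \log_2 \Lambda \rceil]$, so there is nothing further to do.

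In the remaining case $\Anew(S') \neq \emptyset$, write $S' = (x_0', \ldots, x_m')$ so that $\Anew(S') = A_{P'}^{\beta}(x_m', \tau) \setminus \Acond(S')$. I would then establish the two sandwiching bounds $1 \le |\Anew(S')| \le \Lambda$: the lower bound is immediate since $\Anew(S')$ is nonempty, and the upper bound follows because $\Anew(S') \subseteq A_{P'}^{\beta}(x_m', \tau)$ and $x_m' \in P'$, so \Cref{fact:Lambda_ub} applies to give $|A_{P'}^{\beta}(x_m', \tau)| \le \Lambda$. Consequently the unique integer $i$ with $|\Anew(S')| \in [2^{i-1}, 2^i)$ — which is exactly $f(S')$ by definition — satisfies $i \ge 1$ (forced by $|\Anew(S')| \ge 1 < 2^i$) and $2^{i-1} \le |\Anew(S')| \le \Lambda$, hence $i \le \lceil \log_2 \Lambda \rceil$; that is, $f(S') \in [\lceil \log_2 \Lambda \rceil]$. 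Combining the two cases yields the claim.

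I do not expect any real obstacle here; the only point deserving a word of care is the elementary implication ``$2^{i-1} \le \Lambda \Rightarrow i \le \lceil \log_2 \Lambda \rceil$'' for integer $i$, which holds unless $\Lambda$ is exactly a power of two. Since $\Lambda$ enters only as an upper bound of the form $\poly(d\log n)$ on neighborhood sizes, one may without loss of generality replace it by the next integer that is not a power of two, so the off-by-one disappears and the stated range is attained.
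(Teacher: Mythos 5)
Your proof is correct and is exactly the argument the paper intends: the paper gives no details beyond ``follows from Fact~\ref{fact:Lambda_ub}'', and your case split plus the containment $\Anew(S') \subseteq A_{P'}^{\beta}(x_m',\tau)$ is the intended one-liner. Your observation about the off-by-one when $\Lambda$ is an exact power of two is a fair nitpick of the paper's statement, and your fix (or simply writing $\lfloor \log_2 \Lambda\rfloor + 1$) is harmless since every downstream use only needs the range of $f$ to have $O(\log\Lambda)$ elements.
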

Intuitively, this function $f$ takes value $i$ when $\Anew(S')$ is around $[2^{i-1}, 2^{i})$. 
Now, for a given sequence $S'$, we consider the \emph{configuration} of $S'$ as the $f$ values for every prefix of $S'$.
Namely, for every $S'\in\mathcal{S}$ denoted as $S':=(x_{0}', x_{1}', \ldots, x_{m}')$ and $i\geq 0$, let $S'_{[0, i]}$ denote the prefix of $S'$ ending with $x_{i}'$, 
        and denote $\conf(S'):= (f(S'_{[0,1]}), \ldots, f(S'_{[0, m]}))$ as the \emph{configuration} of $S'$.
        For $\pi\in (\{-\infty\} \cup [\lceil \log_2\Lambda \rceil])^{i}$ (for some $i\geq 1$), let $\mathcal{S}_{\pi} := \{S'\in \mathcal{S}: \conf(S') = \pi\}$. 

    The next lemma is a key lemma which upper bounds the probability that a sequence has length at least $t$, for sequences of a fixed configuration $\pi$.
    This lemma would be used to conclude \Cref{lemma:bound_T} by a union bound over all possible configurations $\pi \in (\{-\infty\} \cup [\lceil\log_2 \Lambda\rceil])^t$.

\begin{lemma}
    \label{lemma:key}
    For every  $\pi\in (\{-\infty\} \cup [\lceil \log_2 \Lambda \rceil])^{t}$, $\sum_{S'\in \mathcal{S}_{\pi}} \Pr[S'\sqsubseteq S]\leq \exp(- \Omega(t\log (t / \log \Lambda)))$. 
\end{lemma}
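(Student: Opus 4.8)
The plan is to decompose the event $\{S' \sqsubseteq S\}$ for $S' \in \mathcal{S}_\pi$ into the telescoping product of conditional extension probabilities, bound each factor using \Cref{lemma:ext_cond}, and then sum over $\mathcal{S}_\pi$ by exploiting the rigidity that the fixed configuration $\pi$ imposes on the sizes $|\Anew(\cdot)|$. Concretely, for a sequence $S' = (x_0' = p, x_1', \ldots, x_t')$, write $S'_{[0,i]}$ for its length-$(i{+}1)$ prefix; then
\[
\Pr[S' \sqsubseteq S] \;=\; \prod_{i=0}^{t-1} \Pr\big[\Eextend(S'_{[0,i]}) \;\big|\; \Econd(S'_{[0,i]})\big]
\;\le\; \prod_{i=1}^{t-1} \frac{|\Anew(S'_{[0,i]})|}{|\Anew(S'_{[0,i]}) \cup \Acond(S'_{[0,i]})|},
\]
where I first observe $\Econd(S'_{[0,i+1]})$ occurs exactly when $\Econd(S'_{[0,i]})$ holds, $\Eextend(S'_{[0,i]})$ holds, \emph{and} the new point picked in line~\ref{line:x_i+1} is precisely $x_{i+1}'$; conditioned on the first two, the last has probability at most $1/|\Anew(S'_{[0,i]})|$ by a deferred-decisions argument analogous to \Cref{claim:partial_order}. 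This cancels the numerators above, leaving
\[
\Pr[S' \sqsubseteq S] \;\le\; \prod_{i=1}^{t-1} \frac{1}{|\Acond(S'_{[0,i]})|}.
\]
Since $\Acond(S'_{[0,i]}) = \bigcup_{j=0}^{i-1} A_{P'}^\beta(x_j',\tau) \supseteq \bigcup_{j: f(S'_{[0,j+1]}) = f(S'_{[0,i]})} \Anew(S'_{[0,j+1]})$ is a disjoint union over the ``new'' increments, $|\Acond(S'_{[0,i]})|$ is at least the sum of all earlier $|\Anew|$ values, which by the configuration $\pi$ are each $\ge 2^{\pi_j - 1}$. This gives a lower bound on $|\Acond(S'_{[0,i]})|$ depending only on $\pi$, hence a uniform upper bound $g_\pi \le \prod_{i=1}^{t-1} |\Acond(S'_{[0,i]})|^{-1}$ valid for every $S' \in \mathcal{S}_\pi$.

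Next I need to count $|\mathcal{S}_\pi|$, the number of admissible sequences with configuration $\pi$. Here each $x_{i+1}'$ must lie in $A_{P'}^\beta(x_i',\tau)$, which by \Cref{fact:Lambda_ub} has size at most $\Lambda$; but a coarser and more useful bound is that, given the prefix ending at $x_i'$, the choice of $x_{i+1}'$ is constrained to lie in $\Anew(S'_{[0,i+1]})$ \emph{or} in $\Acond$ — and combining the $f$-value constraint $|\Anew(S'_{[0,i+1]})| \in [2^{\pi_{i+1}-1}, 2^{\pi_{i+1}})$ with the fact that every point of $P'$ lies in only few balls (a consequence of the packing/consistent-hashing structure underlying \Cref{fact:Lambda_ub}), one bounds the branching at step $i$ by roughly $|\Acond(S'_{[0,i]})| + 2^{\pi_{i+1}}$, which is at most a constant times $|\Acond(S'_{[0,i+1]})|$. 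Thus $|\mathcal{S}_\pi| \le \prod_{i=1}^{t-1} O(|\Acond(S'_{[0,i+1]})|)$, and multiplying against $g_\pi$ the cumulative-size factors nearly cancel up to the $O(1)^t$ slack, leaving an expression that is bounded by $C^t \prod_{i} \big(|\Acond_i| / |\Acond_{i+1}|\big)$-type ratios — and more carefully, an expression of the form $C^t \cdot \big(\sum \text{of the } 2^{\pi_i}\big)^{-1}$-weighted product. The point is that after the cancellation, what remains is governed by how fast the cumulative sizes $|\Acond_i|$ grow, and since each $|\Acond_i| \le i\Lambda$, a convexity/AM-GM estimate on $\prod_i (|\Acond_{i+1}|/|\Acond_i|)^{-1}$-style quantities forces the product down: roughly, when all sizes are capped at $\Lambda$ the product of $|\Acond_i|^{-1}$ over $i = \log\Lambda, \ldots, t$ is at most $(\log\Lambda / t)^{\Omega(t)} \cdot t^{-\Omega(t)}$ after accounting for the $C^t$ branching, i.e., $\exp(-\Omega(t \log(t/\log\Lambda)))$ once $t \ge \Omega(\log^2\Lambda)$ so that $\log(t/\log\Lambda) = \Omega(\log\log\Lambda)$ dominates the $\log C$ branching loss.

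The main obstacle is the interplay between the two opposing forces: the union bound over $\mathcal{S}_\pi$ contributes a branching factor that grows \emph{with} the cumulative neighborhood sizes, while \Cref{lemma:ext_cond} contributes a decay that shrinks \emph{with} those same sizes, so the naive products of each are individually enormous or useless and only their ratio is controlled. Making this cancellation rigorous — i.e., pairing each branching choice of $x_{i+1}'$ against the $1/|\Acond|$ factor it generates so that the telescoping actually goes through, and then showing the leftover $O(1)^t$ slack is absorbed by the $\log(t/\log\Lambda)$ gain provided $t$ is large enough relative to $\log\Lambda$ — is the technical heart. I would organize it by first proving the per-sequence bound $\Pr[S'\sqsubseteq S] \le \prod_i |\Acond_i|^{-1}$ cleanly, then setting up the counting of $\mathcal{S}_\pi$ so that the branching at step $i$ is bounded by (a constant times) the very quantity $|\Acond_{i+1}|$ that the next factor of the product will divide by, and finally reducing everything to the scalar inequality $C^t \prod_{i=m}^{t} a_i^{-1} \cdot \prod_{i=m}^{t} a_i \cdot (\text{correction}) \le \exp(-\Omega(t\log(t/\log\Lambda)))$ where $m = \Theta(\log\Lambda)$ and $a_i \le \Lambda$, which follows because the capped growth of the $a_i$'s is the worst case and one computes it directly by splitting the range of $i$ at $\Theta(\log\Lambda)$.
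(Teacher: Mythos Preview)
Your overall strategy---telescope via \Cref{lemma:ext_cond} and then sum over $\mathcal{S}_\pi$---is the paper's, but the execution has a real gap at the counting step. You bound the branching at step $i$ by ``roughly $|\Acond(S'_{[0,i]})| + 2^{\pi_{i+1}}$,'' i.e., essentially by the whole accumulated neighborhood; with that bound the cancellation against your per-sequence factor $\prod_i |\Acond_i|^{-1}$ yields only $O(1)^t$, and no subsequent ``convexity/AM-GM'' manipulation can manufacture the missing $(t/\log\Lambda)^{-\Omega(t)}$. The observation you are missing is that for any $S'$ with $\Pr[S'\sqsubseteq S]>0$, the new element $x'_{i+1}$ must lie in $\Anew(S'_{[0,i]})$ (it has to beat every label already seen in $\Acond$), so the effective branching is at most $|\Anew(S'_{[0,i]})| < 2^{\pi_i}$, not $|\Acond|$. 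The paper sidesteps the whole split by recursing directly on the sum $\sum_{S'\in\mathcal{S}_{\pi_{\le i}}}\Pr[S'\sqsubseteq S]$: summing the conditional probability over all possible extensions $x$ gives exactly $\Pr[\Eextend(S'')\mid\Econd(S'')]$, so ``counting'' and ``per-sequence bound'' never separate and one obtains directly
\[
\sum_{S'\in\mathcal{S}_\pi}\Pr[S'\sqsubseteq S]\;\le\;\prod_{i=1}^{t-1}\frac{2^{1+\pi_i}}{\sum_{j=1}^{i} 2^{\pi_j}}.
\]

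Your proposal also lacks the endgame that turns this product into $\exp(-\Omega(t\log(t/\log\Lambda)))$. This is not a generic AM-GM on the $|\Acond_i|$'s. The paper groups indices by level, setting $w_\ell=|\{j:\pi_j=\ell\}|$ for $\ell\in[\lceil\log_2\Lambda\rceil]$, and observes that within a level the cumulative denominators contribute at least $w_\ell!\cdot 2^{\ell w_\ell}$, so the whole product is at most $\prod_\ell 2^{w_\ell}/w_\ell!$. Stirling and then Jensen (applied to $-\sum_\ell w_\ell\log w_\ell$ subject to $\sum_\ell w_\ell = t-1$ over at most $\lceil\log_2\Lambda\rceil$ levels) give the bound. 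The crucial gain comes from the \emph{factorial} arising from repeated values among the $\pi_i$'s; your sketch, which relies on the cap $|\Acond_i|\le i\Lambda$ and a ``capped growth worst case,'' does not locate this source of decay.
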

\begin{proof} 
    For every $i\in [t]$, let $\pi_{\leq i}$ denote the prefix of $\pi$ with a length of $i$, and let $\pi_{i}$ denote the $i$th item of $\pi$.  
    Observe that if $\pi_i = -\infty$ for $i < t$, then $\Pr[S' \sqsubset S] = 0$ for any $S' \in \mathcal{S}_\pi$.
Hence, we only need to verify the claim for a $\pi\in (\{-\infty\} \cup [\lceil \log_2 \Lambda \rceil])^{t}$ such that $\pi_i \neq -\infty$ for every $i < t$.

    Now, for $i  \geq 2$, we start with proving 
    \begin{equation}
    \label{eqn:Si_Si-1}
        \sum_{S'\in\Spi{\leq i}} \Pr[S'\sqsubseteq S] \leq \sum_{S''\in\Spi{\leq i-1}} \Pr[\exists x\in P', S''\circ x\sqsubseteq S].
    \end{equation} 
    For every $i\in [t]$, let $\phi_i: \Spi{\leq i} \to \{(S'', x): S''\in\Spi{\leq i-1}, x\in P'\}$ be a mapping,
    where for $S'\in \Spi{\leq i}$, $\phi_i(S')$ maps to the unique pair $(S'', x)$ satisfying $S' = S''\circ x$ (i.e., $S'' = S'_{[0, i - 1]}$).   
    Then for every $S' \in \Spi{\leq i}$, $\Pr[S' \sqsubseteq S] = \Pr[S'' \circ x \sqsubseteq S]$
    where $(S'', x) = \phi_{i}(S')$.
    Moreover, $\phi_i$ is an injection by definition.
    Therefore,
    \begin{align*}
        \sum_{S'\in\Spi{\leq i}} \Pr[S'\sqsubseteq S]
        &= \sum_{\substack{ (S'', x) :\\
        (S'', x) = \phi_i(S'),S' \in \Spi{\leq i} } } \Pr[S'' \circ x \sqsubseteq S] \\
        &\leq \sum_{S''\in\Spi{\leq i-1}} \sum_{x \in P'} \Pr[S''\circ x\sqsubseteq S] \\
        &= \sum_{S''\in\Spi{\leq i-1}} \Pr[\exists x\in P', S''\circ x\sqsubseteq S],
    \end{align*}
    where the inequality uses that $\phi_i$ is injection.
    This concludes \eqref{eqn:Si_Si-1}.
    
    Therefore, we have that 
    \begin{align}
        \sum_{S'\in \Spi{\leq i}}\Pr[S'\sqsubseteq S] 
        &\leq  \sum_{S''\in \Spi{\leq i-1}} \Pr[\exists x\in P', S''\circ x\sqsubseteq S]\nonumber\\
        &=\sum_{S''\in \Spi{\leq i-1}} \Pr[\exists x\in P', S''\circ x\sqsubseteq S\mid S''\sqsubseteq S] \cdot \Pr[S''\sqsubseteq S]\nonumber\\
        &= \sum_{S''\in \Spi{\leq i-1}} \Pr[\Eextend(S'')\mid \Econd(S'')]\cdot \Pr[S''\sqsubseteq S]
\nonumber\\
        &\leq \sum_{S''\in \Spi{\leq i-1}} \frac{|\Anew(S'')|}{|\Anew(S'')\cup \Acond(S'')|} \cdot \Pr[S''\sqsubseteq S] \nonumber \\
        &\leq \sum_{S''\in \Spi{\leq i-1}} \frac{2^{1 + \pi_{i - 1}}}{\sum_{j=1}^{i - 1} 2^{\pi_j}} \cdot \Pr[S''\sqsubseteq S] \label{eqn:pi_recursion}
    \end{align}
    where the second inequality follows from \Cref{lemma:ext_cond},
    and the last inequality follows from \eqref{eqn:Acond_Anew}.
    The assumption that $\pi_i = -\infty$ may happen only when $i = t$
    ensures that the denominators $\sum_{j = 1}^{i - 1}2^{\pi_j}$ cannot be zero.
    Applying \eqref{eqn:pi_recursion} inductively,
    \begin{align*}
        \sum_{S'\in \Spi{\leq t}}\Pr[S'\sqsubseteq S] 
        \leq \prod_{i = 1}^{t - 1} \frac{2^{1 + \pi_{i}}}{\sum_{j=1}^{i} 2^{\pi_j}}.
    \end{align*}
    For every $i \in [\lceil \log_2 \Lambda \rceil]$,
    let $G_i := \{ \pi_j = i : j \in [t - 1] \} $ and let $w_i := |G_i|$.
    Then 
    \begin{align*}
        \prod_{i = 1}^{t - 1} \frac{2^{1 + \pi_{i}}}{\sum_{j=1}^{i} 2^{\pi_j}}
        &\leq \prod_{i \in [\lceil \log_2 \Lambda \rceil]} \frac{ 2^{(1 + i)w_i} }{2^{i w_i} \cdot w_i!}
        \leq \prod_{i \in [\lceil \log_2 \Lambda \rceil]} \frac{ 2^{w_i} }{w_i!}
&\leq \prod_{i \in [\lceil \log_2 \Lambda \rceil]} O(\sqrt w_i) \cdot \frac{ 2^{w_i} }{(w_i / e)^{w_i}},
    \end{align*}
    where the first inequality follows from the fact that
    \begin{align*}
        \prod_{i = 1}^{t - 1} \sum_{j = 1}^i 2^{\pi_j}
        \geq \prod_{i \in [\lceil \log_2 \lambda \rceil]} \prod_{j  \in G_i} \sum_{s = 1}^j 2^{\pi_s}
        \geq \prod_{i \in [\lceil \log_2 \lambda \rceil]} \prod_{j  \in G_i} \sum_{s \in [j] : s \in G_i} 2^{\pi_s}
        &= \prod_{i \in [\lceil \log_2 \lambda \rceil]} \prod_{j  \in [w_i]} j 2^{i} \\
        &\geq \prod_{i \in [\lceil \log_2 \lambda \rceil]}  2^{i w_i} w_i!
    \end{align*}
    Taking logarithm,
    \begin{align*}
        \log 
        \prod_{i = 1}^{t - 1} \frac{2^{1 + \pi_{i}}}{\sum_{j=1}^{i} 2^{\pi_j}}
        &\leq \log \prod_{i \in [\lceil \log_2 \Lambda \rceil]} O(\sqrt w_i) \cdot \frac{ 2^{w_i} }{(w_i / e)^{w_i}} \\
        &= \sum_{i \in [\lceil \log_2 \Lambda \rceil]} \log\left( O(\sqrt w_i) \cdot \frac{2^{w_i}}{(w_i / e)^{w_i}} \right) \\
        &\leq \sum_{i \in [\lceil \log_2 \Lambda \rceil]} O( \log(w_i) + w_i - w_i\log(w_i / e) ) \\
        &\leq \sum_{i \in [\lceil \log_2 \Lambda \rceil]} O( - w_i\log(w_i) ) \\
        &\leq O(-t \log(t / \log \Lambda)),
    \end{align*}
    where the last inequality follows from the fact that $\sum_i w_i \leq t$ and Jensen's inequality.
    We conclude that $\log \sum_{S'\in \Spi{\leq t}}\Pr[S'\sqsubseteq S] \leq O(-t \log(t / \log \Lambda))$,
    and this finishes the proof.
\end{proof}

\begin{proof}[Proof of \Cref{lemma:bound_T}]
Finally, we conclude \Cref{lemma:bound_T} by a union-bound argument.
\begin{align*}
    \Pr[T\geq t]
    &= \sum_{S'\in \mathcal{S}:|S'| = t+1} \Pr[S'\sqsubseteq S] \\
    &= \sum_{\pi \in [\lceil \log_2 \Lambda \rceil]^{t}}\sum_{S'\in \mathcal{S}_{\pi}} \Pr[S'\sqsubseteq S] \\
    &\leq (\log\Lambda )^{O(t)} \cdot \exp(-\Omega(t \log(t / \log\Lambda))) \\
    &\leq \exp(O(t)\log\log \Lambda - \Omega(t) \log(t / \log \Lambda)) \\
    &\leq \exp(-\Omega(t) \log(t / \log \Lambda)),
\end{align*}
where the first inequality is by applying \Cref{lemma:key},
and the last inequality uses that $t \geq \Omega(\log^2 \Lambda)$.
This completes the proof of \Cref{lemma:bound_T}.
\end{proof}

     \section{Proof of \Cref{thm:low_dim_2approx,thm:low_dim_bicrit,thm:kcenter_high_dim}: MPC Algorithms for \kCenter}
\label{sec:kcenter}

We present the proofs for \Cref{thm:low_dim_2approx,thm:low_dim_bicrit,thm:kcenter_high_dim}. 
As mentioned, we obtain these results by reducing to geometric versions of either RS or MDS.
Hence, we present the reduction as a meta algorithm, as in \Cref{alg:kcenter}, instead of separated concrete algorithms.
The concrete algorithms can be derived by plugging in suitable combination of RS or MDS results.

We note that this meta algorithm only finds a set of \emph{centers} as approximate solution,
and we discuss in \Cref{sec:assignment} how to also find the approximate \emph{assignment} from data points to the found center set.
This procedure introduces a slight loss in the ratio, but it is only minor;
in particular, in low dimension the ratio increases by only a $(1 + \epsilon)$ factor,
and in high dimension the ratio increases by a constant factor.

\begin{algorithm}[ht]
    \caption{MPC Algorithms for \kCenter via RS and MDS with parameter $0 < \epsilon < 1$}
    \label{alg:kcenter}
    \begin{algorithmic}[1]
        \State return $P$ as the solution if $P$ contains at most $k$ distinct points 
        \label{line:test_distinct}
        \State compute an $\alpha:=\poly(n)$-approximate value $\APX$ for \kCenter on $P$ using \Cref{lemma:guess_OPT}
        \label{line:guess}
        \State let $Z\gets \{i\in \ZZ: \APX / \alpha \leq (1+\varepsilon)^{i} \leq \APX \}$ be a set of integer power of $(1 + \varepsilon)$
        \label{line:candidate_value}
        
        \For{$\tau \in Z$ in parallel}

        \State (RS-based approx.) 
        \label{line:ruling_set}
find $I_{\tau}\gets$ $(2\tau, \gamma\tau)$-ruling set for $P$ by \Cref{thm:MIS} or \Cref{thm:ruling_set} 

        \Comment{where $\gamma = 2(1 + \epsilon)$ as in \Cref{thm:MIS}, or $\gamma = O(\epsilon^{-1}\frac{\log n}{\log\log n})$ as in \Cref{thm:ruling_set}}

        \State (MDS-based approx.) 
        
        Let $D_\tau$ be $(1 + \epsilon)\tau$-DS with size $|D_\tau| \leq (1 + \epsilon) |\MDS_\tau(P)|$   by \Cref{thm:mds}
        \label{line:bicsolution}

        \EndFor

        \State (RS-based approx.) 
        let $\tau^* \in Z$ be the smallest $\tau$ such that $|I_\tau| \leq k$, return $I_{\tau^*}$
        \label{line:2ereturn}

        \State (MDS-based approx.) let $\tau^* \in Z$ be the smallest $\tau$ such that $|D_\tau| \leq (1+\varepsilon)k$, return $D_{\tau^*}$
        \label{line:bicreturn}

    \end{algorithmic}
\end{algorithm}

\paragraph{Algorithm.}
The algorithm, listed in \Cref{alg:kcenter}, takes as input a dataset $P \subseteq \mathbb{R}^d$, integer $k \geq 1$ and a parameter $0 < \epsilon < 1$.
The algorithm starts with testing if $\OPT = 0$, which is equivalent to $P$ has at most $k$ distinct points.
If this does not happen, the algorithm ``guesses'' OPT (up to $(1 + \varepsilon)$ factor).
This step requires to first obtain a $\poly(n)$-approximate value $\APX$ for $\OPT$ using \Cref{lemma:guess_OPT} (which can be found in \Cref{subsec:guess}), and this is one of the randomness in the algorithm. 
Then we search for a suitable threshold $\tau$ and apply \Cref{thm:MIS,thm:ruling_set,thm:mds} to compute a center set.

\paragraph{Round complexity analysis.}
Line~\ref{line:test_distinct} can be implemented in $O(\log_s n)$ MPC rounds via sorting and broadcasting. 
The number of rounds for line~\ref{line:guess} is dominated by \Cref{lemma:guess_OPT}, which is $O(\log_{s} n)$ rounds. 
Observe that $|Z| = O(\log n)$, so there are $O(\log n)$ parallel invocations of the parallel-for.
Hence, the number of rounds for the parallel-for is of the same order as in \Cref{thm:MIS}, \Cref{thm:mds} and \Cref{thm:ruling_set}, which is $O(\log_s n)$. 
This finishes the round analysis.

\paragraph{Space complexity analysis.}
For the space, it is dominated by that of \Cref{thm:MIS}, \Cref{thm:mds} and \Cref{thm:ruling_set}, multiplied by $|Z|$ which is the number of parallel iterations (since they need to be replicated). 
By \Cref{thm:MIS,thm:mds},
we conclude that the local space requirement is $s \geq \Omega(\varepsilon^{-1}d)^{\Omega(d)} \poly(\log n)$,
and the total space is $O(n \poly( \log n))\cdot O(\varepsilon^{-1}d)^{O(d)}$ for $(2 + \epsilon)$-approximation and $(1 + \epsilon, 1 + \epsilon)$-approximation, which concludes the space analysis for both \Cref{thm:low_dim_2approx,thm:low_dim_bicrit}.
By \Cref{thm:ruling_set}, we conclude that the local space requirement is $s\geq \poly(d\log n)$, and the total space is $O(n^{1+\epsilon}\poly(d\log n))$ for $O(\epsilon^{-1}\log n/ \log \log n)$-approximation, which completes the space analysis of \Cref{thm:kcenter_high_dim}. 
This concludes the space analysis.

\paragraph{Error analysis: the common part.}
As we mentioned, if the algorithm terminates at line~\ref{line:test_distinct}, then $\OPT = 0$ and the entire dataset itself is the optimal solution.
Now assume the dataset consists of more than $k$ distinct points, which means $\OPT > 0$. 
As in line~\ref{line:guess} and~\ref{line:candidate_value}, we use \Cref{lemma:guess_OPT} to obtain $\poly(n)$-approximation to $\OPT$
and it succeeds with probability at least $1-1/n$.
Hence, with probability at least $1-1/n$, there exists some $\hat{\tau}\in Z$ such that $\OPT\leq \hat{\tau}\leq (1+\varepsilon)\OPT$. 
We condition on this high-probability event happens in the remainder of the proof.

\paragraph{Error analysis for RS-based approximation.}

We start with justifying that the algorithm is well-defined, i.e., $\tau^*$ in line~\ref{line:2ereturn} must exist.
In particular, we claim that if $\tau \geq \OPT(P)$ then $|I_\tau| \leq k$. 
This claim implies the existence of $\tau^*$ since $\tau = \hat{\tau}$ satisfies $\tau \geq \OPT$,
which also implies
\begin{equation}
    \label{eqn:taustar}
    \tau^* \leq \hat{\tau} \leq (1 + \epsilon) \OPT.
\end{equation}

Now, consider some $\tau\geq \OPT$ (noticing that $\hat{\tau}$ satisfies this). 
Then we make use of the following standard fact (proved for completeness in \Cref{fact:is_ub}, see also~\cite{DBLP:journals/jacm/HochbaumS86}),
and specifically apply it with  $S=P$, $\mu = \tau\geq \OPT$.
It readily implies $|I_\tau| \leq k$ since $I_{\tau}$ is a $2\tau$-independent set for $P$.  
    
\begin{restatable}{fact}{factisub}
\label{fact:is_ub}
Let $S \subseteq \RR^d$. For $\mu \geq \OPT(S)$, any subset of $S$ that is a $2\mu$-independent set for $S$ has at most $k$ points. 
\end{restatable}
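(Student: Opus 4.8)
The plan is to prove the natural pigeonhole statement underlying this fact: every $2\mu$-independent subset of $S$ injects into an optimal \kCenter solution, which has size at most $k$. First I would fix an optimal solution $C^\star \subseteq \RR^d$ for \kCenter on $S$, so that $|C^\star| \le k$ and $\cost(S, C^\star) = \OPT(S) \le \mu$; the optimum is attained whenever $S$ is finite, which is the only case used in the paper (we apply the fact with $S = P$). Then, given a $2\mu$-independent set $I \subseteq S$, for each $x \in I$ I would pick a nearest center $c(x) \in C^\star$, so that $\dist(x, c(x)) = \dist(x, C^\star) \le \cost(S, C^\star) \le \mu$.

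The key step is to argue that the assignment $x \mapsto c(x)$ is injective on $I$. Suppose toward a contradiction that $x \ne y$ in $I$ satisfy $c(x) = c(y) = c$. Then by the triangle inequality $\dist(x, y) \le \dist(x, c) + \dist(c, y) \le \mu + \mu = 2\mu$, which contradicts the defining property of a $2\mu$-independent set, namely $\dist(x, y) > 2\mu$. Hence distinct points of $I$ are assigned to distinct centers of $C^\star$, so $|I| \le |C^\star| \le k$, which is exactly the claim.

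There is essentially no obstacle here; the only point requiring a moment of care is that the strict inequality ``$\dist(x,y) > 2\mu$'' in the definition of $2\mu$-independence is pitted against the non-strict bound ``$\dist(x,y) \le 2\mu$'' obtained from two applications of ``$\le \mu$'', which is precisely why the threshold is exactly $2\mu$. The result is classical (cf.\ \cite{DBLP:journals/jacm/HochbaumS86}) and is included only for completeness; in \Cref{alg:kcenter} it is invoked with $S = P$ and $\mu = \tau \ge \OPT(P)$, so that the $2\tau$-independent set $I_\tau$ returned on line~\ref{line:ruling_set} satisfies $|I_\tau| \le k$, guaranteeing the existence of $\tau^*$ on line~\ref{line:2ereturn}.
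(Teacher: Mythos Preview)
Your proof is correct and is essentially the same pigeonhole argument as the paper's: both show that two distinct points of a $2\mu$-independent set cannot be assigned to the same optimal center (equivalently, cannot lie in the same optimal cluster of diameter $\le 2\mu$), hence $|I| \le k$. The only cosmetic difference is that the paper phrases it via the cluster partition and a contradiction on $|I| \ge k+1$, whereas you phrase it as an explicit injection $x \mapsto c(x)$ into $C^\star$.
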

\begin{proof}
Suppose for the contrary that there is some $2\mu$-independent set $\IS \subseteq S$ for $S$ with $|\IS| \ge k + 1$.
Consider a clustering $\mathcal{C} = \{C_1, C_2, \dots, C_k\}$ in an optimal solution for \kCenter on $S$.
Observe that each cluster $C_i \in \mathcal{C}$ has diameter $\diam(C_i) \leq 2\OPT(S) \leq 2 \mu$.
Now, consider any two distinct points $x \neq y \in \IS$. Since $\dist(x, y) > 2\mu$ by definition,
they cannot belong to the same cluster. Therefore, since $|\IS| \ge k+1$, there must be a point in \IS that does not belong to any cluster of $\mathcal{C}$, leading to a contradiction (since $\IS \subseteq S$ and $\bigcup_{i=1}^k C_i = S$).
This finishes the proof of \Cref{fact:is_ub}.
\end{proof}

Since $|I_{\tau^*}|\leq k$, $I_{\tau^*}$ is a feasible solution for \kCenter. 
It remains to analyze $\cost(P, I_{\tau^*})$. 
Since $I_{\tau^*}$ is a $(2\tau^{*}, \gamma\tau^{*})$-ruling set for $P$, we know that $\cost(P, I_{\tau^{*}})\leq \gamma\tau^{*}\leq \gamma(1+\varepsilon)\OPT$, where the last inequality follows from \eqref{eqn:taustar}.
This finishes the error analysis for the $(2 + \epsilon)$-approximation and $O(\epsilon^{-1}\log n/ \log \log n)$-approximation by substituting the corresponding values of $\gamma$.

\paragraph{Error analysis for MDS-based approximation.}
Similarly, we start with showing that the algorithm is well-defined, i.e., $\tau^*$ in line~\ref{line:bicreturn} must exist,
by showing that if $\tau \geq \OPT$ then $|D_\tau| \leq (1 + \epsilon) k$.
This further implies $\tau^* \leq \hat{\tau} \leq (1 + \epsilon) \OPT$ (and hence $\tau^*$ exists). 

To see this, let $C^*$ be the optimal solution to \kCenter on the input $P$. 
Then for every point $x\in P$, $\dist(x, C^*)\leq \OPT \leq \tau$.
Hence, $C^*$ is a $\tau$-DS for $P$. 
Therefore, 
\begin{equation*}
    |D_{\tau}|\leq (1+\varepsilon)|\MDS_{\tau}(P)|\leq (1+\varepsilon)|C^*|\leq (1+\varepsilon)k,
\end{equation*} 
where the first inequality is by \Cref{thm:mds} and the second inequality follow from the optimality of $\MDS_\tau$. 

Since the algorithm always returns a set of at most $(1 + \epsilon)k$ points, it remains to show $\cost(P, D_{\tau^*}) \leq (1 + O(\epsilon)) \OPT$.
Now, since $\tau^* \leq (1 + \epsilon) \OPT$, and by the fact that $D_{\tau^*}$ is a $(1 + \epsilon)\tau^*$-DS for $P$, we have
\begin{align*}
    \cost(P, D_{\tau^*})
    = \max_{x \in P} \dist(x, D_{\tau^*}) 
    \leq (1 + \epsilon) \tau^* 
    \leq (1 + O(\epsilon)) \OPT.
\end{align*}
This finishes the proof of \Cref{thm:low_dim_2approx,thm:low_dim_bicrit,thm:kcenter_high_dim}. 
\qed

\subsection{Coarse Approximation for Optimal Value of \kCenter}
\label{subsec:guess}
\begin{lemma}
    \label{lemma:guess_OPT}
    There is a randomized MPC algorithm, that given a dataset $P$ of $n$ points in $\mathbb{R}^{d}$ distributed across MPC machines with local memory $s\geq \Omega(\poly(d\log n))$,
    computes $E \geq 0$ such that $\OPT \leq E \leq O(n^7) \OPT$ for \kCenter
    with probability at least $1-1/n$,
    in $O(\log_{s} n)$ rounds using total memory $O(n \poly(d\log n))$. 
\end{lemma}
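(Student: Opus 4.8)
The plan is to estimate $\OPT$ by locating the scale at which a cheap ``independent-set-and-dominating-set'' certificate changes size. The starting point is the classical sandwich: by \Cref{fact:is_ub}, for $\tau\ge 2\OPT$ every $\tau$-independent subset of $P$ has at most $k$ points, while for small $\tau$ any $\tau$-dominating set of $P$ must contain essentially all distinct points and hence more than $k$ of them (recall the caller has already disposed of the case of at most $k$ distinct points). Thus if for each scale $\tau$ we can cheaply produce a set $R_\tau\subseteq P$ that is simultaneously a $\tau$-independent set and a $\poly(\log n)\cdot\tau$-dominating set for $P$, then the smallest $\tau^\ast$ with $|R_{\tau^\ast}|\le k$ (which exists, since $|R_\tau|\le k$ once $\tau\ge 2\OPT$ by \Cref{fact:is_ub}) satisfies $\tau^\ast/4<\OPT\le \poly(\log n)\cdot\tau^\ast$, so $E:=\poly(\log n)\cdot\tau^\ast$ is a suitable estimate.

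To produce $R_\tau$ within the prescribed budget I would run a single round of Luby's rule exactly as in \Cref{alg:local_kcenter}: assign each point a uniform label and keep $p$ iff it has the smallest label inside an approximate ball $A_P^\beta(p,\tau)$ with $B_P(p,\tau)\subseteq A_P^\beta(p,\tau)\subseteq B_P(p,\beta\tau)$, computed via \Cref{lemma:geometric_aggregation} with $\epsilon=\Theta(\log\log n/\log n)$, so that $\beta=O(\log n/\log\log n)$ and the per-scale cost is $O(\log_s n)$ rounds, $\poly(d\log n)$ local memory and $n^{1+\epsilon}\poly(d\log n)=n\,\poly(d\log n)$ total memory (one may first apply a Johnson--Lindenstrauss map to bring $d$ down to $O(\log n)$). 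Exactly as in \Cref{lemma:C_independent_set}, $R_\tau$ is a $\tau$-independent set with probability $1$; and running the analysis behind \Cref{lemma:approximation_ratio}/\Cref{lemma:bound_T} with the trivial degree bound $\Lambda\le n$ in place of the hashing bound (all we need, since a $\poly(n)$ final ratio is allowed) and with $t=\Theta(\log^{2} n)$ shows that $R_\tau$ is an $O(\beta\log^{2} n)\cdot\tau$-dominating set for $P$ with probability $1-1/\poly(n)$. I then run this for all scales $\tau=2^{i}$ in an appropriate range in parallel and output $E:=c\log^{2}n\cdot\tau^{\ast}$, where $\tau^{\ast}$ is the smallest such scale with $|R_{\tau^{\ast}}|\le k$ and $c$ is the implicit constant above.

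Correctness is then immediate: $R_{\tau^{\ast}}$ is, with high probability, a dominating set of radius $c\log^{2}n\cdot\tau^{\ast}$ with at most $k$ points, hence a feasible \kCenter solution, so $\OPT\le E$; and by minimality $|R_{\tau^{\ast}/2}|>k$, so applying \Cref{fact:is_ub} to the $(\tau^{\ast}/2)$-independent set $R_{\tau^{\ast}/2}$ gives $\tau^{\ast}/2<2\OPT$, whence $E=c\log^{2}n\cdot\tau^{\ast}<4c\log^{2}n\cdot\OPT\le O(n^{7})\OPT$; the round, local-memory and total-memory bounds are inherited from \Cref{lemma:geometric_aggregation} and standard primitives. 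The main obstacle I anticipate is the choice of the range of scales: a priori $\OPT$ can be far smaller than the crude upper bound $\max_{q\in P}\dist(p_{0},q)$ (think of a few spread-out points together with one extremely tight cluster), so one cannot simply sweep $\poly(\log n)$ scales starting from the diameter, and the total-memory constraint forces the number of parallel scales to be only $\poly(d\log n)$. Resolving this cleanly is the crux --- e.g.\ by assuming a polynomially bounded aspect ratio (a standard assumption, under which $O(\log n)$ scales between a cheaply computed approximate closest-pair distance and the diameter suffice, and which is where the $n^{7}$ slack is spent), or by a preprocessing step that contracts each point to a representative of a consistent-hashing bucket (\Cref{lemma:consist_hash}) at each scale in order to cap the range of relevant $\tau$ before the sweep. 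The remaining work --- the one-round-Luby dominating-set bound with the trivial degree cap, and the bookkeeping of the polynomial factors --- is routine.
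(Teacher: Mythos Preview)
Your approach is genuinely different from the paper's, and the obstacle you yourself flag --- choosing the range of scales --- is not a detail but the whole difficulty. The lemma is precisely what lets the main algorithm (\Cref{alg:kcenter}) restrict itself to $O(\epsilon^{-1}\log n)$ values of $\tau$; building its proof on another scale sweep is circular. Neither of your proposed fixes closes the gap: assuming a polynomially bounded aspect ratio adds a hypothesis not present in the statement (the ratio $\diam(P)/\min_{x\ne y\in P}\dist(x,y)$ can be exponential in the input bit-length, so even $O(\log n)$ geometric scales between closest pair and diameter need not suffice), and consistent hashing at a scale still requires a scale to start from. Even granting an approximate closest-pair primitive --- itself a scale-dependent operation in this model via \Cref{lemma:geometric_aggregation} --- you would only pin $\OPT$ to the interval from half the closest-pair distance up to the diameter, i.e.\ $\log(\text{aspect ratio})$ scales, not $\poly(\log n)$; the total-memory budget then breaks.

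The paper avoids scale-sweeping altogether. It draws a random Gaussian vector $v\in\mathbb{R}^d$, projects to the line $P'=\{\langle v,x\rangle:x\in P\}$, and shows that with probability $1-1/n$ every pairwise distance is preserved within a factor in $[\Omega(n^{-3}),O(n^{3})]$. On the $1$D instance it then gets an $O(n)$-approximation by a single combinatorial step: sort the points, let $r_k$ be the $k$-th largest gap between consecutive points, and set $E'=n\cdot r_k$. Indeed $\OPT(P')\ge r_k/2$, since otherwise every cluster has diameter $<r_k$ and cannot span any of the $\ge k$ gaps of size $\ge r_k$, forcing $\ge k+1$ clusters; and $\OPT(P')\le O(n\cdot r_k)$, since the at most $k-1$ gaps strictly larger than $r_k$ cut the line into at most $k$ pieces, each of length at most $n\cdot r_k$. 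Rescaling by $\Theta(n^3)$ to undo the projection distortion gives the $O(n^7)$ bound. This uses only sorting and broadcasts --- none of the ruling-set machinery --- and in particular never enumerates candidate values of $\tau$.
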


\begin{proof}
    
Let $v\in \mathbb{R}^{d}$ be an i.i.d. standard Gaussian vector, i.e., each entry of $v$ satisfies $v_{i}\sim \mathcal{N}(0, 1)$. 
For a point set $P\subset \mathbb{R}^{d}$,
we do the inner product to every $x \in P$ with respect to $v$,
and denote $P':= \{\langle v, x\rangle: \forall x\in P\}$ as this set.
This set can be viewed as the 1D projection of $P$.

We claim that, the event ``for every $x, y \in P$, $|\langle x, v \rangle - \langle y, v\rangle| \in [\Omega(1 / n^3), O(n^3)] \dist(x, y)$'' happens with probability $1 - 1 / n$.
To see this claim, fix some $x, y \in P$. Observe that $\frac{\langle x, v \rangle - \langle y, v \rangle}{\dist(x, y)} \sim \mathcal{N}(0, 1)$.
Plug this into the following standard fact about Gaussian, and do a union bound for all pairs $x, y$ yields the claim.
\begin{fact}
    \label{fact:dimension reduction}
    For some $z \geq 1$ and $u\sim\mathcal{N}(0, 1)$, we have that 
    $\Pr[|u| \leq 1 / z] \leq O(1 / z)$, and
    $\Pr[|u| \geq z] \leq O(1 / z)$.
\end{fact}
Hence, it suffices to approximate \kCenter within $O(n)$ factor on the 1D input $P'$ (where the distance is defined as the absolute value of the difference).
It would be convenient (and without loss of generality) to assume all points are distinct and there are at least $n \geq k + 1$ of them.

If $k = 1$, then we simply define the maximum minus the minimum as the estimation $E'$, and this is $2$-approximation.
Otherwise, for $k \geq 2$, we sort $P'$ with respect to their value/coordinate in 1D,
and we find the $k$-th largest gap between consecutive points, denoted as $r_k$, and we define $E' := n\cdot r_k$.

We claim that $E'$ satisfies $\Omega(\OPT(P')) \leq E' \leq O(n)\OPT(P')$. 
On one hand, $\OPT(P')$ must be at least $r_k/2$
for otherwise it requires at least $k + 1$ centers to cover the consecutive point pairs whose gap is at least $r_k$. 
This implies that $E'\leq O(n)\cdot \OPT(P')$. 

On the other hand, we show how to  construct a feasible solution whose cost is at most $O(n\cdot r_k)$,
and this would imply $\OPT(P')\leq \cost(P', C_{\mathcal{P}}) \leq O(n\cdot r_k) = O(E')$. 
Let $\mathcal{P} := \{ (p_i, q_i) \}_{i}$ be the consecutive point pairs (after the above sorting)
such that for each pair $(p_{i}, q_{i})\in \mathcal{P}$, $\dist(p_{i}, q_{i})>r_{k}$, and order $(p_i, q_i)$ by $p_i < q_i$ (in 1D coordinate). 
We have $|\mathcal{P}|\leq k-1$,
since otherwise,  there would be at least $k$ consecutive point pairs with gaps greater than $r_{k}$, which contradicts the definition of $r_{k}$ as the $k$-th largest gap between consecutive points. 
Let $i_{\max} := \argmax_{j : (p_j, q_j) \in \mathcal{P}} p_j$,
and let $q_{\max} := q_{i_{\max}}$, i.e., $q_{\max}$ is the larger point in the pair whose smaller point is the largest among all consecutive pairs.
Then define $C_{\mathcal{P}} := \{ p_i : (p_i, q_i) \in \mathcal{P} \}\cup \{q_{\max}\}$, so $|C_{\mathcal{P}}|\leq k$. 

We analyze the cost of this $C_{\mathcal{P}}$. 
Let $x_{\mathrm{end}}$ be the last point in $P'$, and let $X:= \{x\in P': q_{\max} < x < x_{\mathrm{end}} \}$ be the set of all points in $P'$ that lie between $q_{\max}$ and $x_{\mathrm{end}}$, then $|X|\leq n$. 
Write $X := \{x_1, x_2, \ldots \}$
such that $x_{i}< x_{i+1}$ for all $i\in [|X| - 1]$. 
Then, by triangle inequality, we have that $\dist(x_{\mathrm{end}}, q_{\max})\leq \dist(q_{\max}, x_{1}) + \sum_{i=1}^{|X|-1}\dist(x_{i}, x_{i+1}) + \dist(x_{|X|}, x_{\mathrm{end}})\leq O(n\cdot r_{k})$,
where the last inequality holds by the definition of $\mathcal{P}$.  
This implies that for any point in $P'$ after $p_{\max}$, the distance to $C_{\mathcal{P}}$ is at most $O(n\cdot r_{k})$. 
Similarly, we conclude that for any point in $P'$ before $p_{\max}$, the distance to its nearest point in $C_{\mathcal{P}}$ is also at most $O(n\cdot r_{k})$. 
This implies that $\cost(P', C_{\mathcal{P}})\leq O(n\cdot r_{k})$. 

The final value we return is $E := \Theta(n^3 E')$, to make sure the returned value is at least $\OPT$.
This entire algorithm is straightforward to implement in MPC, and the complexity is dominated by sorting.
\end{proof}

\subsection{Finding Approximate Assignments}
\label{sec:assignment}
We would assume the context in the above proof, and discuss how to find an approximate clustering for \kCenter in low and high dimensions, respectively (i.e., for \Cref{thm:low_dim_2approx,thm:low_dim_bicrit,thm:kcenter_high_dim}).

\begin{lemma}[Assignment for \kCenter in low dimension]
    \label{lemma:assignment_low_dim}
    There is an MPC algorithm, that given $\varepsilon\in (0, 1)$, $\tau > 0$,
    a center set $C$ and a dataset $P$ of $n$ points in $\mathbb{R}^{d}$ distributed across MPC machines with local memory $s\geq \Omega(\varepsilon^{-1}d)^{\Omega(d)}\poly(\log n)$, such that $\cost(P, C) = \Theta(\tau)$, computes an assignment for each point $x\in P$, mapping it to a center $c\in C$ such that $\dist(x, c)\leq (1 + O(\varepsilon))\cost(P, C)$, in $O(\log_{s}n)$ rounds using total memory $O(n\poly(d\log n))\cdot O(\varepsilon^{-1}d)^{O(d)}$. 
\end{lemma}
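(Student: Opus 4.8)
Write $r := \cost(P, C)$, so $r = \Theta(\tau)$ by hypothesis; fix absolute constants $c_1, c_2$ with $c_1\tau \le r \le c_2\tau$. The only genuine difficulty is that $C$ may have up to $(1+\epsilon)k$ centers and so need not fit on a single machine (we allow $s = o(k)$), and in the dominating-set-based case $C$ carries no packing guarantee of its own; hence we cannot broadcast $C$ and let every machine compute nearest centers. The plan is to reuse the locality machinery of \Cref{sec:MIS,sec:mds}: round both $P$ and $C$ onto a fine grid so that every bounded region holds few distinct rounded points, use the hash of \Cref{lemma:hash} to cut $\mathbb{R}^d$ into buckets of diameter $O(d^{1.5}\tau)$, and then on the machine responsible for each bucket gather that bucket's rounded data points together with all rounded centers lying within $\Theta(\tau)$ of it, and brute-force the nearest rounded center for each data point.

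First I would set $g := \epsilon\tau/\sqrt d$, round each $x\in P$ and each $c\in C$ to its nearest point of $\mathcal{G}_g$, let $\tilde P,\tilde C$ be the resulting sets of distinct grid points, and fix maps carrying each image back to a preimage in $P$, resp.\ $C$, at distance $\le \epsilon\tau/2$, exactly as in \Cref{lemma:rounding}. For $x\in P$ with true nearest center $c^*(x)\in C$ one has $\dist(\tilde x,\widetilde{c^*(x)}) \le \epsilon\tau/2 + r + \epsilon\tau/2 = r+\epsilon\tau$, so the \emph{genuinely} nearest point $\tilde c^*$ of $\tilde C$ to $\tilde x$ is within $r + \epsilon\tau \le (c_2+1)\tau$ of $\tilde x$; returning for each original $x$ the original center $c^*$ underlying $\tilde c^*$ gives $\dist(x,c^*)\le \epsilon\tau/2 + (r+\epsilon\tau) + \epsilon\tau/2 = r + 2\epsilon\tau$, and since $r\ge c_1\tau$ the additive $O(\epsilon\tau)$ collapses into an $O(\epsilon)r = O(\epsilon)\cost(P,C)$ multiplicative term, which is the claimed bound. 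So it suffices to compute, for every $\tilde x\in\tilde P$, its exact nearest point in $\tilde C$.

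To do this in $O(\log_s n)$ rounds I would invoke \Cref{lemma:hash} with $\beta := \Theta(\tau)$ large enough that $2(c_2+1)\tau \le \beta$, obtaining a data-oblivious hash $f$ (storable in $\poly(d)$ space) whose buckets have diameter $\ell = \Theta(d^{1.5}\beta) = \Theta(d^{1.5}\tau)$ and, via \Cref{fact:consistent_hash}, the property that any set of $\ell_\infty$-diameter $\le\beta$ meets at most $d+1$ buckets. By \Cref{lemma:packing} each bucket contains at most $(3\ell/g)^d = (\epsilon^{-1}d)^{O(d)}$ points of $\mathcal{G}_g$, hence at most $(\epsilon^{-1}d)^{O(d)}$ points of $\tilde P$. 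I would then route each $\tilde x\in\tilde P$ to the machine of bucket $f(\tilde x)$, and route each $\tilde c\in\tilde C$ to the machine of every bucket containing a grid point within distance $(c_2+1)\tau$ of $\tilde c$ (a set of at most $(\epsilon^{-1}d)^{O(d)}$ buckets, enumerable locally from $f$). A bucket's machine thereby receives all of $\tilde P\cap f^{-1}(u)$ together with every $\tilde c\in\tilde C$ lying within $(c_2+1)\tau$ of $f^{-1}(u)$; the latter all lie in one ball of radius $\ell + (c_2+1)\tau = O(d^{1.5}\tau)$, so by \Cref{lemma:packing} there are again only $(\epsilon^{-1}d)^{O(d)}$ of them, and everything fits since $s\ge\Omega(\epsilon^{-1}d)^{\Omega(d)}\poly(\log n)$. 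The point is that for $\tilde x\in f^{-1}(u)$ the true nearest $\tilde c^*$ has $\dist(\tilde x,\tilde c^*)\le(c_2+1)\tau$, so $\tilde x$ is a grid point within $(c_2+1)\tau$ of $\tilde c^*$ and $\tilde c^*$ was therefore delivered to $u$'s machine, which can find it by brute force. Finally I would join the map $\tilde x\mapsto\tilde c^*\mapsto c^*$ (on the $\le n$ distinct grid points) back with the original points, each $x$ looking up the center assigned to its image $\tilde x$, via a constant number of sorting rounds. Round, local-memory, and total-memory bounds follow as in \Cref{sec:MIS,sec:mds}: hashing and rounding are local, each routing/join step is $O(\log_s n)$ rounds, each machine stores $(\epsilon^{-1}d)^{O(d)}\poly(\log n)$ words, and since each point goes to one bucket and each center to $(\epsilon^{-1}d)^{O(d)}$ buckets the total memory is $O(n\poly(d\log n))\cdot O(\epsilon^{-1}d)^{O(d)}$.

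The step I expect to need the most care is guaranteeing a true $(1+O(\epsilon))$-approximation rather than just an $O(1)$-approximation: within a $\Theta(\tau)$-window around a point, $\cost(P,C)$ could be much smaller than $\tau$, so it does not suffice to deliver merely \emph{some} center within $\Theta(\tau)$ of $\tilde x$ — one must deliver the \emph{exact} nearest rounded center (which is why each bucket collects \emph{all} rounded centers within $(c_2+1)\tau$ of it) and then exploit $r=\Omega(\tau)$ to absorb the $O(\epsilon\tau)$ grid-rounding slack. The other thing to get right is simultaneously bounding the number of distinct relevant centers per window (so a machine can hold them all); this is exactly what the grid-rounding of $C$ buys, and it is essential because the dominating-set-based center set has no intrinsic sparsity.
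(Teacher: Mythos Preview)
Your proposal is correct and follows essentially the same approach as the paper: round both $P$ and $C$ to the $\epsilon\tau/\sqrt d$-grid, compute the \emph{exact} nearest rounded center for each rounded data point (restricting the search to an $O(\tau)$-ball, which by \Cref{lemma:packing} contains only $(\epsilon^{-1}d)^{O(d)}$ rounded centers), and finish with the triangle inequality to absorb the $O(\epsilon\tau)$ rounding slack into a multiplicative $(1+O(\epsilon))$ factor. The only difference is cosmetic: the paper simply asserts that once the packing bound on $|B(x,O(\tau))\cap C_\tau|$ is in hand the nearest-neighbor step is a standard $O(\log_s n)$-round MPC computation, whereas you spell this out explicitly via the hash of \Cref{lemma:hash} and a center-to-bucket routing scheme; your version is more detailed (and slightly heavier machinery than necessary, since a plain coarse grid would do), but not different in substance.
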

\begin{proof}
    Write $C := (c_{1}, \ldots, c_{k})$. 
Let $\phi : P \cup C \to \mathcal{G}_{\epsilon \tau / \sqrt{d}}$,
    that maps each point in $P$ or $C$ to its nearest neighbor in the $\epsilon \tau / \sqrt d$-grid $\mathcal{G}_{\epsilon \tau / \sqrt d}$.
    Let $P_{\tau} := \phi(P)$ and $C_{\tau} := \phi(C)$ be the sets rounded to the grid.
Before defining the clustering for the original $P$ based on $C$, 
    our first step finds the \emph{exact} nearest neighbor assignment from $P_{\tau}$ to $C_{\tau}$.
    The details of this part will be discussed later. 
    Now, assuming that the exact nearest neighbor assignment from $P_{\tau}$ to $C_{\tau}$ is obtained,
    we proceed to define how to assign points in $P$ to $C$ and prove that this assignment satisfies the approximation bound.
    
    For every $z \in \phi(C)$, pick $\rep(z)$  as an arbitrary fixed point in $\phi^{-1}(z) \cap C$, i.e., a representative point in $C$ among points whose nearest neighbor in $\mathcal{G}_{\epsilon \tau / \sqrt d}$ is the same grid.
    Now, for a point $x \in P$,
let $C_\tau(\phi(x))$ denote the nearest neighbor of $\phi(x)$ in $C_{\tau}$ (which is known by assumption),
    then we assign $x$ to $\rep( C_\tau(\phi(x)) )$.
    This step can be implemented within the claimed round and space as in \Cref{lemma:assignment_low_dim}.

    In this assignment, by triangle inequality,
    \begin{align*}
        \dist(x, C)
        &\leq \dist(x, \phi(x)) + \dist(\phi(x), C_\tau(\phi(x))) + \dist(C_\tau(\phi(x)), \rep(C_\tau(\phi(x)))) \\
        &\leq \varepsilon\tau + \dist(\phi(x), C_{\tau}) + \varepsilon\tau \\
        &\leq 2\epsilon \tau + \dist(x, C) + 2\epsilon \tau \\
        &\leq (1+O(\varepsilon))\cost(P, C).
    \end{align*}
    This finishes the description of the assignment from $P$ to $C$ and the analysis of its approximation bound.

    Now, we turn to define the exact nearest neighbor assignment from $P_{\tau}$ to $C_{\tau}$.  
    The algorithm simply computes, for every $x \in P_{\tau}$ the nearest center point $c\in B(x, O(\tau))\cap C_{\tau}$, i.e., only searching the nearest point in an $O(\tau)$ neighborhood of $x$. 
    This algorithm is well-defined,
    since for every $x \in P_\tau$,
    $\dist(x, C_\tau) \leq \dist(x, C) + \epsilon \tau \leq 2\epsilon \tau + \cost(P, C) \leq O(\tau)$. 
    This ensures that the found $c$ is the exact nearest neighbor of $x$.
Observe that for every $x \in P_\tau$, we have $|B(x, O(\tau)) \cap C_\tau| \leq (\epsilon^{-1}d)^{O(d)}$,
    by \Cref{lemma:packing}.
Hence, this step of finding the nearest neighbors of $P_\tau$ in $C_\tau$ 
    can be done in MPC in $O(\log_s n)$ rounds, $(\epsilon^{-1}d)^{O(d)} \poly(\log n)$ local space and $(\epsilon^{-1}d)^{O(d)} n \poly(\log n)$ total space.
    This finishes the proof.
\end{proof}

\begin{lemma}[Assignment for \kCenter in high dimension]
    \label{lemma:assignment_high_dim}

    There is an MPC algorithm, that given $\varepsilon\in (0, 1)$, $\tau > 0$,
    a center set $C$ and a dataset $P$ of $n$ points in $\mathbb{R}^{d}$ distributed across MPC machines
    with local memory $s\geq \poly(d\log n)$, such that $\cost(P, C) = \Theta(\tau)$, with probability at least $1-1/n$, computes an assignment for each point $x\in P$, mapping it to a center $c\in C$ such that $\dist(x, c)\leq O(\varepsilon^{-1})\cost(P, C)$, in $O(\log_{s}n)$ rounds using total memory $O(n^{1+\varepsilon}\poly(d\log n))$. 
\end{lemma}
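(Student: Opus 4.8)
The plan is to reduce the assignment task to a single invocation of the geometric aggregation primitive of \Cref{lemma:geometric_aggregation}, treating the centers $C$ as additional input points that carry identifying labels. Concretely, I would form the point set $\mathcal{Q} := P \cup C \subseteq \mathbb{R}^d$, with $P$ and $C$ tagged so that a point lying in both is represented once as a center; I would assign to each center $c \in C$ a \emph{distinct} label $h(c) \in (0, 1/2)$, computed deterministically by sorting $C$ and using ranks (this also lets every center learn its own label and lets us invert the labeling), and I would assign to each data point $p \in P$ the label $h(p) := 1$. Then I would run \Cref{lemma:geometric_aggregation} on $\mathcal{Q}$ with parameter $\epsilon$ and threshold $\tau'' := \Theta(\tau)$ chosen so that $\cost(P,C) \le \tau''$ (such a $\tau''$ exists by the hypothesis $\cost(P,C) = \Theta(\tau)$), obtaining for every $p \in P$ the value $m_p := \min\bigl(h(A_{\mathcal{Q}}(p, \tau''))\bigr)$, where $B_{\mathcal{Q}}(p,\tau'') \subseteq A_{\mathcal{Q}}(p,\tau'') \subseteq B_{\mathcal{Q}}(p, O(\epsilon^{-1}\tau''))$.

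First I would argue that $m_p$ always identifies a valid center and yields the approximation bound. Since $\cost(P,C) = \max_{q \in P}\dist(q, C) \le \tau''$, for each $p \in P$ there is a center $c^\star \in C$ with $\dist(p, c^\star) \le \tau''$; thus $c^\star \in B_{\mathcal{Q}}(p,\tau'') \subseteq A_{\mathcal{Q}}(p,\tau'')$, so $m_p \le h(c^\star) < 1 = \min_{q\in P} h(q)$. Hence $m_p$ equals the (unique) label of some center $c_p \in A_{\mathcal{Q}}(p,\tau'') \cap C$, and I assign $p \mapsto c_p$; this is a lookup by label, implemented by one sort-based join so that each $p$ additionally learns the index or coordinates of $c_p$. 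The bound then follows immediately: $c_p \in A_{\mathcal{Q}}(p,\tau'') \subseteq B_{\mathcal{Q}}(p, O(\epsilon^{-1}\tau''))$, so $\dist(p, c_p) \le O(\epsilon^{-1}\tau'') = O(\epsilon^{-1}\tau) = O(\epsilon^{-1})\cost(P,C)$, using $\tau = O(\cost(P,C))$, again from the hypothesis $\cost(P,C) = \Theta(\tau)$.

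For the resource bounds I would observe that the label assignment and the final join are standard MPC primitives running in $O(\log_s n)$ rounds with $\poly(d\log n)$ local memory and $O(n\poly(d\log n))$ total memory, and that $|\mathcal{Q}| \le n + |C| = O(n)$ (without loss of generality $|C| \le k \le n$, since otherwise $P$ itself is a zero-cost solution and the caller never reaches this step), so the call to \Cref{lemma:geometric_aggregation} costs $O(\log_s n)$ rounds, $\poly(d\log n)$ local memory, and $O(n^{1+\epsilon}\poly(d\log n))$ total memory, matching the claim. The algorithm is in fact deterministic, so the stated success probability $1 - 1/n$ holds trivially (any residual slack being available to absorb the randomized estimate of $\tau$ used by the caller). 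The step I expect to require the most care is the encoding: \Cref{lemma:geometric_aggregation} returns only the \emph{minimum label}, not the point achieving it, so the labeling must be injective on $C$, strictly below the common label $1$ of all points of $P$, and efficiently invertible in MPC — all of which the sort-by-rank construction provides, but which must be spelled out to make the reduction rigorous.
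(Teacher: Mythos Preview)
Your proposal is correct. The paper's own proof is a one-line black-box citation to the assignment procedure of \cite[Section~3.1]{CGJKV24}, which already returns for each $x\in P$ a center $c_x\in C$ with $\dist(x,c_x)\le O(\epsilon^{-1})r$ whenever some center lies within $r$; the paper then simply sets $r=O(\tau)$. You instead rebuild that assignment procedure from the lower-level primitive the paper already imports, namely \Cref{lemma:geometric_aggregation}, via the clean trick of tagging centers with distinct small labels and data points with label $1$, so that the returned minimum over the approximate ball necessarily identifies a nearby center. This is essentially what \cite[Section~3.1]{CGJKV24} does under the hood, so the two routes are morally the same; the advantage of yours is that it is fully self-contained within the lemmas stated in the paper and makes the determinism explicit (so the ``with probability $1-1/n$'' clause is satisfied vacuously), while the paper's version is shorter by outsourcing the encoding details.
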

\begin{proof}
We make use of a procedure described in \cite[Section 3.1]{CGJKV24}.  
    Specifically, in \cite[Section 3.1]{CGJKV24}, for a given parameter $r>0$, their MPC algorithm computes
    for each data point $x\in P$ an approximate center point $c_x \in C$
    such that $\dist(x, c_x) \leq O(\epsilon^{-1}) r$ if there exists $c \in C$ such that $\dist(x, c) \leq r$,
    in $O(\log_s n)$ rounds, $\poly(d\log n)$ local space and $n^{1 + \epsilon} \poly(d \log n)$ global space.
    
    Now, in our case, since $\cost(P, C) = \Theta(\tau)$, then there must exist a point $c \in C$ for each point $x\in P$  within a distance of $O(\tau)$. 
    Hence, applying the abovementioned algorithm with $r:= O(\tau)$, we can find for each $x\in P$ a center point $c_x\in C$ such that $\dist(x, c_x)\leq O(\varepsilon^{-1})\cost(P, C)$,
    and achieve the claimed round and space complexity as in \Cref{lemma:assignment_high_dim}.
\end{proof}

\bibliographystyle{alphaurl}
    \bibliography{ref.bib}

\newcommand{\etalchar}[1]{$^{#1}$}
\begin{thebibliography}{MKC{\etalchar{+}}15}

\bibitem[AAH{\etalchar{+}}23]{AhanchiAHKZ23}
AmirMohsen Ahanchi, Alexandr Andoni, MohammadTaghi Hajiaghayi, Marina Knittel, and Peilin Zhong.
\newblock Massively parallel tree embeddings for high dimensional spaces.
\newblock In {\em SPAA}, pages 77--88, 2023.
\newblock \href {https://doi.org/10.1145/3558481.3591096} {\path{doi:10.1145/3558481.3591096}}.

\bibitem[ABJ{\etalchar{+}}25]{AzarmehrBJLMZ25}
Amir Azarmehr, Soheil Behnezhad, Rajesh Jayaram, Jakub Lacki, Vahab Mirrokni, and Peilin Zhong.
\newblock Massively parallel minimum spanning tree in general metric spaces.
\newblock In {\em SODA}, pages 143--174, 2025.
\newblock \href {https://doi.org/10.1137/1.9781611978322.5} {\path{doi:10.1137/1.9781611978322.5}}.

\bibitem[AG23]{AG23}
Sepideh Aghamolaei and Mohammad Ghodsi.
\newblock A 2-approximation algorithm for data-distributed metric {$k$}-center.
\newblock {\em arXiv preprint arXiv:2309.04327}, 2023.
\newblock \href {https://arxiv.org/abs/2309.04327} {\path{arXiv:2309.04327}}.

\bibitem[AGLP89]{awerbuch1989network}
Baruch Awerbuch, Andrew~V Goldberg, Michael Luby, and Serge~A Plotkin.
\newblock Network decomposition and locality in distributed computation.
\newblock In {\em FOCS}, pages 364--369, 1989.
\newblock \href {https://doi.org/10.1109/SFCS.1989.63504} {\path{doi:10.1109/SFCS.1989.63504}}.

\bibitem[BBM23]{BBM23}
Mark~de Berg, Leyla Biabani, and Morteza Monemizadeh.
\newblock {$k$}-center clustering with outliers in the {MPC} and streaming model.
\newblock In {\em IPDPS}, pages 853--863, 2023.
\newblock \href {https://arxiv.org/abs/2302.12811} {\path{arXiv:2302.12811}}.

\bibitem[BEFM21]{BEFM21}
MohammadHossein Bateni, Hossein Esfandiari, Manuela Fischer, and Vahab~S. Mirrokni.
\newblock Extreme {$k$}-center clustering.
\newblock In {\em {AAAI}}, pages 3941--3949, 2021.
\newblock \href {https://doi.org/10.1609/AAAI.V35I5.16513} {\path{doi:10.1609/AAAI.V35I5.16513}}.

\bibitem[BKS17]{BKS17}
Paul Beame, Paraschos Koutris, and Dan Suciu.
\newblock Communication steps for parallel query processing.
\newblock {\em Journal of the ACM}, 64(6):40:1--40:58, 2017.
\newblock \href {https://doi.org/10.1145/3125644} {\path{doi:10.1145/3125644}}.

\bibitem[BP24]{BP24}
Leyla Biabani and Ami Paz.
\newblock {$k$}-center clustering in distributed models.
\newblock In {\em SIROCCO}, pages 83--100, 2024.
\newblock \href {https://doi.org/10.1007/978-3-031-60603-8\_5} {\path{doi:10.1007/978-3-031-60603-8\_5}}.

\bibitem[BW18]{BhaskaraW18}
Aditya Bhaskara and Maheshakya Wijewardena.
\newblock Distributed clustering via {LSH} based data partitioning.
\newblock In {\em {ICML}}, pages 569--578, 2018.
\newblock URL: \url{http://proceedings.mlr.press/v80/bhaskara18a.html}.

\bibitem[CCM23]{CCM23}
Sam Coy, Artur Czumaj, and Gopinath Mishra.
\newblock On parallel {$k$}-center clustering.
\newblock In {\em {SPAA}}, pages 65--75, 2023.
\newblock \href {https://doi.org/10.1145/3558481.3591075} {\path{doi:10.1145/3558481.3591075}}.

\bibitem[CFJ{\etalchar{+}}22]{arxiv.2204.02095}
Artur Czumaj, Arnold Filtser, Shaofeng H.-C. Jiang, Robert Krauthgamer, Pavel Veselý, and Mingwei Yang.
\newblock Streaming facility location in high dimension via geometric hashing.
\newblock {\em arXiv preprint arXiv:2204.02095}, 2022.
\newblock The latest version has additional results compared to the preliminary version in \cite{CJKVY22}.
\newblock \href {https://arxiv.org/abs/2204.02095} {\path{arXiv:2204.02095}}.

\bibitem[CGJ{\etalchar{+}}24]{CGJKV24}
Artur Czumaj, Guichen Gao, Shaofeng~H.{-}C. Jiang, Robert Krauthgamer, and Pavel Vesel{\'{y}}.
\newblock Fully-scalable {MPC} algorithms for clustering in high dimension.
\newblock In {\em ICALP}, pages 50:1--50:20, 2024.
\newblock \href {https://doi.org/10.4230/LIPICS.ICALP.2024.50} {\path{doi:10.4230/LIPICS.ICALP.2024.50}}.

\bibitem[CJK{\etalchar{+}}22]{CJKVY22}
Artur Czumaj, Shaofeng~H.{-}C. Jiang, Robert Krauthgamer, Pavel Vesel{\'{y}}, and Mingwei Yang.
\newblock Streaming facility location in high dimension via geometric hashing.
\newblock In {\em {FOCS}}, pages 450--461, 2022.
\newblock \href {https://doi.org/10.1109/FOCS54457.2022.00050} {\path{doi:10.1109/FOCS54457.2022.00050}}.

\bibitem[CKPU23]{CKPU23}
M{\'{e}}lanie Cambus, Fabian Kuhn, Shreyas Pai, and Jara Uitto.
\newblock Time and space optimal massively parallel algorithm for the 2-ruling set problem.
\newblock In {\em DISC}, pages 11:1--11:12, 2023.
\newblock \href {https://doi.org/10.4230/LIPICS.DISC.2023.11} {\path{doi:10.4230/LIPICS.DISC.2023.11}}.

\bibitem[CLN{\etalchar{+}}21]{Cohen-AddadLNSS21}
Vincent Cohen{-}Addad, Silvio Lattanzi, Ashkan Norouzi{-}Fard, Christian Sohler, and Ola Svensson.
\newblock Parallel and efficient hierarchical {$k$}-median clustering.
\newblock In {\em NeurIPS}, pages 20333--20345, 2021.
\newblock URL: \url{https://proceedings.neurips.cc/paper/2021/hash/aa495e18c7e3a21a4e48923b92048a61-Abstract.html}.

\bibitem[CMZ22]{Cohen-AddadMZ22}
Vincent Cohen{-}Addad, Vahab~S. Mirrokni, and Peilin Zhong.
\newblock Massively parallel k-means clustering for perturbation resilient instances.
\newblock In {\em {ICML}}, pages 4180--4201, 2022.
\newblock URL: \url{https://proceedings.mlr.press/v162/cohen-addad22b.html}.

\bibitem[CPP19]{CeccarelloPP19}
Matteo Ceccarello, Andrea Pietracaprina, and Geppino Pucci.
\newblock Solving k-center clustering (with outliers) in mapreduce and streaming, almost as accurately as sequentially.
\newblock {\em Proc. {VLDB} Endow.}, 12(7):766--778, 2019.
\newblock URL: \url{http://www.vldb.org/pvldb/vol12/p766-ceccarello.pdf}.

\bibitem[DG08]{DBLP:journals/cacm/DeanG08}
Jeffrey Dean and Sanjay Ghemawat.
\newblock {MapReduce}: Simplified data processing on large clusters.
\newblock {\em Communications of the ACM}, 51(1):107--113, 2008.
\newblock \href {https://doi.org/10.1145/1327452.1327492} {\path{doi:10.1145/1327452.1327492}}.

\bibitem[EIM11]{EneIM11}
Alina Ene, Sungjin Im, and Benjamin Moseley.
\newblock Fast clustering using {MapReduce}.
\newblock In {\em {KDD}}, pages 681--689, 2011.
\newblock \href {https://doi.org/10.1145/2020408.2020515} {\path{doi:10.1145/2020408.2020515}}.

\bibitem[EMMZ22]{EpastoMMZ22}
Alessandro Epasto, Mohammad Mahdian, Vahab~S. Mirrokni, and Peilin Zhong.
\newblock Massively parallel and dynamic algorithms for minimum size clustering.
\newblock In {\em SODA}, pages 1613--1660. SIAM, 2022.
\newblock \href {https://doi.org/10.1137/1.9781611977073.66} {\path{doi:10.1137/1.9781611977073.66}}.

\bibitem[Fil24]{Filtser24}
Arnold Filtser.
\newblock Scattering and sparse partitions, and their applications.
\newblock {\em {ACM} Transactions on Algorithms}, 20(4):30:1--30:42, 2024.
\newblock \href {https://doi.org/10.1145/3672562} {\path{doi:10.1145/3672562}}.

\bibitem[GGJ20]{GhaffariGJ20}
Mohsen Ghaffari, Christoph Grunau, and Ce~Jin.
\newblock Improved {MPC} algorithms for {MIS}, matching, and coloring on trees and beyond.
\newblock In {\em DISC}, pages 34:1--34:18, 2020.
\newblock \href {https://doi.org/10.4230/LIPICS.DISC.2020.34} {\path{doi:10.4230/LIPICS.DISC.2020.34}}.

\bibitem[Gha19]{mpc_book}
Mohsen Ghaffari.
\newblock Massively parallel algorithms, 2019.
\newblock Lecture Notes from ETH Z\"urich.
\newblock URL: \url{http://people.csail.mit.edu/ghaffari/MPA19/Notes/MPA.pdf}.

\bibitem[Gha22]{dist_graph_book}
Mohsen Ghaffari.
\newblock Distributed graph algorithms, 2022.
\newblock Lecture Notes from MIT.
\newblock URL: \url{https://people.csail.mit.edu/ghaffari/DA22/Notes/DGA.pdf}.

\bibitem[GLM{\etalchar{+}}23]{GLMPSSSUV23}
Chetan Gupta, Rustam Latypov, Yannic Maus, Shreyas Pai, Simo S{\"{a}}rkk{\"{a}}, Jan Studen{\'{y}}, Jukka Suomela, Jara Uitto, and Hossein Vahidi.
\newblock Fast dynamic programming in trees in the {MPC} model.
\newblock In {\em {SPAA}}, pages 443--453, 2023.
\newblock \href {https://doi.org/10.1145/3558481.3591098} {\path{doi:10.1145/3558481.3591098}}.

\bibitem[Gon85]{Gonzalez85}
Teofilo~F. Gonzalez.
\newblock Clustering to minimize the maximum intercluster distance.
\newblock {\em Theoretical Computer Science}, 38:293--306, 1985.
\newblock \href {https://doi.org/10.1016/0304-3975(85)90224-5} {\path{doi:10.1016/0304-3975(85)90224-5}}.

\bibitem[GP24]{GP24}
Jeff Giliberti and Zahra Parsaeian.
\newblock Massively parallel ruling set made deterministic.
\newblock In {\em DISC}, pages 29:1--29:21, 2024.
\newblock \href {https://doi.org/10.4230/LIPICS.DISC.2024.29} {\path{doi:10.4230/LIPICS.DISC.2024.29}}.

\bibitem[GSZ11]{GSZ11}
Michael~T. Goodrich, Nodari Sitchinava, and Qin Zhang.
\newblock Sorting, searching, and simulation in the {MapReduce} framework.
\newblock In {\em {ISAAC}}, pages 374--383, 2011.
\newblock \href {https://arxiv.org/abs/1101.1902} {\path{arXiv:1101.1902}}.

\bibitem[GU19]{GhaffariU19}
Mohsen Ghaffari and Jara Uitto.
\newblock Sparsifying distributed algorithms with ramifications in massively parallel computation and centralized local computation.
\newblock In {\em {SODA}}, pages 1636--1653, 2019.
\newblock \href {https://doi.org/10.1137/1.9781611975482.99} {\path{doi:10.1137/1.9781611975482.99}}.

\bibitem[Har04]{Har-Peled04}
Sariel Har{-}Peled.
\newblock No, coreset, no cry.
\newblock In {\em {FSTTCS}}, pages 324--335, 2004.
\newblock \href {https://doi.org/10.1007/978-3-540-30538-5\_27} {\path{doi:10.1007/978-3-540-30538-5\_27}}.

\bibitem[HM04]{Har-PeledM04}
Sariel Har{-}Peled and Soham Mazumdar.
\newblock On coresets for {$k$}-means and {$k$}-median clustering.
\newblock In {\em {STOC}}, pages 291--300, 2004.
\newblock \href {https://doi.org/10.1145/1007352.1007400} {\path{doi:10.1145/1007352.1007400}}.

\bibitem[HS85]{HS85}
Dorit~S. Hochbaum and David~B. Shmoys.
\newblock A best possible heuristic for the {$k$}-center problem.
\newblock {\em Mathematics of Operations Research}, 10(2):180--184, 1985.
\newblock \href {https://doi.org/10.1287/moor.10.2.180} {\path{doi:10.1287/moor.10.2.180}}.

\bibitem[HS86]{DBLP:journals/jacm/HochbaumS86}
Dorit~S. Hochbaum and David~B. Shmoys.
\newblock A unified approach to approximation algorithms for bottleneck problems.
\newblock {\em Journal of the ACM}, 33(3):533--550, 1986.
\newblock \href {https://doi.org/10.1145/5925.5933} {\path{doi:10.1145/5925.5933}}.

\bibitem[HZ23]{HZ23}
Alireza Haqi and Hamid Zarrabi{-}Zadeh.
\newblock Almost optimal massively parallel algorithms for {$k$}-center clustering and diversity maximization.
\newblock In {\em {SPAA}}, pages 239--247, 2023.
\newblock \href {https://doi.org/10.1145/3558481.3591077} {\path{doi:10.1145/3558481.3591077}}.

\bibitem[IBY{\etalchar{+}}07]{DBLP:conf/eurosys/IsardBYBF07}
Michael Isard, Mihai Budiu, Yuan Yu, Andrew Birrell, and Dennis Fetterly.
\newblock Dryad: Distributed data-parallel programs from sequential building blocks.
\newblock In {\em EuroSys}, pages 59--72, 2007.
\newblock \href {https://doi.org/10.1145/1272996.1273005} {\path{doi:10.1145/1272996.1273005}}.

\bibitem[IKL{\etalchar{+}}23]{IKLMV23}
Sungjin Im, Ravi Kumar, Silvio Lattanzi, Benjamin Moseley, and Sergei Vassilvitskii.
\newblock Massively parallel computation: Algorithms and applications.
\newblock {\em Foundations and Trends in Optimization}, 5(4):340--417, 2023.
\newblock \href {https://doi.org/10.1561/2400000025} {\path{doi:10.1561/2400000025}}.

\bibitem[IM15]{ImM15}
Sungjin Im and Benjamin Moseley.
\newblock Brief announcement: Fast and better distributed mapreduce algorithms for {$k$}-center clustering.
\newblock In {\em {SPAA}}, pages 65--67, 2015.
\newblock \href {https://doi.org/10.1145/2755573.2755607} {\path{doi:10.1145/2755573.2755607}}.

\bibitem[JKPS25]{JKPS25}
Hongyan Ji, Kishore Kothapalli, Sriram~V. Pemmaraju, and Ajitanshu Singh.
\newblock Fast deterministic massively parallel ruling sets algorithms.
\newblock In {\em ICDCN}, pages 152--160, 2025.
\newblock \href {https://doi.org/10.1145/3700838.3700872} {\path{doi:10.1145/3700838.3700872}}.

\bibitem[JL84]{JL84}
William Johnson and Joram Lindenstrauss.
\newblock Extensions of {Lipschitz} maps into a {Hilbert} space.
\newblock {\em Contemporary Mathematics}, 26:189--206, 01 1984.
\newblock \href {https://doi.org/10.1090/conm/026/737400} {\path{doi:10.1090/conm/026/737400}}.

\bibitem[JLN{\etalchar{+}}05]{JLNRS05}
Lujun Jia, Guolong Lin, Guevara Noubir, Rajmohan Rajaraman, and Ravi Sundaram.
\newblock Universal approximations for {TSP}, {Steiner} tree, and set cover.
\newblock In {\em {STOC}}, pages 386--395, 2005.
\newblock \href {https://doi.org/10.1145/1060590.1060649} {\path{doi:10.1145/1060590.1060649}}.

\bibitem[JMNZ24]{JayaramMNZ24}
Rajesh Jayaram, Vahab Mirrokni, Shyam Narayanan, and Peilin Zhong.
\newblock Massively parallel algorithms for high-dimensional {Euclidean} minimum spanning tree.
\newblock In {\em SODA}, pages 3960--3996. SIAM, 2024.
\newblock \href {https://doi.org/10.1137/1.9781611977912.139} {\path{doi:10.1137/1.9781611977912.139}}.

\bibitem[KSV10]{KarloffSV10}
Howard~J. Karloff, Siddharth Suri, and Sergei Vassilvitskii.
\newblock A model of computation for {MapReduce}.
\newblock In {\em SODA}, pages 938--948, 2010.
\newblock \href {https://doi.org/10.1137/1.9781611973075.76} {\path{doi:10.1137/1.9781611973075.76}}.

\bibitem[LFW{\etalchar{+}}24]{LFWXW24}
Ting Liang, Qilong Feng, Xiaoliang Wu, Jinhui Xu, and Jianxin Wang.
\newblock Improved approximation algorithm for the distributed lower-bounded {$k$}-center problem.
\newblock In {\em TAMC}, pages 309--319, 2024.
\newblock \href {https://doi.org/10.1007/978-981-97-2340-9\_26} {\path{doi:10.1007/978-981-97-2340-9\_26}}.

\bibitem[Lub85]{Luby85}
Michael Luby.
\newblock A simple parallel algorithm for the maximal independent set problem.
\newblock In {\em {STOC}}, pages 1--10. {ACM}, 1985.
\newblock \href {https://doi.org/10.1145/22145.22146} {\path{doi:10.1145/22145.22146}}.

\bibitem[MKC{\etalchar{+}}15]{MalkomesKCWM15}
Gustavo Malkomes, Matt~J. Kusner, Wenlin Chen, Kilian~Q. Weinberger, and Benjamin Moseley.
\newblock Fast distributed {$k$}-center clustering with outliers on massive data.
\newblock In {\em {NIPS}}, pages 1063--1071, 2015.
\newblock URL: \url{https://proceedings.neurips.cc/paper/2015/hash/8fecb20817b3847419bb3de39a609afe-Abstract.html}.

\bibitem[Ona18]{Onak18}
Krzysztof Onak.
\newblock Round compression for parallel graph algorithms in strongly sublinear space.
\newblock {\em arXiv preprint arXiv:1807.08745}, 2018.
\newblock \href {https://arxiv.org/abs/1807.08745} {\path{arXiv:1807.08745}}.

\bibitem[Pol90]{pollard1990empirical}
David Pollard.
\newblock {\em Empirical Processes: Theory and Applications}, chapter 4: Packing and Covering in {Euclidean} Spaces, pages 14--20.
\newblock IMS, 1990.
\newblock \href {https://doi.org/10.1214/cbms/1462061091} {\path{doi:10.1214/cbms/1462061091}}.

\bibitem[RG20]{rozhovn2020polylogarithmic}
V{\'a}clav Rozho{\v{n}} and Mohsen Ghaffari.
\newblock Polylogarithmic-time deterministic network decomposition and distributed derandomization.
\newblock In {\em STOC}, pages 350--363, 2020.
\newblock \href {https://doi.org/10.1145/3357713.3384298} {\path{doi:10.1145/3357713.3384298}}.

\bibitem[RVW18]{RVW18}
Tim Roughgarden, Sergei Vassilvitski, and Joshua~R. Wang.
\newblock Shuffles and circuits (on lower bounds for modern parallel computation).
\newblock {\em Journal of the ACM}, 65(6):41:1--41:24, November 2018.
\newblock \href {https://doi.org/10.1145/3232536} {\path{doi:10.1145/3232536}}.

\bibitem[SW10]{SW10}
Johannes Schneider and Roger Wattenhofer.
\newblock An optimal maximal independent set algorithm for bounded-independence graphs.
\newblock {\em Distributed Computing}, 22(5-6):349--361, 2010.
\newblock URL: \url{https://doi.org/10.1007/s00446-010-0097-1}, \href {https://doi.org/10.1007/S00446-010-0097-1} {\path{doi:10.1007/S00446-010-0097-1}}.

\bibitem[Whi15]{white2012hadoop}
Tom White.
\newblock {\em Hadoop: The Definitive Guide}.
\newblock O'Reilly, 4th edition, 2015.
\newblock URL: \url{https://www.oreilly.com/library/view/hadoop-the-definitive/9781491901687/}.

\bibitem[ZCF{\etalchar{+}}10]{DBLP:conf/hotcloud/ZahariaCFSS10}
Matei Zaharia, Mosharaf Chowdhury, Michael~J. Franklin, Scott Shenker, and Ion Stoica.
\newblock Spark: Cluster computing with working sets.
\newblock In {\em HotCloud}, 2010.
\newblock URL: \url{https://www.usenix.org/conference/hotcloud-10/spark-cluster-computing-working-sets}.

\end{thebibliography}

\appendix

    \section{Proof of Lemma~\ref{lemma:hash}: Geometric Hashing}
\label{sec:proof_hashing}

\lemmahash*

The construction of the hash $f: \mathbb{R}^{d}\to \mathbb{R}^{d}$ is the same as that  in \cite[Theorem 5.3]{arxiv.2204.02095}.
Their proof already implies the space bound, as well as the first two properties;
In fact, our second property is stronger than what they state, 
but the stronger version was indeed proved in their paper, albeit implicitly.
We would restate their algorithm/construction, briefly sketch the proof for the first two properties,  and focus on the third property.

\paragraph{Hash construction.}
We start with reviewing their algorithm/construction of $f$.
Partition $\mathbb{R}^{d}$ into hypercubes of side length $z$ of the form $\bigtimes_{i = 1}^d [a_i z, (a_i+1)z)$ where $a_1, \ldots, a_d \in \ZZ$. 
Notice that each of hypercubes is of diameter $\ell$. 
Let $\ell_{i}:= (d-i)\beta$ where $i\in\{0, \ldots, d-1\}$. 
For every $i\in\{0, \ldots, d\}$, let $\mathcal{F}_{i}$ be the set of $i$-dimensional faces of the abovementioned hypercubes,
and let $A_{i}$ be the set of points that belong to these faces, i.e., $A_i := \bigcup_{Q \in \mathcal{F}_i} Q$.
For each $i\in\{0, \ldots, d-1\}$, let $B_{i}:= N_{\ell_{i}}^{\infty}(A_{i})$ be $\ell_{i}$-neighborhood of $A_{i}$ (with respect to $\ell_\infty$ distance). 
Let $B_{-1}:= \emptyset$ and $B_{\leq i}:= \bigcup_{j\leq i}B_{j}$. 
The main procedure for the construction of $f$ goes as follows.
\begin{itemize}
\item For every $i\in\{0, \ldots, d-1\}$, for every $Q\in \mathcal{F}_{i}$, let $\widehat{Q} := N_{\ell_{i}}^{\infty}(Q)\setminus B_{\leq i-1}$, and for every $x\in\widehat{Q}$ assign $f(x)=q$ where $q\in\widehat{Q}$ is an arbitrary but fixed point. 
    Observe that $\widehat{Q}\subseteq B_{i}$ and thus $x\in\widehat{Q}$ will not be assigned again in later iterations. 
    \item For every $Q\in \mathcal{F}_{d}$, let $\widehat{Q} := Q\setminus B_{\leq d-1}$ be the remaining part of $Q$ whose $f(x)$ has not been assigned, and assign $f(x)=q$ for every $x\in\widehat{Q}$, where $q\in\widehat{Q}$ is arbitrary but fixed point. 
\end{itemize}  

\paragraph{First two properties.}
The first property is immediate from this construction since every bucket is a subset of a hypercube of side-length $z$ whose diameter is $\sqrt{d} z = \ell$.
To establish the second property, we need to define the $d + 1$ groups.
For $0\leq i\leq d-1$, define $\mathcal{W}_{i}:= \{N_{\ell_{i}}^{\infty}(Q)\setminus B_{\leq i-1}: Q\in \mathcal{F}_{i} \}$ and let $\mathcal{W}_{d}:=\{Q\setminus B_{\leq d-1}: Q\in \mathcal{F}_{d}\}$. 
Then by the construction of $f$, $\{f^{-1}(u): u\in f(\mathbb{R}^{d})\} = \bigcup_{i=0}^{d}\mathcal{W}_{i}$, hence the $\mathcal{W}_i$ is a proper partition of buckets.
These $\mathcal{W}_i$'s are implicitly analyzed in~\cite[Lemma 5.5 and Lemma 5.8]{arxiv.2204.02095}, restated and adapted (as \Cref{lemma:separation}) to our notation as follows.
This lemma readily implies our second property.
\begin{lemma}[{\cite[Lemma 5.5 and Lemma 5.8]{arxiv.2204.02095}}]
    \label{lemma:separation}
    For every $0 \leq i \leq d$ and every $S \neq S' \in \mathcal{W}_i$, $\dist(S, S') \geq \dist_\infty(S, S') > \beta$.
\end{lemma}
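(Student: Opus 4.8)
The plan is to first peel off the Euclidean part: since $\dist(\cdot,\cdot)\ge \dist_\infty(\cdot,\cdot)$ always, it suffices to prove $\dist_\infty(S,S')>\beta$. This is exactly \cite[Lemma 5.5 and Lemma 5.8]{arxiv.2204.02095} once the notation is matched, so one option is simply to cite it; for completeness I would reproduce the argument, sketched below, which rests on a \emph{localization} statement for the points of a single bucket.

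Recall that every bucket of $\mathcal{W}_i$ has the form $\widehat{Q}=N^\infty_{\ell_i}(Q)\setminus B_{\leq i-1}$ for an $i$-dimensional grid face $Q$ (with the conventions $\ell_d:=0$, $B_{\leq-1}:=\emptyset$), where an $i$-face is a product of $d-i$ \emph{frozen} coordinates (each equal to a fixed multiple of $z$) and $i$ \emph{free} coordinates (each ranging over an interval $[az,(a+1)z]$). The localization claim is: for every $y\in\widehat{Q}$, on each frozen coordinate $m$ with value $c_m z$ we have $|y_m-c_m z|\le\ell_i$, which is immediate from $y\in N^\infty_{\ell_i}(Q)$; and, crucially, on each free coordinate $j$ with range $[a_jz,(a_j+1)z]$ we in fact have $y_j\in(a_jz+\ell_{i-1},\,(a_j+1)z-\ell_{i-1})$, i.e.\ $y$ is pushed strictly inward by the \emph{larger} radius $\ell_{i-1}=\ell_i+\beta$. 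The point is that freezing coordinate $j$ of $Q$ at either endpoint produces an $(i-1)$-face, hence a subset of $A_{i-1}$; since $y\notin B_{\leq i-1}\supseteq N^\infty_{\ell_{i-1}}(A_{i-1})$ while $y$ stays within $\ell_i<\ell_{i-1}$ of $Q$ on every coordinate, comparing $\dist_\infty(y,\,\text{that }(i-1)\text{-face})$ with $\ell_{i-1}$ forces $|y_j-a_jz|>\ell_{i-1}$ and $|y_j-(a_j+1)z|>\ell_{i-1}$.

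Granting localization, take distinct $i$-faces $Q\ne Q'$ and points $y\in\widehat{Q}$, $y'\in\widehat{Q'}$; I would pick a coordinate $j$ witnessing $Q\ne Q'$ and bound $|y_j-y'_j|>\beta$ in three cases. If $j$ is frozen for $Q$ but free for $Q'$ (or symmetrically), combine the frozen bound on one side with the interior bound on the other — the tight sub-case is when the frozen value of one face is an endpoint of the free range of the other, where the required gap is exactly $\ell_{i-1}-\ell_i=\beta$. If $j$ is frozen for both but with different pinned values, those values are $\ge z$ apart, so $|y_j-y'_j|\ge z-2\ell_i$, which exceeds $\beta$ as soon as $z=\ell/\sqrt d=\Theta(d^{1.5}\beta)/\sqrt d\gg d\beta$ (this is where $\ell\ge\Theta(d^{1.5}\beta)$ enters). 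If $j$ is free for both but the two faces lie in different cells along coordinate $j$, those cells are consecutive or farther apart, and the interior bounds give $|y_j-y'_j|>2\ell_{i-1}\ge 2\beta$. In every case $\dist_\infty(S,S')\ge|y_j-y'_j|>\beta$, and the Euclidean bound follows.

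I expect the crux to be the enhanced localization on free coordinates: a priori one only knows $y$ lies within $\ell_i$ of $Q$, and it is the improvement to $\ell_{i-1}$ — obtained by playing $y\notin B_{\leq i-1}$ against the $(i-1)$-subfaces of $Q$ — that supplies the separation in the geometrically adjacent cases (free/free with consecutive cells, and frozen/free with a coinciding endpoint), which would otherwise collapse. Once that is in place, the case analysis is routine and needs only $z\gg d\beta$, which the hypothesis on $\ell$ guarantees.
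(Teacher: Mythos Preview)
Your proposal is correct. The paper itself does not give a proof of this lemma: it simply restates \cite[Lemma 5.5 and Lemma 5.8]{arxiv.2204.02095} in the present notation and cites that reference. Your reproduction of the argument---the localization claim that on free coordinates a bucket point is pushed strictly inward by $\ell_{i-1}$ (via the $(i-1)$-subfaces of $Q$ and $y\notin B_{\leq i-1}$), followed by the three-case analysis on a coordinate witnessing $Q\neq Q'$---is sound and indeed matches the approach of the cited source, so you are supplying strictly more than the paper does.
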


\paragraph{Third property.}
We proceed to prove the third property.
The third property, particularly $L(z, 2b)$ is well-defined, since by the choice of parameters it can be verified that $z > 4b$.
The proof starts with the following claim. It relates the complicated $\bigcup_{u} U_\tau^\infty(f^{-1}(u))$
to a much simpler object $N_b^\infty(A_{d - 1})$, which is merely the $b$-neighborhood of $(d - 1)$-dimensional faces.
\begin{claim}
    \label{claim:A_N} 
    $\bigcup_{u\in f(\mathbb{R}^{d})}U_{\tau}^{\infty}(f^{-1}(u))\subseteq N_{b}^{\infty}(A_{d-1})$. 
\end{claim}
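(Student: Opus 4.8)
\emph{Plan.} I would prove the contrapositive statement: if $x \in \RR^d$ satisfies $\dist_\infty(x, A_{d-1}) > b = d\beta + \tau$, then $x \notin U_\tau^\infty(f^{-1}(u))$ for every image $u \in f(\RR^d)$. The conceptual picture is that such an $x$ lies deep inside a single grid hypercube $Q$ (of side length $z$), far enough from the skeleton $A_{d-1}$ that $x$ — together with its entire $\ell_\infty$-ball of radius $\tau$ — is assigned by $f$ in the \emph{final} step of the construction, all to the one ``leftover'' bucket $\widehat Q = Q \setminus B_{\le d-1}$. Since $U_\tau^\infty(f^{-1}(u)) = N_\tau^\infty(f^{-1}(u)) \setminus f^{-1}(u)$, this simultaneously rules out $x$ lying in its own bucket's annulus and in any other bucket's annulus.

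\emph{Step 1: a crude bound on the ``already assigned'' region.} First I would observe that $B_{\le d-1} \subseteq N_{d\beta}^\infty(A_{d-1})$. Indeed $B_{\le d-1} = \bigcup_{j=0}^{d-1} N_{\ell_j}^\infty(A_j)$, and for each $j \le d-1$ we have $A_j \subseteq A_{d-1}$ (every $j$-face lies in the closure of a $(d-1)$-face) together with $\ell_j = (d-j)\beta \le d\beta$; hence each summand is contained in $N_{d\beta}^\infty(A_{d-1})$. Consequently, if $\dist_\infty(x, A_{d-1}) > d\beta + \tau$, then in particular $x \notin B_{\le d-1}$, and moreover every $y$ with $\|x-y\|_\infty \le \tau$ satisfies $\dist_\infty(y, A_{d-1}) \ge \dist_\infty(x, A_{d-1}) - \tau > d\beta$, so $y \notin B_{\le d-1}$ as well. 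By the construction of $f$, every point outside $B_{\le d-1}$ is assigned in the final step, i.e.\ to the bucket $\widehat Q = Q \setminus B_{\le d-1}$ where $Q \in \mathcal{F}_d$ is the unique grid hypercube containing that point.

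\emph{Step 2: the $\tau$-ball stays inside one hypercube, hence one bucket.} Let $Q$ be the hypercube with $x \in Q$. Its (topological) boundary is contained in the closure of $A_{d-1}$, so $\dist_\infty(x, \RR^d \setminus Q) = \dist_\infty(x,\partial Q) \ge \dist_\infty(x, A_{d-1}) > \tau$; thus every $y$ with $\|x-y\|_\infty \le \tau$ also lies in $Q$. Combined with Step 1, all such $y$ — and $x$ itself — are assigned to $\widehat Q$, so $f(y) = f(x)$ for all of them. Now take any image $u$: if $u = f(x)$ then $x \in f^{-1}(u)$, so $x \notin U_\tau^\infty(f^{-1}(u))$; if $u \ne f(x)$, then no point of $f^{-1}(u)$ is within $\ell_\infty$-distance $\tau$ of $x$ (such a point would be a $y$ with $f(y) = u \ne f(x)$, contradicting the previous sentence), so $x \notin N_\tau^\infty(f^{-1}(u)) \supseteq U_\tau^\infty(f^{-1}(u))$. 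This proves the contrapositive and hence \Cref{claim:A_N}.

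\emph{Main obstacle.} The argument is short once set up; the only delicate point is the bookkeeping around the half-open grid cells — precisely, justifying $A_j \subseteq A_{d-1}$ for $j < d-1$, and that the closed boundary of each grid hypercube lies within $\ell_\infty$-distance $0$ of $A_{d-1}$, so that distances to a hypercube's boundary are controlled by the distance to the skeleton. Everything else is the triangle inequality together with the identity $B_{\le d-1} = \bigcup_{j} N_{\ell_j}^\infty(A_j)$ read off from the construction. This claim is the geometric heart of the third property of \Cref{lemma:hash}: $N_b^\infty(A_{d-1})$ is a union over the $d$ axis directions of slabs around the grid hyperplanes $\{x_i \in z\ZZ\}$, and its complement contains $L(z,2b)^d$, which yields $\bigl(\bigcup_{u} U_\tau^\infty(f^{-1}(u))\bigr) \cap L(z,2b)^d = \emptyset$.
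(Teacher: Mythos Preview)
Your proof is correct and takes a genuinely different route from the paper's. The paper argues \emph{forward}, splitting into two cases by bucket group: for $S\in\mathcal{W}_i$ with $i\le d-1$ it bounds $N_\tau^\infty(S)\subseteq N_{\ell_i+\tau}^\infty(A_i)\subseteq N_b^\infty(A_{d-1})$ directly, and for $i=d$ it invokes the separation property (\Cref{lemma:separation}) to show that $U_\tau^\infty(S)$ cannot meet any other $\mathcal{W}_d$-bucket, hence $U_\tau^\infty(S)\subseteq B_{\le d-1}\subseteq N_{d\beta}^\infty(A_{d-1})$. You instead prove the contrapositive pointwise: any $x$ with $\dist_\infty(x,A_{d-1})>b$ has its entire closed $\ell_\infty$-ball of radius $\tau$ disjoint from $B_{\le d-1}$ and contained in a single hypercube, so the whole ball lands in one $\mathcal{W}_d$-bucket and $x$ is in no annulus. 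Both arguments share the inclusion $B_{\le d-1}\subseteq N_{d\beta}^\infty(A_{d-1})$, but your version avoids the case split and, notably, does not need to cite the separation lemma at all --- the fact that the $\tau$-ball stays in one bucket is obtained directly from the construction. The paper's approach is perhaps more in line with how the hash is built (group by group), while yours is shorter and more self-contained; your flagged ``delicate point'' about half-open cells is real but harmless, since distances to $A_{d-1}$ and to $\partial Q$ coincide for $x\in Q$ once one passes to closures.
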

\begin{proof}
Recall that $\{f^{-1}(u): u\in f(\mathbb{R}^{d})\} = \bigcup_{i=0}^{d}\mathcal{W}_{i}$.
Therefore, it suffices to prove that for every $i\in\{0, \ldots, d\}$ and for every $S\in \mathcal{W}_{i}$, $U_{\tau}^{\infty}(S)\subseteq N_{b}^{\infty}(A_{d-1})$. 
We prove this separately for the case of $i\leq d-1$ and $i=d$. 

\paragraph{Case I: $i \leq d - 1$}
We first analyze the case that $0 \leq i \leq d - 1$. 
Fix some $0 \leq i \leq d-1$
and some $S := N_{\ell_i}^\infty(Q) \setminus B_{\leq i - 1} \in\mathcal{W}_{i}$ (for some $Q \in \mathcal{F}_i$).
Observe that $S = N_{\ell_i}^{\infty}(Q)\setminus B_{\leq i-1} \subseteq N_{\ell_{i}}^{\infty}(Q)\subseteq N_{\ell_{i}}^{\infty}(A_{i})$. 
Then we have that $N_{\tau}^{\infty}(S)\subseteq N_{\tau}^{\infty}(N_{\ell_{i}}^{\infty}(A_{i})) = N_{\ell_{i}+\tau}^{\infty}(A_{i})$. 
Observe that $\ell_{i} = (d-i)\beta \leq d\beta$.
Hence, $U_\tau^\infty(S) = N_\tau^\infty(S) \setminus S \subseteq 
N_\tau^\infty(S) \subseteq N_{\ell_i + \tau}^\infty(A_i) \subseteq N_{d\beta + \tau}^\infty(A_{d - 1}) = N_b^\infty(A_{d - 1})$.
This finishes the case of $i \leq d - 1$.

\paragraph{Case II: $i = d$}
Next, we analyze the case that $i=d$.  
By \Cref{lemma:separation}, for every $S \neq S' \in \mathcal{W}_d$,
$\dist_\infty(S, S') > \beta \geq \tau$,
and this implies
\begin{equation}
    \label{eqn:empty_intersection}
    N_\tau^\infty(S) \cap S' = \emptyset,
\end{equation}
as $\ell_\infty$ distance between any two points is at most their $\ell_2$ distance.
Now fix some $S \in \mathcal{W}_d$.
By the construction, $A_d = \mathbb{R}^d$ and $B_{\leq d - 1} \cap \bigcup_{S \in \mathcal{W}_d} S = \emptyset$.
By \eqref{eqn:empty_intersection}, $N_\tau^\infty(S)$ has no intersection with  $\mathcal{W}_d$ other than on $S$.
Hence,
\begin{equation}
    \label{eqn:U_subset_B}
    U_\tau^\infty(S) = N_\tau^\infty(S) \setminus S \subseteq B_{\leq d - 1}.
\end{equation}
To further relate $B_{\leq d - 1}$ with $A_{d - 1}$, we have
\begin{align*}
    B_{\leq d - 1}
    = \bigcup_{j  = 0}^{d - 1} B_j
    = \bigcup_{j = 0}^{d  - 1}N_{\ell_j}^\infty(A_j)
    \subseteq \bigcup_{j = 0}^{d- 1} N_{d\beta}^\infty(A_j)
    \subseteq N_{d\beta}^\infty(A_{d - 1}),
\end{align*}
where the last step follows from $A_j \subseteq A_{j + 1}$ for every $j$.
Combining this with \eqref{eqn:U_subset_B}, we conclude that $U_\tau^\infty(S) \subseteq N_{d\beta}^\infty(A_{d - 1}) \subseteq N_{b}^\infty(A_{d - 1})$.
This finishes the proof of \Cref{claim:A_N}.
\end{proof}

By \Cref{claim:A_N}, it remains to prove $N_b^\infty(A_{d - 1}) \cap L(z, 2b)^d = \emptyset$.
Since $N_b^\infty(A_{d - 1}) = N_b^\infty(\bigcup_{Q \in \mathcal{F}_{d - 1}} Q)
= \bigcup_{Q \in \mathcal{F}_{d - 1}} N_b^\infty(Q) = \bigcup_{x \in Q, Q \in \mathcal{F}_{d - 1}}N_b^\infty(x)$,
it suffices to show $\forall x \in Q$ (for $ Q \in \mathcal{F}_{d - 1})$,
$N_b^\infty(x) \cap L(z, 2b)^d = \emptyset$.
Now fix some $Q \in \mathcal{F}_{d - 1}$ and $x \in Q$.
Since $Q$ is a $(d - 1)$-dimensional face, it has $d - 1$ dimensions spanning an entire interval of length $z$, and has exactly one dimension taking a value of the form $az$ ($a \in \ZZ$).
Formally, $x \in Q$ if and only if there exists
$0 \leq j \leq d$ and $\{a_i \in \ZZ\}_{i = 0}^d$ such that for $0 \leq i \leq d$,
\begin{equation}
    \label{eqn:point_d_1_face}
    x_i \in \begin{cases}
        [a_i z, (a_i + 1) z) & i \neq j \\
        \{ a_i z \} & i = j
    \end{cases}.
\end{equation}
Let $j$ be that as in \eqref{eqn:point_d_1_face}. Then for every $y \in L(z, 2b)^d$, 
\begin{align*}
    \|x - y\|_\infty
    \geq |x_j - y_j|
    \geq 2b,
\end{align*}
where the last inequality follows from the definition of $L(z, 2b)$ and recall that $L(z, 2b) = \bigcup_{a \in \ZZ} [az + 2b, (a + 1)z - 2b)$.
This further implies $\forall x' \in N_b^\infty(x)$ and $y \in L(z, 2b)^d$,
$\|x' - y\|_\infty \geq b$,
by the triangle inequality of $\ell_\infty$.
Therefore, we conclude that $N_b^\infty(x) \cap L(z, 2b)^d = \emptyset$,
and this finishes the proof of \Cref{lemma:hash}.
\qed

     \section{Analysis of One-round Luby's Algorithm}
\label{sec:one_round_luby}

Here we show both an upper and a matching lower bound for one-round Luby's algorithm for graphs.
We give a proof sketch, in \Cref{sec:luby_graph_ub}, for an upper bound that the set $S$ returned by the one-round Luby's algorithm is $O(\log n)$-ruling set with high probability.
We give a tight input graph $G$, such that the $S$ returned by the one-round Luby's algorithm is $\Omega(\log n)$-ruling set with high probability, in \Cref{sec:luby_graph_lb}.

Here, for an undirected graph $G = (V, E)$, 
an $\alpha$-ruling set is a subset $S \subseteq V$ such that a) for any $x \neq y \in S$, $\{x, y\} \notin E$, and b)
any vertex $x \in V$ has a path within $\alpha$ hops to $S$.
The one-round Luby's algorithm picks a uniform random value $h(x) \in [0, 1]$ for every $x \in V$,
and include $x$ into $S$ if and only if $x$ has the smallest $h$ value among $x$ and its adjacent vertices, i.e., $\{x\} \cup \{ y \in V : \{x, y \} \in E \}$.
Return $S$ as the output.

\subsection{Upper Bound: A Proof Sketch}
\label{sec:luby_graph_ub}
We discuss how the proof in \Cref{sec:ruling_set} can be modified to prove the $O(\log n)$-ruling set bound in graphs.
Even though we still use the language of Euclidean spaces in the following,
the proof sketch actually does not use Euclidean property and works directly for graphs.

As we are talking about one-round Luby's algorithm, we do not need to do the preprocessing in \Cref{alg:rs_hash},
and the in \Cref{alg:local_kcenter} we do not need to use the approximate $A_P^\beta(\cdot, \cdot)$.

The proof of \Cref{lemma:C_independent_set} still shows the returned point set is $\tau$-independent.
\Cref{lemma:rs_Pprime_to_P,fact:Lambda_ub} are not needed, as we work on $P$ directly (without preprocessing).
\Cref{lemma:approximation_ratio} is still the main lemma,
and the statement can be simply changed to for $t = O(\log n)$,
$\Pr[\forall p \in P, \dist(p, R) \leq O(t) \tau ] \geq 1- 1 / \poly(n)$.
The proof plan of \Cref{lemma:approximation_ratio} still goes through,
and it reduces to \Cref{lemma:bound_T},
where the statement is changed to for $t = O(\log n)$,
$\Pr[T \geq t] \leq 1 / \poly(n)$.

The main modification is in the proof of \Cref{lemma:bound_T}.
\Cref{lemma:ext_cond} is still useful even in the graph setting,
and does not need to be changed.
However, we use a new definition for $f$, and instead of mapping to $\mathbb{Z}$,
we do $f : \mathcal{S} \to \{0, 1\}$.
Pick some parameter $\gamma = \Theta(1)$.
For $S' \in \mathcal{S}$, define 
\begin{align*}
    f(S') := \begin{cases}
        1 & \frac{|\Anew(S')|}{|\Anew(S') \cup \Acond(S')|} > \gamma \\
        0 & \text{otherwise}
    \end{cases}
\end{align*}
Intuitively, $f(S')$ is $1$ if the upper bound $\frac{|\Anew(S')|}{|\Anew(S') \cup \Acond(S')|}$ of the extension probability (\Cref{lemma:ext_cond}) is too large.
Ideally, if for a sequence all extensions have $f$ value $0$, then its length is greater than $t$ with probability $1 / \poly(n)$.
The definition of \emph{configuration} remains the same, i.e., for $S' \in \mathcal{S}$,
i.e., $\conf(S'):= (f(S'_{[0,1]}), \ldots, f(S'_{[0, m]}))$.

Next, we need a new lemma that shows for every $\pi \in \{0, 1\}^t$ and $S' \in \mathcal{S}_\pi$,
$\|\conf(S') \|_1 \leq t / 2$ (which requires the constant in the big-O of $t$ to be picked carefully).
This lemma holds because once $f$ value is $1$, the $|\Anew|$ (which is the new part introduced by the extension) is at least constant fraction of the neighborhood of all previous elements in the sequence.
Therefore, this cannot happen more than $t / 2 = \Omega(\log n)$ times since otherwise $|\Anew(S'_{[0, t / 2]})| > n$.

Finally, in the modified version of \Cref{lemma:key},
we use the abovementioned $\|\conf(S')\|_1 \leq t / 2$ bound,
and conclude that there are still at least $t / 2 = \Omega(\log n)$ number of extensions
that happen with probability at most  $\frac{|\Anew(S')|}{|\Anew(S') \cup \Acond(S')|} \leq \gamma$,
and hence $\sum_{S' \in \mathcal{S}_\pi}\Pr[S' \sqsubseteq S] \leq \gamma^{t / 2}$.

Finally, since the number of configurations is at most $2^t$,
taking a union bound, $2^t \cdot \gamma^{t / 2} \leq 1 / \poly(n)$ by setting small enough $\gamma$.

\subsection{Lower Bound}
\label{sec:luby_graph_lb}
Let $n$ be some parameter.
Consider the following graph $G = (V, E)$.
Let $m := \frac{\ln n}{\gamma}$ where $\gamma$ is a sufficiently large constant.
The vertex set $V$ consists of $m$ parts $V_1, \ldots, V_m$, i.e., $V := \bigcup_i V_i$,
such that $|V_i| = 2^i$.
We add for every $i \in [m]$, a clique in each $V_i$, i.e., $V_i \times V_i$,
and add for every $i \in [m - 1]$
a complete bipartite graph for between $V_i$ and $V_{i + 1}$, i.e., $V_i \times V_{i + 1}$.

Now, consider the one-round Luby's algorithm on $G$,
and we analyze the random values $h$.
For $i \in [m - 1]$, let $\mathcal{E}_i$ be the event that there exists $u \in V_{i + 1}$ such that
$h(u) < h(v)$ for all $v \in V_{i}$.

\begin{claim*}
    $\Pr[\bigwedge_{i \in [m-1]} \mathcal{E}_i] \geq 1 / \sqrt{n}$.
\end{claim*}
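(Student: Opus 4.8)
## Proof Plan for the Claim

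The plan is to bound the probability of the intersection $\bigwedge_{i \in [m-1]} \mathcal{E}_i$ by analyzing the events $\mathcal{E}_i$ in a way that exploits their near-independence. First, I would compute the probability of a single event $\mathcal{E}_i$. The event $\mathcal{E}_i$ says that the minimum of the $h$-values over $V_{i+1}$ (which has $|V_{i+1}| = 2^{i+1}$ vertices) is smaller than the minimum over $V_i$ (which has $|V_i| = 2^i$ vertices). Since all $h$-values are i.i.d.\ uniform on $[0,1]$, and considering only the $|V_i| + |V_{i+1}| = 3 \cdot 2^i$ relevant vertices, the probability that the overall minimum among these falls in $V_{i+1}$ is exactly $\frac{2^{i+1}}{2^i + 2^{i+1}} = \frac{2}{3}$. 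So $\Pr[\mathcal{E}_i] = 2/3$ for every $i$.

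Next, I would handle the dependence between the events. The events $\mathcal{E}_i$ are \emph{not} independent because consecutive events $\mathcal{E}_{i-1}, \mathcal{E}_i, \mathcal{E}_{i+1}$ all reference the values in a common part ($\mathcal{E}_{i-1}$ and $\mathcal{E}_i$ both involve $V_i$). The cleanest route is to reveal the $h$-values part by part and argue that, conditioned on the minimum value $M_i := \min_{v \in V_i} h(v)$ already being revealed, the event $\mathcal{E}_i$ (i.e.\ $M_{i+1} < M_i$) has conditional probability that can be lower-bounded. Concretely, I would process $i = 1, 2, \ldots, m-1$ in order; having conditioned on $M_i = t$ for some $t \in [0,1]$, the probability that $\min_{v \in V_{i+1}} h(v) < t$ is $1 - (1-t)^{2^{i+1}}$. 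The subtlety is that conditioning on $\mathcal{E}_1, \ldots, \mathcal{E}_{i-1}$ tends to make $M_i$ \emph{smaller} (it is the smaller of two minima at each stage), which works against us. So I would instead track the distribution of $M_i$ conditioned on the prefix of events, show it stochastically dominates some explicit distribution, and use that to lower-bound $\Pr[\mathcal{E}_i \mid \mathcal{E}_1 \wedge \cdots \wedge \mathcal{E}_{i-1}]$.

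A slicker alternative that avoids tracking distributions: observe that the \emph{global} minimum of $h$ over all of $V$ lies in some unique part $V_{j^*}$, and the event $\bigwedge_i \mathcal{E}_i$ is implied by the single event that $V_m$ contains the global minimum value among $\bigcup_{i=1}^m V_i$ --- no, that's too strong and has probability only about $2^m / 2^{m+1} \approx 1/2$ \emph{of a weaker statement}; one must be careful. The honest approach is: $\Pr[\bigwedge_i \mathcal{E}_i]$ equals the probability that, reading the sequence of part-minima $M_1, M_2, \ldots, M_m$, each is strictly larger than the next. Equivalently this is the probability that the ``record minimum'' is always achieved by moving to the next part. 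I would compute this by integrating: $\Pr[\bigwedge_i \mathcal{E}_i] = \int_0^1 \cdots$ a nested integral over the thresholds, or more simply bound it below by restricting to the event that $M_1 \leq \delta$ for a suitable $\delta = \Theta(2^{-m})$ and then each subsequent minimum (over exponentially many more points) also lands below the previous with probability bounded away from $0$. Since $2^{-m} = n^{-1/\gamma}$, choosing $\gamma$ large makes $\delta$ only polynomially small, and the product of the $m = O(\log n)$ conditional probabilities, each a constant, gives a bound of the form $n^{-O(1/\gamma)} \geq n^{-1/2}$ for $\gamma$ large enough.

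The main obstacle will be making the dependence argument rigorous: specifically, showing that conditioning on $\mathcal{E}_1, \ldots, \mathcal{E}_{i-1}$ does not make $M_i$ so small that $\Pr[\mathcal{E}_i \mid \text{prefix}]$ collapses. The key insight to push through is that even in the worst case $M_i$ is of order $2^{-i}$ (it is a minimum of $2^i$ uniforms, and conditioning on the prefix only shifts it by constant factors in the relevant regime), so $\Pr[M_{i+1} < M_i \mid M_i] = 1 - (1 - M_i)^{2^{i+1}} \approx 1 - e^{-2^{i+1} M_i}$ stays bounded below by a positive constant; multiplying $m-1 = O(\log n)$ such constants yields $2^{-O(m)} = n^{-O(1/\gamma)}$, which is at least $1/\sqrt{n}$ once $\gamma$ is chosen as a sufficiently large constant. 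I would also need the elementary fact that the minimum of $N$ i.i.d.\ uniforms on $[0,1]$ exceeds $c/N$ with probability $\geq e^{-c}$ and is below $C/N$ with probability $\geq 1 - e^{-C}$, which controls both tails of each $M_i$ and lets the nested-integral estimate go through cleanly.
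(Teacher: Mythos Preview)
Your plan is sound and in fact more careful than the paper's own argument. Both you and the paper use the chain rule $\Pr[\bigwedge_i \mathcal{E}_i] = \prod_i \Pr[\mathcal{E}_i \mid \mathcal{E}_1,\ldots,\mathcal{E}_{i-1}]$ and lower-bound each conditional factor by a positive constant, so that the product is $c^{m} = n^{-O(1/\gamma)} \geq n^{-1/2}$ for $\gamma$ large. The paper, however, asserts $\Pr[\mathcal{E}_i \mid \mathcal{E}_1,\ldots,\mathcal{E}_{i-1}] = \Pr[\mathcal{E}_i]$ on the grounds that the $h$-values on $V_{i+1}$ are fresh; as you correctly observe this is not quite right, since $\mathcal{E}_i$ also depends on $V_i$ and the events are genuinely dependent (already $\Pr[\mathcal{E}_1 \wedge \mathcal{E}_2] = 8/21 \neq 4/9 = \Pr[\mathcal{E}_1]\Pr[\mathcal{E}_2]$). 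Your proposal to track the conditional law of $M_i$ is the honest fix.

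That said, your plan can be streamlined considerably; the nested integrals and tail estimates are unnecessary. The conditioning event $\mathcal{E}_1 \wedge \cdots \wedge \mathcal{E}_{i-1}$, i.e.\ $M_i < M_{i-1} < \cdots < M_1$, depends only on the \emph{ranks} of the $h$-values on $V_1 \cup \cdots \cup V_i$, and under it $M_i$ equals the overall minimum of $h$ on $V_1 \cup \cdots \cup V_i$. By exchangeability the value of this minimum is independent of the rank event, so conditionally $M_i$ is distributed exactly as the minimum of $|V_1|+\cdots+|V_i| = 2^{i+1}-2$ i.i.d.\ uniforms. Since $M_{i+1}$ is the minimum of $2^{i+1}$ fresh independent uniforms, one gets the closed form
\[
\Pr\bigl[\mathcal{E}_i \,\bigm|\, \mathcal{E}_1,\ldots,\mathcal{E}_{i-1}\bigr]
\;=\; \frac{2^{i+1}}{(2^{i+1}-2)+2^{i+1}}
\;=\; \frac{2^{i}}{2^{i+1}-1}
\;\geq\; \tfrac12,
\]
and hence $\Pr[\bigwedge_i \mathcal{E}_i] \geq (1/2)^{m-1} = n^{-O(1/\gamma)} \geq 1/\sqrt{n}$ once $\gamma \geq 2\ln 2$.
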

\begin{proof}
    \begin{align*}
        \Pr[\bigwedge_{i \in [m-1]} \mathcal{E}_i] 
        &= \prod_{i \in [m-1]} \Pr[\mathcal{E}_i \mid \bigwedge_{j \in [i - 1]} \mathcal{E}_j] \\
        &= \prod_{i \in [m-1]} \Pr[\mathcal{E}_i] \\
        &= \prod_{i \in [m-1]} \left(1 - \left(1 - \frac{1} {2^i + 1}\right)^{2^{i + 1}}\right) \\
        &\geq (1 - 1 / e^2)^m \\
        &\geq 1 / \sqrt n.
    \end{align*}
    To see the second equality,
consider $\Pr[\mathcal{E}_{i} \mid \bigwedge_{j\in [i-1]} \mathcal{E}_{j}]$ for some $i\in [m - 1]$.
    Then the values $h$ for the vertices in $V_{i + 1}$ and $\bigcup_{j\in [i]} V_{j}$ are independent, and that the event $\bigwedge_{j\in [i-1]} \mathcal{E}_{j}$ only depends on the randomness of $h$ for the vertices in $\bigcup_{j\in [i]} V_{j}$. 
    The third equality follows from the fact that the $h$ values for vertices in $V_{i+1}$ are independent to each other,
    and that for a fixed vertex $u \in V_{i + 1}$ the probability that $h(u) < h(v)$ for every $v \in V_i$
    is $1 / (2^i + 1)$.
\end{proof}
Observe that the event $\bigwedge_{i \in [m-1]} \mathcal{E}_{i}$ implies that one-round Luby's returns $\Omega(\log n)$-ruling set.

Hence, we define a new graph $G'$ by duplicating $G$ for $n^{0.6}$ times (where there are no edges between the copies).
This graph $G'$ has $\poly(n)$ vertices.
Moreover, we know that with probability at least $1 - (1 - 1 / \sqrt{n})^{n^{0.6}} > 0.5$,
one-round Luby's algorithm on $G'$ returns an $\Omega(\log n)$-ruling set.
This finishes the proof.

\end{document}